\documentclass{tlp}

%
%
\usepackage{graphicx}
\usepackage{ulem}
\usepackage{color}
\usepackage{latexsym}
\usepackage{amsmath}
\usepackage{amsfonts}

\normalem

\long\def\comment#1{}

\newcommand{\la}{\langle}
\newcommand{\ra}{\rangle}
\newcommand{\rrarrow}{\longrightarrow}

\newtheorem{definition}{Definition} 
\newtheorem{proposition}{Proposition} 
\newtheorem{example}{Example} 
\newtheorem{corollary}{Corollary} 
\newtheorem{theorem}{Theorem} 
\newtheorem{lemma}{Lemma}

\title[Unfolding for CHR programs]
      {Unfolding for CHR programs}

\author[M. Gabbrielli, M.C. Meo, P. Tacchella and H. Wiklicky]{
	Maurizio Gabbrielli\\
		Department of Computer Science/Focus \\
		Universit\`a di Bologna/INRIA\\
		Mura Anteo Zamboni 7,\\
		40127 Bologna, Italy\\
		\email{gabbri@cs.unibo.it}
\and
        Maria Chiara Meo\\
		Dipartimento di Economia\\
		Universit\`a di Chieti-Pescara\\
		Viale Pindaro 42,\\
		65127 Pescara, Italy\\
		\email{cmeo@unich.it}
\and
        Paolo Tacchella\\
		Universit\`a di Bologna\\
		Mura Anteo Zamboni 7,\\
		40127 Bologna, Italy\\
		\email{Paolo.Tacchella@cs.unibo.it}
		\and
        	Herbert Wiklicky\\
		Imperial College London\\
		180 Queen's Gate\\
		London SW7 2BZ, UK\\
		\email{herbert@doc.ic.ac.uk}
}


\submitted{25 October 2010}
\revised{12 September 2012}
\accepted{28 June 2013}

\label{firstpage}

\begin{document}

\bibliographystyle{acmtrans}

\maketitle

\begin{abstract}
Program transformation is an appealing technique which allows to
improve  run-time efficiency, space-consumption, and more generally
to optimize a given program. Essentially, it consists of a sequence
of syntactic program manipulations which preserves some kind of
semantic equivalence. Unfolding is one of the basic operations which is used
by most program transformation systems and which
consists in the replacement of a procedure call by its definition.
While there is a large body of literature on transformation and
unfolding of sequential programs, very few papers have addressed
this issue for concurrent languages.

This paper defines an unfolding system
for CHR programs. We define an unfolding rule, show its
correctness and discuss some conditions which can be used to
delete an unfolded rule while preserving the program meaning.
We also prove that, under some suitable conditions, confluence and termination are preserved by the above transformation.

{\em To appear
in Theory and Practice of Logic Programming (TPLP).}

\end{abstract}

\begin{keywords}
CHR (Constraint Handling Rules), Program Transformation, Unfolding, Confluence, Termination.
\end{keywords}
\section{Introduction}

Constraint  Handling Rules (CHR) \cite{Fru98,FA03,Fru08} is a
concurrent, committed-choice language which was initially designed for writing
constraint solvers and which is nowadays a general purpose language. A CHR program is a (finite) set of guarded rules, which
allow to transform multisets of atomic formulas (constraints) into simpler ones.

There exists a very  large body of literature on
CHR, ranging from theoretical aspects to implementations and
applications. However,
only few papers, notably  \cite{FH03,Fru04,SSD05b,TMG07,Tac08,SS09}, consider  source to source transformation of CHR programs. This is not surprising,
since program transformation is in general very difficult for
(logic) concurrent languages and in case of CHR it is even more
complicated, as we discuss later.  Nevertheless, the study of this technique
for concurrent languages and for CHR in particular, is important
as it could lead to significant improvements in the run-time efficiency and space-consumption of programs.

Essentially, a source to source transformation consists of a sequence
of syntactic program manipulations which preserves some kind of semantics. A basic manipulation is {\em unfolding}, which consists in the replacement
of a procedure call by its definition. While this operation can be
performed rather easily for sequential languages, and indeed  in
the field of logic programming it was first investigated by Tamaki
and Sato more than twenty years ago \cite{TS84}, when considering
logic concurrent languages it becomes quite difficult to define
reasonable conditions which ensure its correctness.
\comment{In the case of CHR this is mainly
due to three problems. The first one is the presence of guards in the rules.
Intuitively, when
unfolding a rule $r$ by using a rule $v$ (i.e. when replacing in
the body of $r$ a ``call'' of a procedure by its definition $v$)
it could happen that some guard in $v$ is not satisfied
``statically'' (i.e. when performing the unfold), even though it
could become satisfied later when the unfolded rule is actually
used. If we move the guard of $v$ in the unfolded version of $r$
we can then loose some computations (because the guard is
anticipated). This means that if we want to preserve the meaning
of a program in general we cannot replace the rule $r$ by its unfolded
version. Suitable conditions can be defined in order to allow such a replacement, as we do later.
The second source of difficulties consists in the pattern matching
mechanism which is used by the CHR computation. According to this mechanism, when rewriting a goal $G$ by a rule $r$ only the variables in the head of $r$ can be instantiated (to become equal to the terms in $G$).  Hence, it could happen that statically the body of a rule $r$ is not instantiated enough to perform the pattern matching involved in the unfolding,  while it could become instantiated later in the computations. Also in this case replacing $r$ by its unfolded version in general is not correct. Finally, we have the problem of the multiple heads.
In fact, let $B$ be the body of a rule $r$ and let $H$ be the (multiple)
head of a rule $v$, which can be used to unfold $r$: we cannot be sure
that at run-time all the atoms in $H$ will be used to rewrite $B$,
since in general $B$ could be in a conjunction with other atoms even
though the guards are satisfied. This technical point, that one can legitimately find
obscure now, will be further clarified in Section~\ref{sec:safty-rule-deletion}.}

In this paper, we first define an unfolding rule for CHR programs
and show that it preserves the semantics of the program in terms
of qualified answers \comment{, a notion already defined in the literature}
\cite{Fru98}.
Next, we provide a syntactic condition which allows one to replace in a
program a rule by its unfolded version while preserving
qualified answers.  This condition preserves also termination, provided that one considers normal derivations.
We also show that a more restricted condition ensures
that confluence is preserved.
Finally, we give a weaker condition for replacing a rule by its unfolded version:
This condition allows to preserve the qualified answers for a specific class of programs (those which are normally terminating and confluent).

Even though the idea of the unfolding is
straightforward, its technical development is complicated by the
presence of guards, multiple heads, and matching substitution,
as previously mentioned. In
particular, it is not obvious to identify conditions which
allow to replace the original rule by its unfolded version.
Moreover, a further reason of complication comes from the fact
that we consider as reference semantics the one
defined in \cite{DSGH04} and called $\omega_t$, which avoids trivial non-termination by
using a  ``token store'' (or history). The token store
idea was originally introduced by \cite{Abd97} but the shape of these
tokens is different from that of those used  in \cite{DSGH04}. Due
to the presence of this token store,  in order to define correctly
the unfolding we have to slightly modify the syntax of CHR programs
by adding to each rule  a local token store. The resulting programs
are called annotated and we define their semantics by providing a
(slightly) modified version of the semantics $\omega_t$, which is
proven to preserve the qualified answers.

The remainder of this paper is organized as follows. Section~\ref{sec:notation}
introduces the CHR syntax while the operational semantics $\omega_t$
\cite{DSGH04} and the modified one $\omega'_t$  are given in
Section~\ref{sec:semantics}. Section~\ref{sec:unfolding} defines
the unfolding rule (without replacement) and proves its correctness.
Section~\ref{sec:safty-rule-deletion}  discuss the problems
related to the replacement of a rule by its unfolded version and provides a correctness condition for such a replacement.
In this section, we also prove that (normal) termination and
confluence are preserved by the replacement which satisfies this condition. A further, weaker, condition ensuring the correctness of replacement for (normally) terminating and confluent programs is given in Section~\ref{sec:normally -rep}. Finally, Section~\ref{sec:conclusion_and_future} concludes
by discussing some related works. Some of the proofs are deferred to the Appendix in order to improve the readability of the paper.

A preliminary version of this paper appeared in \cite{TMG07}, some results were contained in the thesis  \cite{Tac08}.

\section{Preliminaries}\label{sec:notation}

In this section, we introduce the syntax of CHR and some notations
and definitions we will need in the paper. For our purpose, a {\em constraint} is simply defined as an atom
$p(t_1, \ldots, t_n)$, where $p$ is some predicate symbol of arity $n \geq 0$ and
$(t_1, \ldots, t_n)$ is an $n$-tuple of terms.
A {\em term} is (inductively) defined as a variable $X$, or as $f(t_1, \ldots, t_n)$, where $f$ is a function symbol of arity $n \geq 0$ and
$t_1, \ldots, t_n$ are terms. ${\cal T}$ is the set of all terms.

We use the following notation: let $A$ be any syntactic object and let $V$ be a set of variables.
$\exists_{V} A$ denotes the
existential closure of $A$ w.r.t. the variables in $V$, while $\exists_{-V} A$ denotes the existential closure of $A$ with the exception of the variables in $V$ which remain unquantified. $Fv(A)$ denotes the free variables appearing in $A$.

We use ``,'' rather than $\wedge$ to denote conjunction
and we will often consider a conjunction of atomic constraints as
a multiset of atomic constraints.
We use $\mathrel+\joinrel\mathrel+$ for sequence concatenation, $\epsilon$ for empty sequence, $\setminus$ for set difference operator and $\uplus$ for multiset union. We shall sometimes treat multisets as sequences (or vice versa), in which case we nondeterministically choose an order for the objects in the multiset. We use the notation
$p(s_1, \ldots, s_n) = p(t_1, \ldots, t_n)$ as a shorthand for the (conjunction of) constraints
$\,s_1=t_1,  \ldots, s_n=t_n$. Similarly if $S\equiv s_1, \ldots, s_n$ and $T\equiv t_1, \ldots, t_n$ are sequences of equal length then $S=T$ is a shorthand for $\,s_1=t_1, \ldots, s_n=t_n$.

A substitution is a mapping $\vartheta: V
\rightarrow {\cal T}$ such that the set $dom(\vartheta) = \{X
\mid \vartheta(X) \neq X\}$ (domain of $\vartheta$) is finite;
$\varepsilon$ is the empty substitution: $dom(\varepsilon)
=\emptyset$.
\comment{If $\vartheta$ is a substitution and $E$ is a syntactic expression we denote by
$\vartheta_{\mid E}$ the restriction of $\vartheta$
to the variables in $Fv(E)$.}

The composition $\vartheta\sigma$ of the substitutions $\vartheta$
and $\sigma$ is defined as the functional composition. A substitution
$\vartheta$ is idempotent if
$\vartheta \vartheta = \vartheta$.
A renaming is a (nonidempotent) substitution $\rho$ for which there exists the inverse $\rho^{-1}$
such that $\rho\rho^{-1}$ = $\rho^{-1}\rho$ =$\varepsilon$.
\comment{The preordering $\preceq$ (more general than) on substitutions is
such that $\vartheta\preceq\sigma$ iff there exists a $\gamma$
such that $\vartheta\gamma$ = $\sigma$.
The result of the
application of the substitution $\vartheta$ to an expression $E$ is an
{\em instance} of $E$ denoted by $E\vartheta$. We define $E\preceq
E'$ ($E$ is more general than $E'$) iff there exists a $\vartheta$
such that $E\vartheta$ = $E'$.}
\comment{A substitution $\vartheta$ is a
grounding for $t$ if $t\vartheta$ is ground and $Ground(t)$ denotes the set of ground instances of $t$.}

We restrict our attention to idempotent substitutions, unless explicitly stated otherwise.

Constraints can be divided into either {\em user-defined} (or CHR) constraints or {\em built-in} constraints on some constraint domain ${\cal D}$. The built-in constraints are handled by an existing solver and we assume
given a (first order) theory $\mathcal{CT} $ which describes their meaning.
We assume also that the built-in constraints contain the predicate $=$ which is described,
as usual, by the Clark Equality Theory \cite{Llo84} and the values
\texttt{true} and \texttt{false} with their obvious meaning.

We use $c,d$ to denote built-in constraints, $h,k,f,s, p, q$ to denote CHR
constraints, and $a,b, g$ to denote both built-in and user-defined
constraints (we will call these generically constraints). The capital
versions will be used to denote multisets
(or sequences) of constraints.

\subsection{CHR syntax}\label{sec:syntax}
As shown by the following definition \cite{Fru98}, a  \emph{CHR program}
consists of a set of rules which can be divided into three types:
\emph{simplification}, \emph{propagation}, and \emph{simpagation}
rules. The first kind of rules is used to rewrite
CHR constraints into simpler ones, while second kind allows to add new redundant
constraints which may cause further simplification. Simpagation rules allow to represent both simplification and propagation rules.

\begin{definition}[{\sc CHR Syntax}]
A CHR program is a finite set of CHR rules.
There are three kinds of CHR rules:

\noindent{A \textbf{simplification} rule has the form: $${\it r}@ H
\Leftrightarrow C  \,|\, B$$}\\
\noindent{A \textbf{propagation} rule has the form:
$${\it r}@ H \Rightarrow C  \,|\,  B$$}\\
\noindent{A \textbf{simpagation} rule has the form:
$${\it r}@ H_1
\setminus H_2 \Leftrightarrow C  \,|\,  B,$$}
\noindent where ${\it r }$ is a unique identifier of a rule, $H$,
$H_1$ and $H_2$ are sequences of user-defined constraints, with $H$  and
$H_1 \mathrel+\joinrel\mathrel+ H_2$ different from the empty sequence, $C$ is a possibly empty conjunction of built-in constraints,
and $B$ is a possibly empty sequence of (built-in and
user-defined) constraints. $H$ (or $H_1 \setminus H_2 $) is called
\emph{head}, $C$ is called \emph{guard} and $B$ is called \emph{body} of the rule.
\end{definition}

A  \emph{simpagation} rule can simulate both simplification and
propagation rule by considering, respectively, either $H_1$ or
$H_2$ empty. In the following, we
will consider in the formal treatment only simpagation rules.

\subsection{CHR Annotated syntax}\label{sec:annsyntax}

When considering unfolding we need to consider a slightly
different syntax,  where rule identifiers are not necessarily
unique, each atom in the body is associated with an identifier,
that is unique in the rule, and
where each rule is associated  with a local token store $T$.
More precisely, we define an identified CHR constraint  (or
identified atom) $h\#i$ as a CHR constraint $h$ associated with
an integer $i$ which allows to distinguish different copies of the
same constraint.

Moreover, let us define a token as an object of the form $r@i_1, \ldots, i_l$, where $r$
is the name of a rule and $i_1, \ldots, i_l$ is a sequence of distinct identifiers.
\comment{(namely, the sequence of identifiers associated with the constraints to which the
head of the rule $r$ is applied).}
A token store (or history) is a set of tokens.

\begin{definition}[{\sc CHR Annotated syntax}]
An {\em annotated} rule has then the form:
$${\it r}@ H_1
\setminus H_2 \Leftrightarrow C  \,|\,B;T$$ where ${\it r }$
is an identifier, $H_1$ and $H_2$ are
sequences of user-defined constraints with
$H_1 \mathrel+\joinrel\mathrel+ H_2$ different from the empty sequence, $C$ is a possibly empty conjunction of built-in constraints,
$B$ is a possibly empty sequence of built-in and identified CHR constraints
such that different (occurrences of) CHR constraints have different identifiers,
and $T$ is a token store.
$H_1 \setminus H_2 $ is called
\emph{head}, $C$ is called \emph{guard}, $B$ is called \emph{body} and
$T$ is called \emph{local token store} of the annotated rule.
An annotated CHR program is a finite set of annotated CHR rules.

\end{definition}

We will also use the following two functions: {\em chr(h$\#$i)=$_{def}$ h} and the overloaded
function {\em id(h$\#$i)=$_{def}$ i}, (and $id(r@i_1,\ldots,i_l)=_{def}\{i_1, \ldots, i_l\}$),
extended to sets and sequences of identified CHR constraints (or tokens) in
the obvious way. An (identified) CHR {\em goal} is a multi-set of
both (identified) user-defined and built-in constraints.
$Goals$ is the set of all (possibly
identified) goals.

Intuitively, identifiers are used to distinguish different
occurrences of the same atom in a rule or in a goal. The identified atoms can
be obtained by using a suitable function which associates a
(unique) integer to each atom. More precisely, let $B$ be a goal
which contains $m$ CHR-constraints. We assume that the function
$I(B)$ identifies each CHR constraint in $B$ by
associating to it a unique integer in $[1,m]$ according to the
lexicographic order.

The token store allows one to memorize  some tokens,
where each token describes which propagation rule has been used
for reducing which identified atoms.
As we discuss
in the next section, the use of this information was originally proposed
in \cite{Abd97} and then further elaborated in the semantics defined
in \cite{DSGH04} in order to avoid trivial  non-termination
arising from the repeated application of the same propagation rule
to the same constraints. Here, we simply incorporate this
information in the syntax, since we will need to manipulate it in
our unfolding rule.

Given a CHR program $P$, by using the function $I(B)$ and
an initially empty local token store we can construct its annotated
version as the next definition explains.

\begin{definition}
Let $P$ be a CHR program. Then its annotated version is defined as follows:
\[\begin{array}{lll}
  Ann(P)=\{&{\it r}@ H_1 \setminus H_2 \Leftrightarrow C  \,|\,  I(B);\emptyset \hbox{ such that }& \\
& {\it r}@ H_1 \setminus H_2 \Leftrightarrow C
 \,|\,
B \in P &\}.
\end{array}
\]
\end{definition}

\noindent {\bf Notation} \\
In the following examples, given a
(possibly annotated) rule
$${\it r}@ H_1 \setminus
H_2 \Leftrightarrow C  \,|\,  B(;T),$$   we write it as
$${\it r}@ H_2 \Leftrightarrow C  \,|\,  B(;T),$$ if $H_1$ is empty and
we write it as
$${\it r}@ H_1 \Rightarrow C  \,|\,  B(;T),$$ if $H_2$ is empty.
That is, we maintain also the notation previously introduced for
simplification and propagation rules. Moreover, if $C=
\texttt{true}$, then $\texttt{true}  \,|\, $ is omitted
and if in an annotated rule the token store is empty we
simply  omit it.
Sometimes, in order to simplify the notation, if in an annotated program $P$ there are no annotated propagation rules, then we write $P$ by using the standard syntax.\\

Finally, we will use $cl, \
cl', \ldots $ to denote (possibly annotated) rules
and
$cl_r, \ cl'_r, \ldots $ to denote (possibly annotated) rules with identifier $r$.

\begin{example}\label{chiara}
The following CHR program, given a forest of finite trees
(defined in terms of the predicates $\text{root}$ and $\text{edge}$, with the obvious meaning), is
able to recognize if two nodes belong to the same tree and if so returns the root.

The program $P$ consists of the following five rules
$$
\begin{array}{l}
r_1@\text{root} (V), \text{same} (X, Y ) \Rightarrow X=Y,  X=V \mid \text{success}(V)\\
r_2@\text{root} (V), \text{same} (X, Y ) \Leftrightarrow  X\neq Y \mid \text{root} (V), \text{same} (V,X), \text{path}(V,Y)\\
r_3@ \text{path}(I,J) \Rightarrow I=J \mid \texttt{true}\\
r_4@ \text{edge} (U,Z) \setminus \text{path}(I,J) \Leftrightarrow  J=Z \mid \text{path}(I,U)\\
r_5@\text{root} (V) \setminus \text{path}(I,J) \Leftrightarrow  V=J, V\neq I \mid \texttt{false}
\end{array}
$$

Then its annotated version $Ann(P)$ is defined as follows:
$$
\begin{array}{l}
r_1@\text{root} (V), \text{same} (X, Y ) \Rightarrow X=Y,  X=V \mid \text{success}(V)\#1;\emptyset\\
r_2@\text{root} (V), \text{same} (X, Y ) \Leftrightarrow X\neq Y \mid \text{root} (V)\#1, \text{same} (V,X)\#2, \text{path}(V,Y)\#3;\emptyset\\
r_3@ \text{path}(I,J) \Rightarrow I=J \mid \texttt{true};\emptyset\\
r_4@ \text{edge} (U,Z) \setminus \text{path}(I,J) \Leftrightarrow  J=Z \mid \text{path}(I,U)\#1;\emptyset\\
r_5@\text{root} (V) \setminus \text{path}(I,J) \Leftrightarrow  V=J, V\neq I \mid \texttt{false};\emptyset
\end{array}
$$

\comment{$$
\begin{array}{l}
r_1@\text{isin} (X, Y ) \Rightarrow X=Y \mid \text{success}\#1;\emptyset\\
r_2@\text{isin} (X, Y ) \Leftrightarrow  X \neq Y \mid \text{path}(V,X)\#1,\text{isin} (V,Y)\#2;\emptyset\\
r_3@ \text{edge} (V,X) , \text{path}(Y,Z) \Rightarrow X=Z \mid V=Y;\emptyset\\
r_4@ \text{edge} (V,X) , \text{path}(X,Z) \Leftrightarrow  \text{edge} (V,X)\#1,\text{path}(V,Z)\#2;\emptyset\\
r_5@\text{path} (V,Z ), \text{isin}(X,Y)\Leftrightarrow  X=Z \mid \text{isin} (V,Y)\#1;\emptyset
\end{array}
$$
$$
\begin{array}{l}
r_1@\text{isin} (X, Y ) \Rightarrow X=Y \mid \text{success}\\
r_2@\text{path}(V,Z) \setminus \text{isin} (X, Y ) \Leftrightarrow  X \neq Y, Z=X \mid
\text{isin} (V,Y)\\
r_3@\text{path}(V,Z) , \text{isin} (X, Y ) \Leftrightarrow  X \neq Y, Y=V \mid \text{path}(V,Z),
\text{isin} (X,Z)\\
r_4@ \text{edge} (V,X)  \Rightarrow  \text{path}(V,X)\\
r_5@ \text{edge} (V,X) , \text{path}(X,Z) \Leftrightarrow  \text{edge} (V,X),\text{path}(V,Z)
\end{array}
$$
Then its annotated version $Ann(P)$ is defined as follows:
$$
\begin{array}{l}
r_1@\text{isin} (X, Y ) \Rightarrow X=Y \mid \text{success}\#1;\emptyset\\
r_2@\text{path}(V,Z) \setminus \text{isin} (X, Y ) \Leftrightarrow  X \neq Y, Z=X \mid
\text{isin} (V,Y)\#1;\emptyset\\
r_3@\text{path}(V,Z) , \text{isin} (X, Y ) \Leftrightarrow  X \neq Y, Y=V \mid \text{path}(V,Z)\#1,
\text{isin} (X,Z)\#2;\emptyset\\
r_4@ \text{edge} (V,X)  \Rightarrow  \text{path}(V,X)\#1;\emptyset\\
r_5@ \text{edge} (V,X) , \text{path}(X,Z) \Leftrightarrow  \text{edge} (V,X)\#1,\text{path}(V,Z)\#2;\emptyset\\
\end{array}
$$}
\end{example}

\section{CHR operational semantics}\label{sec:semantics}

This section first introduces the reference semantics $\omega_t$ \cite{DSGH04}. For the sake of simplicity, we omit indexing the relation with the name of the program.

Next, we define a slightly different operational  semantics,
called $\omega_t'$, which considers annotated programs and which
will be used to prove the correctness of our unfolding rules (via some form of
equivalence between  $\omega_t'$ and $\omega_t$).

In the following, given a (possibly annotated) rule
$cl_r={\it r}@ H_1 \setminus
H_2 \Leftrightarrow C  \,|\,  B(;T),$
we denote by $\exists_{cl_r}$ the existential quantification
$\exists _{Fv(H_1,H_2,C,B)}$. By an abuse of notation, when it is clear from the context, we will write
$\exists_{r}$ instead of $\exists_{cl_r}$.

\subsection{The semantics $\omega_t$}

\begin{table*}[tbp]
\caption{The transition system $T_{\omega_t}$ for the $\omega_t$ semantics}
\centering
\label{omega-t}

\begin{tabular}{lll}
\hline\noalign{\smallskip}
&\mbox{   }&\mbox{   }
\\
\mbox{\bf Solve} &  $\displaystyle{\frac {c \hbox{
is a built-in constraint}} {\la \{c\}\uplus G,   S,C, T \ra_n
\rrarrow_{\omega_t} \la G,   S, C\wedge c, T \ra_n}} $

&\mbox{ }
\\
&\mbox{   }&\mbox{   }
\\

\mbox{\bf Introduce}& $\displaystyle{\frac {h \hbox{
is a user-defined constraint}}{ \la \{h \}\uplus G,   S,C, T
\ra_n \rrarrow_{\omega_t} \la G,\{ h\#n \} \uplus   S ,C, T
\ra_{n+1} }}$ &\mbox{ }
\\
&\mbox{   }&\mbox{   }
\\

\mbox{\bf Apply}& $\displaystyle{\frac {
\begin{array}{c}
{\it r}@H'_1 \setminus H'_2 \Leftrightarrow D  \,|\,  B \in P \ \ \ \\
\mathcal{CT}  \models C \rightarrow \exists _{r}
((chr(  H_1,   H_2)=(H'_1,  H'_2))\wedge D)
\end{array}
}
{\displaystyle
\begin{array}{c}
\la G, H_1 \uplus H_2 \uplus
  S,C, T \ra_n \rrarrow_{\omega_t}\\ \la B\uplus G,
H_1 \uplus   S, (chr(  H_1,   H_2)=(H'_1,
H'_2))\wedge D \wedge  C, T' \ra_n
\end{array} }}$&\mbox{ }
\\
&\mbox{   }&\mbox{   }
\\
&\mbox{where }
  ${\it r}@id (H_1,H_2) \not \in T $
\mbox{ and } $T'=T \cup \{{\it r}@id(H_1,H_2)\} $&\mbox{
}
\\
&\mbox{   }&\mbox{   }
\\
\noalign{\smallskip}\hline
\end{tabular}
\end{table*}

We describe the operational semantics $\omega_t$, introduced in {\cite{DSGH04}, by using a transition system
\[T_{\omega_t}= ({\it Conf_t},
\rrarrow_{\omega_t}).\] Configurations in ${\it Conf_t}$ are
tuples of the form $\langle G,   S,C, T\rangle_n$ where $G$, the {\em goal store} is a multiset of constraints. The \emph{CHR constraint store} $S$ is a set of
identified CHR constraints. The \emph{built-in constraint store} $C$ is a
conjunction of built-in constraints. The {\em propagation history}
$T$ is a token store and $n$ is an integer.
Throughout this paper, we use the symbols $\sigma, \sigma', \sigma_i, \ldots$ to represent configurations in ${\it Conf_t}$.

The \emph{goal store} ($G$) contains all constraints to
be executed. The \emph{CHR constraint store} ($S$) is the set\footnote{Note that sometimes we treat $S$ as a multiset. This is the case, for example, of the transition rules, where considering $S$ as a multiset simplifies the notation.} of
identified CHR constraints that can be matched with the head of the
rules in the program $P$. The \emph{built-in constraint store} ($C$) contains any built-in  constraint that has been passed to the built-in constraint solver. Since we will usually have no information about the
internal representation of $C$, we treat it as a conjunction of constraints. The {\em propagation history}
($T$) describes which rule has been used
for reducing which identified atoms. Finally, the \emph{counter} $n$ represents the next
free integer which can be used to number a CHR constraint.

\comment{We describe the operational semantics $\omega_t$, introduced in \cite{DSGH04}, by using a transition system
\[T_{\omega_t}= ({\it Conf_t},
\rrarrow_{\omega_t}).\] Configurations in ${\it Conf_t}$ are
tuples of the form $\langle G,   S,C, T\rangle_n$ with the
following meaning. The \emph{goal store} $G$ is a multiset (repeats are allowed) of constraints to
be evaluated. The \emph{CHR constraint store} $S$ is the set of
identified CHR constraints that can be matched with the head of the
rules in the program $P$. The \emph{built-in constraint store} $C$ is a
conjunction of built-in constraints. The {\em propagation history}
$T$ is a token store. Finally, the \emph{counter} $n$ represents the next
free integer which can be used to number a CHR constraint.}

Given a goal $G$, the  {\em initial configuration} has the form
\[\langle G,\emptyset,\texttt{true}, \emptyset \rangle_1.\]
A {\em final configuration} has either the form $\langle  G',
  S, {\tt false} , T\rangle_n$, when it is {\em failed}, or it
has the form $\langle \emptyset,  S, C,T \rangle_n$ (with $\mathcal{CT}  \models C \not \leftrightarrow {\tt false}$) when it
represents a successful termination (since there are no more
applicable rules).

The relation $\rrarrow_{\omega_t}$ (of the transition system
$T_{\omega_t}$) is defined  by the rules
in Table~\ref{omega-t}: the \textbf{Solve} rule moves a
built-in constraint from the goal store to the built-in constraint
store; the \textbf{Introduce} rule identifies and moves a CHR (or user-defined) constraint from the goal store to the CHR constraint
store;  the \textbf{Apply} rule chooses a program rule $cl$ and fires it, provided that the following conditions are satisfied: there exists a matching between the constraints in the CHR store and the ones in
the head of $cl$; the guard of $cl$ is
entailed by the built-in constraint store (taking into account also the
matching mentioned before); the token that would be added by \textbf{Apply}}
to the token store is not already
present. After the application of $cl$,
the constraints which match with the right hand side of the head
of $cl$ are deleted from the CHR constraint store, the body of $cl$ is added to
the goal store and the guard of $cl$, together with the equality representing the matching, is added to the built-in constraint store.
The \textbf{Apply} rule assumes that all
the variables appearing in a program clause are renamed with fresh ones in order to
avoid variable names clashes.

From the rules, it is clear that when not considering
tokens (as in the original semantics of \cite{Fru98}) if a propagation rule can be
applied once  then it can be applied infinitely many times, thus
producing an infinite computation (no fairness assumptions are
made here). Such a trivial non-termination is avoided by tokens, since they ensure that
if a propagation rule is used to reduce a sequence of constraints
then the same rule has not been used before on the same
sequence of constraints.

\subsection{The modified semantics $\omega_t'$}
We now define the semantics $\omega_t'$ which considers annotated rules.
This semantics differs from $\omega_t$ in two aspects.

First, in $\omega_t'$
the goal store and the CHR store are fused in a unique generic \emph{store}, where
CHR constraints are immediately labeled. As a consequence, we do not
need  the {\bf Introduce} rule anymore and every CHR
constraint in the body of an applied rule is immediately
utilizable for rewriting.

The second difference concerns the shape of the rules. In fact,
each annotated rule $cl$ has  a local token store (which can be
empty) that is associated with it and which is used to keep track of
the propagation rules that are used to unfold the body of $cl$.
Note also that here, differently from the case of the propagation
history in $\omega_t$, the token store associated with a computation can be updated
by adding multiple tokens at once (because an unfolded rule with many
tokens in its local token store has been used).

In order to define  $\omega_t'$ formally,
we need a function $inst$  which updates
the formal identifiers of a rule to the actual computation ones. Such a function 
is defined as follows.

\begin{definition}\label{definst}
 Let $Token$ be the set of all possible token
sets and let $\mathbb{N}$ be the set of natural numbers. We denote
by $inst: Goals \times Token\times \mathbb{N}  \rightarrow
Goals \times Token\times \mathbb{N}$ the function such that
$inst(  B,T,n)=(B', T',m$), where
\begin{itemize}
    \item $B$ is an identified CHR goal,
    \item $(B', T') $ is obtained from $(B, T)$ by
    incrementing each identifier in $(B, T)$ with $n$ and
    \item $m$ is the greatest identifier in $(B', T')$.
\end{itemize}
\end{definition}

We describe now the operational semantics $\omega_t'$  for annotated CHR
programs by using, as usual, a transition system
\[T_{\omega'_t}= ({\it Conf'_{t}}, \rrarrow_{\omega'_t}).\] Configurations in
${\it Conf'_t}$ are tuples of the form $\langle S,C,
T\rangle_n$ with the following meaning.  $S$ is the set\footnote{Also in this case, sometimes we treat $S$ as a multiset. See the previous footnote.} of
identified CHR constraints that can be matched with rules in the
program $P$ and built-in constraints. The built-in constraint
store $C$ is a conjunction of built-in constraints and $T$ is a
set of tokens, while the counter $n$ represents the last integer
which was used to number the CHR constraints in $S$.

Given a goal $G$, the  {\em initial configuration} has the form
\[\langle I(G),\texttt{true}, \emptyset \rangle_m,\]
where $m$ is the number of CHR constraints in $G$ and $I$ is the function which associates the identifiers with the CHR constraints in $G$.
A {\em failed
configuration} has the form $\langle   S, {\tt false}, T\rangle_n$.

A {\em final
configuration} either is failed or it has the form $\langle
  S, C,T \rangle_n$ (with $\mathcal{CT}  \models C \not \leftrightarrow {\tt false}$) when it represents a successful
termination, since there are no more applicable rules.

The relation $\rrarrow_{\omega'_t}$ (of the transition system $T_{\omega'_t}$) is defined  by the rules in
Table~\ref{tab:operational-semantics} which have the following explanation:
\begin{description}
\item[\textbf{Solve'}]{moves a built-in constraint from the store to the
built-in constraint store;}

\item[\textbf{Apply'}] {fires  a rule $cl$ of the form $r@H_1'\backslash H_2'\Leftrightarrow
D \, |\,   B; T_r$ provided that the following conditions are satisfied: there exists a matching between the constraints in the store and the ones in
the head of $cl$; the guard of $cl$ is
entailed by the built-in constraint store (taking into account also the
matching mentioned before); $r@id( H_1, H_2)\not \in
T$. These conditions are equal to those already seen for {\bf Apply}. Moreover, analogously to the {\bf Apply} transition step,
$chr (H_1,  H_2)= (H_1',H_2')$ together with $D$ are added to the built-in constraint
store.
However, in this case, when the rule $cl$ is fired, $  H_2$ is replaced by $ B$ and the local store $T_r$ is added to $T$ (with $r @id(H_1,H_2)$), where each identifier
is suitably incremented by the $inst$ function.  Finally, the subscript $n$ is replaced by $m$, that is
the greatest number used during the computation step.

As for the \textbf{Apply} rule, the  \textbf{Apply'} rule assumes that all
the variables appearing in a program clause are renamed with fresh ones in order to
avoid variable names clashes.

}
\end{description}

\begin{table*}[t]
\caption{The transition system $T_{\omega'_t}$ for the $\omega'_t$ semantics}
\centering
\label{tab:operational-semantics}
$$
\begin{array}{llll}

\hline\noalign{\smallskip}
&&&\\

\mbox{   }&\textbf{Solve'}&\displaystyle\frac{ c \mbox{ is a built-in
constraint}} {\langle\{c\}\uplus   G, C,
T\rangle_n\rrarrow_{\omega'_t}
\langle   G, c\wedge C, T\rangle_n}&\mbox{   }\\

&&&\\

&\textbf{Apply'}&\displaystyle\frac{
\begin{array}{c}
r@H'_1\backslash H'_2
\Leftrightarrow D\, | \,   B ; T_r\in P,  \quad \quad \\
\mathcal{CT} \models C\rightarrow \exists_r((chr(  H_1,
  H_2)=(H'_1, H'_2))\wedge D)
\end{array} } {
\begin{array}{c}
\langle   H_1\uplus
H_2\uplus   G, C, T \rangle_n\rrarrow _{\omega'_t} \\
\langle   B' \uplus   H_1\uplus   G, (chr(  H_1,
  H_2)=(H'_1, H'_2))\wedge D \wedge C, T'\rangle_{m}
\end{array}
}&\mbox{   }\\

&&&\\

&&\mbox{where  }  (B', T'_r,m)= inst(B,T_r,n),\,
r @id(H_1,H_2)\not\in T \mbox{ and } &\mbox{   }\\
&&T'=T\cup \{r @id(H_1,H_2)\} \cup T'_r
.&\mbox{   }\\
&&&\\
\noalign{\smallskip}\hline
\end{array}
$$
\end{table*}

The following example shows a derivation obtained by the new transition
system.

\begin{example}\label{chiara1}
Given the goal $\text{root} (a), \text{same} (b,c), \text{edge} (a,b), \text{edge}(a,d), \text{edge}(d,c)$ in the following program $P'$,
\begin{small}
$$
\begin{array}{l}
r_1@\text{root} (V), \text{same} (X, Y ) \Rightarrow X=Y,  X=V \mid \text{success}(V)\#1;\emptyset\\
r_2@\text{root} (V), \text{same} (X, Y ) \Leftrightarrow  X\neq Y \mid \text{root} (V)\#1, \text{same} (V,X)\#2, \text{path}(V,Y)\#3;\emptyset\\
r_2@\text{root} (V), \text{same} (X, Y ) \Leftrightarrow  X\neq Y , V=X \mid
\hspace*{-0.15cm}\begin{array}[t]{ll}
\text{root} (V)\#1, \text{same} (V,X)\#2, \text{path}(V,Y)\#3,\\
\text{success}(I)\#4, V=I, V=J, X=L ;\{r_1@1,2\}
\end{array}\\
r_2@\text{root} (V), \text{same} (X, Y ) \Leftrightarrow  X\neq Y , V\neq X \mid  \hspace*{-0.15cm}\begin{array}[t]{ll} \text{path}(V,Y)\#3,
\text{root} (I)\#4, \text{same} (J,L)\#5, \\
I=V, J=V, L=X;\emptyset
\end{array}
\\
r_2@\text{root} (V), \text{same} (X, Y ) \Leftrightarrow  X\neq Y, V=Y \mid
\hspace*{-0.15cm}\begin{array}[t]{ll}\text{root} (V)\#1, \text{same} (V,X)\#2, \\
\text{path}(V,Y)\#3,
V=I, Y=J;\{r_3@3\}
\end{array}
\\
r_3@ \text{path}(I,J) \Rightarrow I=J \mid \texttt{true};\emptyset\\
r_4@ \text{edge} (U,Z) \setminus \text{path}(I,J) \Leftrightarrow  J=Z \mid \text{path}(I,U)\#1;\emptyset\\
r_4@ \text{edge} (U,Z) \setminus \text{path}(I,J) \Leftrightarrow  J=Z , I=U\mid
\text{path}(I,U)\#1, I=X, U=Y;\{r_3@1\}\\
r_5@\text{root} (V) \setminus \text{path}(I,J) \Leftrightarrow  V=J, V\neq I \mid \texttt{false};\emptyset
\end{array}
$$
\end{small}
\vspace*{0.2cm}
we obtain the following derivation \comment{, where we simplify the built-in store,}\\
\begin{small}
$
\begin{array}{l}\langle
({\bf root (a)\#1}, {\bf same (b,c)\#2}, \text{edge} (a,b)\#3, \text{edge}(a,c)\#4, \text{edge}(c,d)\#5), true, \emptyset\rangle_5 \rrarrow _{\omega'_t}
\end{array}
$
\vspace*{0.2cm}\\
$
\begin{array}{l}
\langle
(\text{path}(V_1,Y_1)\#6,
\text{root} (I_1)\#7, \text{same} (J_1,L_1)\#8,
I_1=V_1, J_1=V_1, L_1=X_1, \\
\text{edge} (a,b)\#3, \text{edge}(a,d)\#4, \text{edge}(d,c)\#5), \\
(a=V_1, b=X_1, c=Y_1,X_1\neq Y_1 , V_1\neq X_1), \{r_2@1,2\}\rangle_8 \rrarrow _{\omega'_t}^*
 \end{array}
$
\vspace*{0.2cm}\\
$
\begin{array}{l}
\langle
(\text{path}(V_1,Y_1)\#6,
{\bf root (I_1)\#7}, {\bf\text{same} (J_1,L_1)\#8},
\text{edge} (a,b)\#3, \text{edge}(a,c)\#4, \text{edge}(c,d)\#5), \\
(I_1=V_1, J_1=V_1, L_1=X_1,a=V_1, b=X_1, c=Y_1,X_1\neq Y_1 , V_1\neq X_1), \{r_2@1,2\}\rangle_8 \rrarrow _{\omega'_t}
\end{array}
$
\vspace*{0.2cm}\\
$
\begin{array}{l}
\langle
(\text{root} (V_2)\#9, \text{same} (V_2,X_2)\#10, \text{path}(V_2,Y_2)\#11, \text{success}(I_2)\#12,\\
 V_2=I_2, V_2=J_2, X_2=L_2,
\text{path}(V_1,Y_1)\#6,
\text{edge} (a,b)\#3, \text{edge}(a,c)\#4, \text{edge}(c,d)\#5), \\
(V_2=I_1, X_2=J_1, Y_2=L_1, X_2\neq Y_2, V_2=X _2, I_1=V_1, J_1=V_1, L_1=X_1,\\
a=V_1, b=X_1, c=Y_1,X_1\neq Y_1 , V_1\neq X_1), \{r_2@1,2, \ r_2@7,8, \ r_1@9,10 \}\rangle_{12} \rrarrow _{\omega'_t}^*
\end{array}
$
\vspace*{0.2cm}\\
$
\begin{array}{l}
\langle
(\text{root} (V_2)\#9, \text{same} (V_2,X_2)\#10, {\bf path(V_2,Y_2)\#11}, \text{success}(I_2)\#12,\\
 \text{path}(V_1,Y_1)\#6,
{\bf edge (a,b)\#3}, \text{edge}(a,c)\#4, \text{edge}(c,d)\#5), \\
(V_2=I_2, V_2=J_2, X_2=L_2,V_2=I_1, X_2=J_1, Y_2=L_1, X_2\neq Y_2, V_2=X _2, I_1=V_1, J_1=V_1,\\
 L_1=X_1, a=V_1, b=X_1, c=Y_1,X_1\neq Y_1 , V_1\neq X_1), \{r_2@1,2, \ r_2@7,8, \ r_1@9,10 \}\rangle_{12} \rrarrow _{\omega'_t}
\end{array}
$
\vspace*{0.2cm}\\
$
\begin{array}{l}
\langle (\text{path}(I_3,U_3)\#13, I_3=X_3, U_3=Y_3,
\text{root} (V_2)\#9, \text{same} (V_2,X_2)\#10,  \\ \text{success}(I_2)\#12,
 \text{path}(V_1,Y_1)\#6,
\text{edge}(a,b)\#3, \text{edge}(a,c)\#4, \text{edge}(c,d)\#5), \\
(a=U_3, b=Z_3, a=I_3,b=J_3, J_3=Z_3 , I_3=U_3,
V_2=I_2, V_2=J_2, X_2=L_2,V_2=I_1, \\
X_2=J_1, Y_2=L_1, X_2\neq Y_2, V_2=X _2, I_1=V_1, J_1=V_1, L_1=X_1, a=V_1,
 b=X_1,\\c=Y_1,X_1\neq Y_1 , V_1\neq X_1), \{r_2@1,2, \ r_2@7,8, \ r_1@9,10, \ r_4@11,3, \ r_3@13 \}\rangle_{13} \rrarrow _{\omega'_t}^*
 \end{array}
$
\vspace*{0.2cm}\\
$
\begin{array}{l}
\langle (\text{path}(I_3,U_3)\#13,
\text{root} (V_2)\#9, \text{same} (V_2,X_2)\#10,  \text{success}(I_2)\#12,\\
 {\bf path (V_1,Y_1)\#6},
\text{edge}(a,b)\#3, {\bf edge (a,c)\#4}, \text{edge}(c,d)\#5), \\
(I_3=X_3, U_3=Y_3,a=U_3, b=Z_3, a=I_3,b=J_3, J_3=Z_3 , I_3=U_3,
V_2=I_2, \\ V_2=J_2, X_2=L_2,V_2=I_1,
X_2=J_1, Y_2=L_1, X_2\neq Y_2, V_2=X _2, I_1=V_1, J_1=V_1, \\ L_1=X_1, a=V_1,
b=X_1,
  c=Y_1,X_1\neq Y_1 , V_1\neq X_1), \\
  \{r_2@1,2, \ r_2@7,8, \ r_1@9,10, \ r_4@11,3, \ r_3@13 \}\rangle_{13} \rrarrow _{\omega'_t}
  \end{array}
$
\vspace*{0.2cm}\\
$
\begin{array}{l}
\langle (\text{path}(I_4,U_4)\#14, I_4=X_4, U_4=Y_4,
\text{path}(I_3,U_3)\#13,
\text{root} (V_2)\#9, \\
\text{same} (V_2,X_2)\#10,  \text{success}(I_2)\#12,
\text{edge}(a,b)\#3, \text{edge}(a,c)\#4, \text{edge}(c,d)\#5), \\
(a=U_4, c=Z_4, V_1=I_4, Y_1=J_4, J_4=Z_4 , I_4=U_4,
I_3=X_3, U_3=Y_3,a=U_3, b=Z_3, \\
a=I_3,
b=J_3, J_3=Z_3 , I_3=U_3,
V_2=I_2, V_2=J_2, X_2=L_2,V_2=I_1, X_2=J_1, Y_2=L_1,\\
X_2\neq Y_2,
 V_2=X _2, I_1=V_1, J_1=V_1, L_1=X_1, a=V_1,
b=X_1,
  c=Y_1,X_1\neq Y_1 , V_1\neq X_1), \\
  \{r_2@1,2, \ r_2@7,8, \ r_1@9,10, \ r_4@11,3, \ r_3@13 , \ r_4@4,6, \ r_3@14\}\rangle_{14} \rrarrow _{\omega'_t}^*
  \end{array}
$
\vspace*{0.2cm}\\
$
\begin{array}{l}
\langle (\text{path}(I_4,U_4)\#14,
\text{path}(I_3,U_3)\#13,
\text{root} (V_2)\#9, \text{same} (V_2,X_2)\#10,  \\
\text{success}(I_2)\#12,\text{edge}(a,b)\#3, \text{edge}(a,c)\#4, \text{edge}(c,d)\#5), \\
(I_4=X_4, U_4=Y_4,a=U_4, c=Z_4, V_1=I_4, Y_1=J_4, J_4=Z_4 , I_4=U_4, I_3=X_3,\\
 U_3=Y_3,a=U_3,
 b=Z_3, a=I_3,b=J_3, J_3=Z_3 , I_3=U_3,
V_2=I_2, V_2=J_2,\\  X_2=L_2,V_2=I_1,
X_2=J_1, Y_2=L_1,
 X_2\neq Y_2, V_2=X _2, I_1=V_1, J_1=V_1, L_1=X_1,
\\
 a=V_1,b=X_1,
  c=Y_1,X_1\neq Y_1 , V_1\neq X_1), \\
  \{r_2@1,2, \ r_2@7,8, \ r_1@9,10, \ r_4@11,3, \ r_3@13 , \ r_4@4,6, \ r_3@14\}\rangle_{14} \not \rrarrow _{\omega'_t}\\
\end{array}
$
\end{small}
\end{example}

From the previous transition systems we can obtain a notion
of observable property of CHR computations that will be used in order
to prove the correctness of our unfolding rule.
The notion of "observable property" usually identifies the relevant property that one is interested in
observing as the result of a computation. In our case, we use the notion of qualified answer,
originally introduced in \cite{Fru98}: Intuitively this is the constraint obtained as the result of a non-failed computation,
including both built-in constraints and CHR constraints which have not been "solved" (i.e. transformed
by rule applications into built-in constraints).
Formally qualified answer are defined as follows.

\begin{definition}[{\sc Qualified answers}]  Let $P$ be a CHR program and let
$G$ be a goal. The set $\mathcal{QA}_P(G)$ of
qualified answers for the query $G$ in the program $P$ is defined
as follows:
$$
\hspace*{-0.2cm}\begin{array}{ll}
\mathcal{QA}_P(G) =  \{\exists_{-Fv(G)}(chr (K)\wedge D) \mid & \mathcal{CT} \not\models D \leftrightarrow {\tt false} \mbox{ and } \\
& \langle
G,\emptyset,\texttt{true}, \emptyset \rangle_1
\rightarrow^*_{\omega_t}
\langle \emptyset,   K, D, T\rangle_n\not\rightarrow_{\omega_t}\} .
\end{array}
$$
\end{definition}

Analogously, we can define the qualified answer of an annotated
program.

\begin{definition}[{\sc Qualified answers for annotated programs}]
Let $P$ be an annotated CHR program and let $G$ be a goal. The set $\mathcal{QA'}_P(G)$ of qualified answers
for the query $G$ in the annotated program $P$ is defined as follows:
$$
\begin{array}{ll}
\mathcal{QA'}_P(G) = \{\exists_{-Fv(G)}(chr (K)\wedge D) \mid & \mathcal{CT} \not\models D \leftrightarrow {\tt false} \mbox{ and } \\
& \langle I(G),
\texttt{true}, \emptyset\rangle_m\rightarrow^*_{\omega'_t}
\langle   K, D, T\rangle_n\not\rightarrow_{\omega'_t}\}.
\end{array}
$$
\end{definition}

The previous two notions of qualified answers are equivalent,
as shown by the proof (in the Appendix) of the following proposition.
This fact will be used to prove the correctness of the unfolding.

\begin{proposition}\label{prop:nequality}
Let $P$ and $Ann(P)$ be respectively a CHR program and its annotated version.
Then, for every goal $G$,
$$\mathcal{QA}_{P}(G) = \mathcal{QA'}_{Ann(P)}(G)$$
holds.
\end{proposition}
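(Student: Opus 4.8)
The plan is to set up a tight, step-by-step correspondence between $\omega_t$-derivations of $P$ and $\omega'_t$-derivations of $Ann(P)$, and then to observe that this correspondence preserves the observable that defines a qualified answer. The starting point is the remark that, by construction, every rule of $Ann(P)$ carries the empty local token store. Hence in an \textbf{Apply'} step the added local store $T'_r$ is empty and the only token added to $T$ is $r@id(H_1,H_2)$, which is exactly the token added by \textbf{Apply}. Thus \textbf{Apply'} has the same effect as an \textbf{Apply} step of $\omega_t$ immediately followed by the \textbf{Introduce} steps that move all CHR constraints of the body into the CHR store; the role of $inst$ is precisely to assign to those constraints the fresh identifiers that \textbf{Introduce} would have assigned.

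Concretely, I would relate an $\omega'_t$-configuration $\langle S', C, T'\rangle_{n'}$ to an $\omega_t$-configuration $\langle G, S, C, T\rangle_n$ whenever the CHR part of $S'$ equals $S$ together with the (labeled) CHR constraints of $G$, the built-in constraints pending in $S'$ equal those pending in $G$, the built-in stores $C$ coincide, and $T'$ equals $T$, all up to a bijective renaming $\pi$ of identifiers. Such a renaming is needed because $\omega_t$ fixes identifiers in the (arbitrary) order in which \textbf{Introduce} fires, whereas $\omega'_t$ fixes them through $I(\cdot)$ and $inst$. Under this correspondence \textbf{Introduce} is a silent step: the constraint it would label is, on the $\omega'_t$ side, already present and labeled, either at initialization through $I(G)$ or when the body was inserted by \textbf{Apply'}.

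The proof then proceeds by two simulations, each by induction on derivation length. For $\omega_t \to \omega'_t$, every \textbf{Solve} maps to a \textbf{Solve'}, every \textbf{Apply} maps to an \textbf{Apply'}, and every \textbf{Introduce} maps to no step; for $\omega'_t \to \omega_t$, every \textbf{Solve'} maps to a \textbf{Solve} and every \textbf{Apply'} maps to an \textbf{Apply} followed by the introduction of its body's CHR constraints. The invariant maintained throughout is precisely the correspondence above, including the renaming $\pi$ and the identity $T' = \pi(T)$. The initial configurations $\langle G, \emptyset, \texttt{true}, \emptyset\rangle_1$ and $\langle I(G), \texttt{true}, \emptyset\rangle_m$ correspond after the eager introduction of the CHR constraints of $G$, which is exactly what $I(G)$ performs, and failed (respectively successful, blocked) final configurations correspond to failed (respectively successful, blocked) ones.

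The delicate point, and the main obstacle, is the coherent bookkeeping of identifiers together with the propagation history. A token $r@i_1,\dots,i_l$ in $T$ mentions concrete identifiers, and the side condition $r@id(H_1,H_2)\notin T$ gates rule firing in both systems. I must therefore show that, under the bijection $\pi$, the set of fireable rules and the token-freshness condition agree at every pair of corresponding configurations, so that the two derivations can really be kept in lock-step. This amounts to proving that $\omega'_t$'s all-at-once, $inst$-based relabeling of a body stays globally compatible with $\omega_t$'s incremental, \textbf{Introduce}-based labeling along the whole derivation, and verifying that this invariant is preserved by each transition is the heart of the argument. Once it is in place, equality of the answer sets is immediate: a qualified answer is $\exists_{-Fv(G)}(chr(K)\wedge D)$, and since $chr(\cdot)$ erases identifiers and $\exists_{-Fv(G)}$ projects away every variable other than those of $G$, both the renaming $\pi$ and the choice of fresh variables are invisible, whence $\mathcal{QA}_{P}(G)=\mathcal{QA'}_{Ann(P)}(G)$.
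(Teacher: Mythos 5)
Your proposal follows essentially the same route as the paper: the paper defines a configuration equivalence $\approx$ (stores matched up to a renaming of identifiers, with the extra clause that no token mentions an identifier of a not-yet-introduced constraint) and proves a lock-step simulation lemma in which \textbf{Introduce} is silent on the $\omega'_t$ side and \textbf{Apply'} corresponds to \textbf{Apply} plus introductions, exactly as you describe. The identifier/token bookkeeping that you correctly flag as the heart of the argument is carried out in the paper by updating the renaming with the transposition $\{n'/f, f/n'\}$ at each \textbf{Introduce} step and checking it leaves $T'$ fixed; this is precisely the verification your sketch defers.
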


\section{The unfolding rule}\label{sec:unfolding}

In this section, we define the \emph{unfold operation} for CHR
simpagation rules. As a particular case, we obtain also
unfolding for  simplification and propagation rules, as these can
be seen as particular cases of the former.

The unfolding allows to replace  a  conjunction $S$ of constraints
(which can be seen as a procedure call) in the body of a rule $cl_r$
by the body of a rule  $cl_v$,  provided that the head of $cl_v$ matches
with $S$ (when considering also the instantiations provided by the built-in constraints in
the guard and in the body of the rule $cl_r$). More precisely, assume that
the built-in constraints in the guard and in the body
of the rule $cl_r$ imply that the head $H$ of $cl_v$,
instantiated by a substitution $\theta$, matches with the
conjunction $S$ in the body of  $cl_r$. Then, the unfolded rule is
obtained from $cl_r$ by performing the following steps:  1) the new
guard in the unfolded rule is the conjunction of the guard of $cl_r$
with the guard of $cl_v$, the latter instantiated by $\theta$ and
without  those constraints that are entailed by the built-in
constraints which are in
$cl_r$; 2) the body of $cl_v$ and the equality
$H = S$ are added to the body of $cl_r$; 3) the conjunction of constraints $S$ can
be removed, partially removed or left in the body of  the unfolded
rule, depending on the fact that $cl_v$ is a simplification, a
simpagation or a propagation rule, respectively; 4) as for the
local token store  $T_r$ associated with every rule $cl_r$, this is
updated consistently during the unfolding operations in order to
avoid that a propagation rule is used twice to unfold the same
sequence of constraints.

Before giving the formal definition of the unfolding rule, we illustrate the above steps by means of the following example.

\begin{example}\label{ex:banca}
Consider the following program $P$, similar to that one given in \cite{SS08}, which describes the rules for updating a bank account and for performing the money transfer. We write the program by using  the standard syntax, namely without using the local token store and the identifiers in the body of rules, since there are no annotated propagation rules. The program $P$ consists of the following three rules
\begin{small}$$
\begin{array}{l}
r_1@b(Acc1, Bal1), b(Acc2, Bal2), t(Acc1,Acc2, Amount) \Leftrightarrow Acc1\neq Acc2 \mid \\
\hspace*{3cm} b(Acc1, Bal1), b(Acc2, Bal2), w(Acc1, Amount), d(Acc2, Am)\\
r_2@b(Acc, Bal), d(Acc, Am)\Leftrightarrow b(Acc,B), B= Bal+Am\\
r_3@b(Acc', Bal'), w(Acc', Am')\Leftrightarrow  Bal' > Amount' \mid b(Acc', B'), B'=Bal'-Am'\\
\end{array}
$$
\end{small}
where the three rules identified by $r_1, r_2$, and $r_3$ are called $cl_{r_1}, cl_{r_2}$, and $cl_{r_3}$, respectively. The predicate names are abbreviations: $b$ for balance, $d$ for deposit, $w$ for withdraw and $t$ for
transfer.

Now, we unfold the rule $cl_{r_1}$ by using the  rule $cl_{r_2}$ and we  obtain the new clause $cl'_{r_1}$:
\begin{small}$$
\begin{array}{l}
r_1@b(Acc1, Bal1), b(Acc2, Bal2), t(Acc1,Acc2, Amount) \Leftrightarrow Acc1\neq Acc2 \mid\\
\hspace*{3.5cm}b(Acc1, Bal1),w(Acc1, Amount), b(Acc, B),\\
\hspace*{3.5cm}  B= Bal+Am, Acc2=Acc, Bal2=Bal, Amount=Am.

\end{array}
$$
\end{small}
Next, we unfold the rule $cl'_{r_1}$ by using the  rule $cl_{r_3}$ and we can obtain the new clause $cl''_{r_3}$
\begin{small}$$
\begin{array}{l}
r_3@b(Acc1, Bal1), b(Acc2, Bal2), t(Acc1,Acc2, Amount) \Leftrightarrow \\
\hspace*{0.8cm} Acc1\neq Acc2 , Bal1 > Amount \mid \hspace*{-0.2cm} \begin{array}[t]{l}
 b(Acc, B), B= Bal+Am, Acc2=Acc, Bal2=Bal,\\
Amount=Am, b(Acc', B'), B'=Bal'-Am',\\
Acc1=Acc', Bal1=Bal', Amount=Am'.
\end{array}
\end{array}$$
\end{small}
\end{example}

Before formally defining the unfolding, we need to define a
function which removes the useless tokens from the token store.

\begin{definition}\label{def:clean}
Let $  B$ be an identified goal and let $T$ be a token set,
\[clean: Goals \times Token \rightarrow Token,\] is defined as
follows: $clean (  B,T)$ deletes from $T$ all the tokens for
which at least one identifier is not present in the identified
goal $  B$. More formally
$$\begin{array}{l}
        clean(  B, T)= \{t\in T\mid  t=r@i_1, \ldots, i_k \mbox{ and }
        i_j\in id(  B), \mbox{ for each } j \in [1,k]\}.
\end{array}
$$
\end{definition}

\begin{definition}[{\sc Unfold}]\label{def:unf}
Let $P$ be an annotated CHR program  and let $cl_r, cl_v\in P$ be the two
following annotated rules
$$
\begin{array}{rcl}
r@H_1\backslash H_2 &\Leftrightarrow&  D\,|\,  K,   S_1,   S_2, C; T \mbox{ and}\\
v@H_1'\backslash H_2' &\Leftrightarrow & D '\,|\,   B;
T'
\end{array}
$$
respectively, where $C$ is the conjunction of all the built-in constraints in the body
of $cl_r$. Let  $\theta$ be a substitution such that $dom(\theta) \subseteq Fv(H_1', H_2')$ and
$\mathcal{CT}  \models (C \wedge D) \rightarrow chr(S_1,S_2)= (H_1',
H_2')\theta$. Furthermore let
$m$ be the greatest identifier which appears in the
rule $cl_r$ and let $(B_1, T_1, m_1)=inst(B, T',m)$.
Then, the \emph{unfolded} rule is:
$$r@ H_1\backslash H_2
\Leftrightarrow D, (D''\theta)\, |\,   K,  S_1,
B_1,C, chr(S_1,S_2)= (H_1', H_2'); T''$$ where $v @id (
S_1,S_2) \not \in T$, $V=\{d\in D'\mid \mathcal{CT} \models C\wedge D\rightarrow d\theta\}$,
$D''= D'\backslash V$, $Fv(D''\theta) \cap Fv(H_1', H_2')\theta \subseteq Fv(H_1,H_2)$, the
constraint $(D, (D''\theta))$ is satisfiable and
\begin{itemize}
    \item if $H_2'=\epsilon$ then $T''= T \cup T_1 \cup\{v @id (
S_1)\}$
    \item if $H_2'\neq \epsilon$ then $T''=clean((  K,  S_1) , T) \cup T_1$.
    \end{itemize}
\end{definition}

Note that  $V\subseteq
D'$ is the greatest set of built-in constraints such that $\mathcal{CT}  \models C\wedge D\rightarrow d\theta$ for each $d\in V$. Moreover,  as shown in the following, all the results in the paper are independent from the choice of the substitution $\theta$ which satisfies the conditions of Definition~\ref{def:unf}. Finally, we use  the function $inst$  (Definition~\ref{definst})
in order to increment the value of the
identifiers associated with atoms in the unfolded rule. This allows
us to distinguish the new identifiers introduced in the unfolded
rule from the old ones. Note also that the condition on the token
store is needed to obtain a correct rule.  Consider for example a ground annotated
program
\[\begin{array}{lll}
    P= & \{ & r_1@ h \Leftrightarrow   k\#1  \\
     &  & r_2@k \Rightarrow   s\#1 \\
     &  & r_3@s,s\Leftrightarrow   q\#1 \, \}
  \end{array}
\]
where
the three rules identified by $r_1, r_2$, and $r_3$ are called $cl_{r_1}, cl_{r_2}$, and $cl_{r_3}$, respectively\footnote{Here and in the following examples, we use an identifier and also a name for a rule. The reason for this is that after having performed an unfolding we could have different rules labeled by the same identifier. Moreover, we omit the token stores if they are empty.}. Let
$h$ be  the start goal. In this case, the unfolding could change
the semantics if the token store was not used. In fact, according
to the semantics proposed in Table~\ref{omega-t} or
~\ref{tab:operational-semantics}, we have that the goal $h$ has only the qualified answer
$(k,s)$.
\comment{following
computation: $  h\rightarrow^{cl_{r_1}}
k\rightarrow^{cl_{r_2}}  k,
s\not\rightarrow_{\omega_t}$.} On the other hand,  considering an
unfolding without the update of  the token store, one would have
$r_1@ h\Leftrightarrow   k \#1\stackrel{\mbox{\tiny{unfold using
$cl_{r_2}$}}}{\longrightarrow} r_1@ h\Leftrightarrow   k\#1,   s\#2
\stackrel{\mbox{\tiny{unfold using
$cl_{r_2}$}}}{\longrightarrow}\sout{r_1@h\Leftrightarrow   k\#1,
  s\#2,   s\#3}\stackrel{\mbox{\tiny{unfold using
$cl_{r_3}$}}}{\longrightarrow}r_1@h\Leftrightarrow   k\#1,   q\#4$.
So, starting from the constraint $h$ we could obtain the qualified answer $(k,q)$, that
is not possible in the original program (the rule obtained after
the wrongly applied unfolding rule is underlined).

As previously mentioned, the unfolding rules for simplification
and propagation can be obtained as particular cases of
Definition~\ref{def:unf}, by setting $H_1'=\epsilon$ and  $H_2'
=\epsilon$, respectively, and by considering accordingly the
resulting unfolded rule.
\comment{In the following examples we will use
$\odot$ to denote both  $\Leftrightarrow$ and $\Rightarrow$.}

\begin{example}\label{ex:gen_adam}
Consider the program $P$ consisting of the following four rules
$$
\begin{array}{l}
r_1@f(X, Y), f(Y, Z), f(Z, W)\Leftrightarrow g(X, Z)\#1,f(Z, W)\#2,gs(Z, X)\#3\\
r_2@g(U, V), f(V, T)\Leftrightarrow gg(U, T)\#1\\
r_3@g(U, V), f(V, T)\Rightarrow  gg(U, T)\#1\\
r_4@g(J, L)\backslash f(L, N) \Leftrightarrow gg(J, N)\#1
\end{array}
$$
that we call $cl_{r_1}$, $cl_{r_2}$, $cl_{ r_3}$, and $cl_{ r_4}$, respectively. This program
deduces information  about genealogy. Predicate $f$ is considered as
father, $g$ as grandfather, $gs$ as grandson and $gg$ as
great-grandfather. The following rules are such that we can unfold
some constraints in the body of $cl_{r_1}$ using the rule $cl_{r_2}$,  $
cl_{r_3}$, and $cl_{ r_4}$.

Now, we  unfold the body of rule $cl_{r_1}$ by using the simplification rule
$cl_{r_2}$. We use the $inst$ function
$inst(gg(U,T)\#1,\emptyset,3) = (gg(U,T)\#4,\emptyset,4)$.
So the new unfolded rule is:
$$
\begin{array}{ll}
r_1@f(X, Y),f(Y, Z), f(Z, W)\Leftrightarrow
gs(Z,X)\#3, gg(U, T)\#4,
X=U, Z=V, W=T.
\end{array}
$$

Now, we unfold the body of $cl_{r_1}$ by using the propagation rule $cl_{r_3}$.
As in the previous case, we have that $inst(gg(U,T)\#1,\emptyset,3) = (gg(U,T)\#4,\emptyset,4)$ and then
the new unfolded rule is:
$$
\begin{array}{ll}
r_1@f(X, Y),f(Y, Z), f(Z, W)\Leftrightarrow & \hspace*{-0.3cm}g(X, Z)\#1,f(Z, W)\#2,gs(Z, X)\#3, \\
& \hspace*{-0.3cm} gg(U, T)\#4, X=U, Z=V, W=T; \{r_3@1, 2\}.
\end{array}
$$

Finally, we unfold the body of rule $cl_{r_1}$ by using the simpagation
rule $cl_{r_4}$. As before, the function
\[inst(gg(J,N)\#1,\emptyset,3) = (gg(J,N)\#4,\emptyset,4)\] is computed.
The new unfolded rule is:
$$
\begin{array}{ll}
r_1@f(X, Y),f(Y, Z), f(Z, W)\Leftrightarrow  &\hspace*{-0.3cm} g(X, Z)\#1, gs(Z,X)\#3,\\
&\hspace*{-0.3cm}  gg(J, N)\#4,X=J, Z=L, W=J.
\end{array}
$$

\end{example}

The following example considers more specialized rules  with guards which are not $\texttt{true}$.

\begin{example}\label{ex:gen_adam_refined}
Consider the program consisting of the following rules
$$
\begin{array}{l}
r_1@f(X, Y), f(Y, Z), f(Z,W)\Leftrightarrow X= Adam, Y=Seth\,|\,\\
\hspace{3.5cm}g(X, Z)\#1,f(Z, W)\#2, gs(Z, X)\#3, Z=Enosh\\
r_2@g(U, V), f(V, T) \Rightarrow  U=Adam, V=Enosh\,|\, gg(U, T)\#1, T=Kenan\\
r_3@g(J, L) \backslash f(L, N) \Leftrightarrow  J=Adam, L=Enosh\,|\, gg(J, N)\#1,N=Kenan
\end{array}
$$
that, as usual, we call $cl_{r_1}, cl_{r_2}$, and $cl_{r_3}$, respectively, and which specialize the rules introduced in Example~\ref{ex:gen_adam}
to the genealogy of Adam. That is,  here we remember that Adam was father of Seth; Seth was father
of Enosh; Enosh was father of Kenan. As before, we consider the predicate $f$ as father,
$g$ as grandfather, $gs$ as grandson and $gg$ as great-grandfather.

\comment{If we unfold $cl_{r_1}$ by using $cl_{r_2}$,
where we assume $\odot=\Leftrightarrow$, we obtain:
$$
\begin{array}{l}
r_1@f(X, Y),f(Y, Z)f(Z, W)\Leftrightarrow X=Adam, Y=Seth\,|\,gg(U, T)\#4, \\
\hspace{1cm}T= Kenan,\, gs(Z, X)\#3, \, Z=Enosh, \, X=U, Z=V, W=T.
\end{array}
$$}
If we unfold $cl_{r_1}$ by using $cl_{r_3}$ we have
$$
\begin{array}{ll}
r_1@f(X, Y),f(Y, Z)f(Z, W)\Leftrightarrow & \hspace*{-0.3cm}X=Adam, Y=Seth\,|\,  \\
\hspace{3cm}& \hspace*{-0.3cm} g(X, Z)\#1, \, gs(Z, X)\#3,\, Z=Enosh,  \\
\hspace{3cm}&  \hspace*{-0.3cm} gg(J, N)\#4 \,,N= Kenan, \, X=J, Z=L, W=N.
\end{array}
$$
Moreover, when  $cl_{r_2}$ is considered to unfold $cl_{r_1}$, we obtain
$$
\begin{array}{l}
r_1@f(X, Y),f(Y, Z), f(Z, W)\Leftrightarrow X=Adam, Y=Seth\,|\\
\hspace{3.2cm}g(X, Z)\#1,\, f(Z, W)\#2,\, gs(Z, X)\#3,  \, Z=Enosh,  \\
\hspace{3.2cm} gg(U, T)\#4, \, T= Kenan,\,X=U, Z=V, W=T; \{r_2@1, 2\}.
\end{array}
$$

Note that  $U=Adam, \, V=Enosh$, which is the guard of the rule
$cl_{r_2}$, is not
added to the guard of the unfolded rule  because $U=Adam$ is
entailed by the guard of $cl_{r_1}$ and $V=Enosh$ is entailed
by the built-in constraints in the body of $cl_{r_1}$, by considering also the binding provided by the parameter passing (analogously for $cl_{r_3}$).
\end{example}

\begin{example}\label{ex:chiara2}
The program $P'$ of the Example~\ref{chiara1} is obtained from the program $Ann(P)$ of Example~\ref{chiara} by adding to $Ann(P)$
 the clauses resulting from the unfolding of the clause $r_2$ with $r_1, \, r_2$ and $r_3$ and from the unfolding of the clause $r_4$ with $r_3$. It is worth noticing  that the use of the unfolded clauses allows to decrease the number of Apply tansition steps in the successful derivation.
\end{example}

The following result states the correctness of our unfolding rule. The proof is in the Appendix.

\begin{proposition}\label{prop:equality}
Let $P$ be an annotated CHR program with
$cl_r, cl_v\in P$. Let $cl'_r$  be the result
of the unfolding of $cl_r$ with respect to $cl_v$ and let $P'$ be the program
obtained from $P$ by adding rule $cl'_r$. Then, for every goal $G$,
$\mathcal{QA'}_{P'}(G) = \mathcal{QA'}_P(G)$ holds.
\end{proposition}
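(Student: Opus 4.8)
The plan is to prove the two inclusions $\mathcal{QA'}_{P'}(G) \subseteq \mathcal{QA'}_P(G)$ and $\mathcal{QA'}_{P}(G) \subseteq \mathcal{QA'}_{P'}(G)$ separately, exploiting throughout that $P' = P \cup \{cl'_r\}$: every $\omega'_t$-transition available in $P$ is also available in $P'$, so the only new behaviour of $P'$ comes from applications of the single rule $cl'_r$, and the whole argument reduces to relating one $cl'_r$-step to the successive application of $cl_r$ and $cl_v$.

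For the inclusion $\mathcal{QA'}_{P}(G) \subseteq \mathcal{QA'}_{P'}(G)$ (completeness), observe that any $\omega'_t$-derivation in $P$ is verbatim an $\omega'_t$-derivation in $P'$, since $P \subseteq P'$. The only thing to check is that a configuration $\la K, D, T\ra_n$ which is final for $P$ is also final for $P'$; otherwise it would yield an answer in $P$ that is not an answer in $P'$. As the only rule of $P'$ not in $P$ is $cl'_r$, it suffices to show that whenever $cl'_r$ is applicable via \textbf{Apply'}, so is $cl_r$. This is immediate from Definition~\ref{def:unf}: $cl'_r$ and $cl_r$ share the identifier $r$ and the head $H_1 \backslash H_2$, hence require the same matching and the same token check $r@id(H_1,H_2)\not\in T$, while the guard $D \wedge D''\theta$ of $cl'_r$ entails the guard $D$ of $cl_r$. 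Thus at a $P$-final configuration no $cl'_r$-step is enabled either, so $P$-final configurations are $P'$-final and the inclusion follows.

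For the converse inclusion (soundness), the idea is that every application of $cl'_r$ is simulated in $P$ by applying $cl_r$ and then applying $cl_v$ to the residual $S_1, S_2$ of its body, interleaved with the \textbf{Solve'} steps needed to move the built-in part $C$ of $cl_r$'s body into the built-in store. I would introduce an equivalence $\approx$ on configurations identifying $\la S,C,T\ra_n$ with $\la \widehat S,\widehat C,\widehat T\ra_{\widehat n}$ when their stores agree up to a renaming of identifiers, their built-in stores are $\mathcal{CT}$-equivalent on the relevant variables, and their token stores agree under the same renaming; one checks that $\approx$-equivalent configurations determine the same set $\exists_{-Fv(G)}(chr(K)\wedge D)$ of answers and that $\approx$ is a bisimulation for $\rrarrow_{\omega'_t}$ restricted to $P$-rules. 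The core is then a simulation lemma: a \textbf{Solve'} step or an \textbf{Apply'} step with a rule of $P$ is replayed identically, whereas an \textbf{Apply'} step with $cl'_r$ is replaced by the sequence $cl_r$, then \textbf{Solve'} steps, then $cl_v$ in $P$, landing in a configuration $\approx$ to the $cl'_r$-successor. The applicability of $cl_v$ here is guaranteed by the unfolding preconditions: $\mathcal{CT}\models (C\wedge D)\rightarrow chr(S_1,S_2)=(H_1',H_2')\theta$ supplies the matching, and since $V\theta$ is entailed by $C\wedge D$ while $D''\theta$ is present in the store (it is part of the guard of the $cl'_r$-step that fired), the whole guard $D'\theta=(V\cup D'')\theta$ of $cl_v$ is entailed. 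Iterating this along a $P'$-derivation to a $P'$-final configuration $\sigma'$ produces a $P$-derivation to some $\sigma\approx\sigma'$; since $\approx$ preserves enabledness of $P$-transitions and no $P$-rule fires at $\sigma'$, the configuration $\sigma$ is $P$-final and yields the same qualified answer.

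The main obstacle is the bookkeeping of identifiers and of the token store, which is exactly what the functions $inst$ (Definition~\ref{definst}) and $clean$ (Definition~\ref{def:clean}) and the case split on $H_2'$ in Definition~\ref{def:unf} are engineered to handle. One must verify that the token store $T''$ produced by a single $cl'_r$-step coincides, under the renaming induced by $inst$, with the token store obtained by successively applying $cl_r$ and $cl_v$: the token $r@id(H_1,H_2)$ and the incremented copy $T_1$ of $cl_v$'s local store appear in both, while the token for $cl_v$ itself is treated according to whether $cl_v$ is a propagation rule ($H_2'=\epsilon$, so $S_1$ is retained and $v@id(S_1)$ is recorded) or removes part of its head ($H_2'\neq\epsilon$, so the constraints matched by $S_2$ disappear and $clean$ discards the dangling tokens). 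It is precisely this consistency that rules out the spurious answers of the ground example $\{r_1,r_2,r_3\}$ following Definition~\ref{def:unf}, where omitting the token update would let a propagation rule be replayed. Checking that these token manipulations match, that the intermediate token checks never block the simulation, and that the two built-in stores are $\mathcal{CT}$-equivalent (they differ only by the redundant constraints $V\theta$ and by the way $\theta$ presents the matching) is the delicate part of the proof; independence from the particular choice of $\theta$ then follows, since any admissible $\theta$ produces $\mathcal{CT}$-equivalent built-in stores.
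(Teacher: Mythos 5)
Your proposal is correct and follows essentially the same route as the paper: the completeness inclusion is obtained by observing that applicability of $cl'_r$ implies applicability of $cl_r$ (so $P$-final configurations remain final in $P'$), and the soundness inclusion rests on a simulation lemma replacing one $cl'_r$-step by $cl_r$, then \textbf{Solve'} steps, then $cl_v$, up to an equivalence on configurations that absorbs the identifier renaming and token-store bookkeeping. The only cosmetic difference is that the paper first reduces to normal derivations (its Proposition~\ref{prop:solonorm}) so that the simulation lemma can be stated between built-in free configurations, whereas you fold the required \textbf{Solve'} normalization directly into the bisimulation.
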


Since the previous result is independent from the choice of the particular substitution
$\theta$ which satisfies the conditions of Definition~\ref{def:unf}, we can choose any such a substitution in order to define the unfolding.

Using the semantic equivalence of a CHR program and its annotated
version, we obtain  also the following corollary which shows the equivalence between a CHR program
and its annotated and unfolded version.

\begin{corollary}\label{prop:QAunf}
Let $P$ and $Ann(P)$ be respectively a CHR program and its annotated version.
Moreover let $cl_r, cl_v\in Ann(P)$ be CHR annotated rules such that
$cl'_r$  is the result
of the unfolding of $cl_r$ with respect to $cl_v$ and $P'= Ann(P)\cup \{cl'_r\}$.
Then, for every goal $G$,
$\mathcal{QA}_{P}(G) = \mathcal{QA'}_{P'}(G)$.
\end{corollary}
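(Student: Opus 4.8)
The plan is to derive Corollary~\ref{prop:QAunf} by simply chaining the two preceding results, since both halves of the needed equivalence have already been established. The target claims $\mathcal{QA}_{P}(G) = \mathcal{QA'}_{P'}(G)$ where $P' = Ann(P) \cup \{cl'_r\}$ and $cl'_r$ is the unfolding of $cl_r$ with respect to $cl_v$, both drawn from $Ann(P)$. The key observation is that the left-hand side talks about the original (non-annotated) program $P$ under $\omega_t$, whereas the right-hand side talks about an annotated-and-unfolded program under $\omega'_t$, so the proof must bridge these two worlds using the two bridging lemmas available.

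First I would apply Proposition~\ref{prop:nequality}, which gives $\mathcal{QA}_{P}(G) = \mathcal{QA'}_{Ann(P)}(G)$ for every goal $G$; this converts the original program semantics into the annotated semantics. Then I would apply Proposition~\ref{prop:equality} to the annotated program $Ann(P)$, taking the rules $cl_r, cl_v \in Ann(P)$ and their unfolded result $cl'_r$. That proposition tells us precisely that adding the unfolded rule does not change the set of qualified answers, i.e. $\mathcal{QA'}_{Ann(P) \cup \{cl'_r\}}(G) = \mathcal{QA'}_{Ann(P)}(G)$, which is exactly $\mathcal{QA'}_{P'}(G) = \mathcal{QA'}_{Ann(P)}(G)$. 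Composing these two equalities yields
$$
\mathcal{QA}_{P}(G) = \mathcal{QA'}_{Ann(P)}(G) = \mathcal{QA'}_{P'}(G),
$$
which is the claimed identity.

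There is essentially no hard step here: the corollary is a genuine corollary, obtained by transitivity of equality from the two already-proven propositions. The only things I would take mild care to verify are the hypotheses of each cited result, namely that $Ann(P)$ is indeed the annotated version of a CHR program (so Proposition~\ref{prop:nequality} applies) and that $cl_r, cl_v$ together with the unfolded $cl'_r$ satisfy the shape required by Proposition~\ref{prop:equality} (they do, by the hypotheses of the corollary itself). The main obstacle, if any, is purely bookkeeping: making sure the program in the middle term is literally the same $Ann(P)$ in both invocations, so that the chain composes without a gap. Since the statement quantifies over every goal $G$ and both propositions hold for every goal, no additional quantifier manipulation is needed.
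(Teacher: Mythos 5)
Your proposal is correct and follows exactly the paper's own argument: the corollary is proved by chaining Proposition~\ref{prop:nequality} (equivalence of $\mathcal{QA}_P$ and $\mathcal{QA'}_{Ann(P)}$) with Proposition~\ref{prop:equality} (invariance of $\mathcal{QA'}$ under adding the unfolded rule). Nothing further is needed.
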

\begin{proof}
The proof follows from
Proposition~\ref{prop:nequality}
and Proposition~\ref{prop:equality}.
\end{proof}

\section{Safe rule replacement}\label{sec:safty-rule-deletion}

The previous result shows that we can safely add to
a program $P$ a rule resulting from the unfolding, while preserving the
semantics of $P$ in terms of qualified answers.
However, when a rule $cl_r \in P$ has been unfolded producing
the new rule $cl'_r$, in some cases we would also like to replace $cl_r$ by $cl'_r$ in $P$,
since this could improve the efficiency of the resulting program.
Performing such a replacement while preserving the semantics
is in general a very difficult task.

In the case of CHR this is mainly
due to three problems. The first one is the presence of guards in the rules.
Intuitively, when
unfolding a rule $r$ by using a rule $v$ (i.e. when replacing in
the body of $r$ a ``call'' of a procedure by its definition $v$)
it could happen that some guard in $v$ is not satisfied
``statically'' (i.e. when performing the unfold), even though it
could become satisfied at run-time when the  rule $v$ is actually
used. If we move the guard of $v$ in the unfolded version of $r$
we can then loose some computations, because the guard 
in $v$ is moved before the atoms in the body of $r$ (and those atoms could 
instantiate and satisfy the guard). In other words, the overall guard in the unfolded rule has been strengthened, which 
means that the rule applies in fewer cases. This implies that if we want to preserve the meaning
of a program in general we cannot replace the rule $r$ by its unfolded
version. Suitable conditions can be defined in order to allows such a replacement, as we do later.
The second source of difficulties consists in the pattern matching
mechanism which is used by the CHR computation. According to this mechanism, when rewriting a goal $G$ by a rule $r$ only the variables in the head of $r$ can be instantiated (to become equal to the terms in $G$).  Hence, it could happen that statically the body of a rule $r$ is not instantiated enough to perform the pattern matching involved in the unfolding,  while it could become instantiated at run-time in the computations. Also in this case replacing $r$ by its unfolded version in general is not correct. 
Note that this is not a special case of the first issue, indeed if we cannot (statically) perform the pattern matching we do not unfold the rule  
while if we move the pattern matching to the guard we could still unfold the rule (under suitable conditions).

Finally, we have the problem of the multiple heads.
In fact, let $B$ be the body of a rule $r$ and let $H$ be the (multiple)
head of a rule $v$, which can be used to unfold $r$: we cannot be sure
that at run-time all the atoms in $H$ will be used to rewrite $B$,
since in general $B$ could be in a conjunction with other atoms even
though the guards are satisfied. 	\comment{This technical point, that one can legitimately find
obscure now, will be further clarified in Section~\ref{sec:safty-rule-deletion}.}
Note that the last point does not mean that the answers of the transformed program are a subset of those of the original one, since by deleting some computations we could introduce in the transformed program new qualified answers which were not in the original program. 
This is a peculiarity of CHR and it is different from what happens in Prolog.

The next
subsection clarifies these three points by using some examples.

\subsection{Replacement problems}

As previously mentioned, the first problem in replacing a rule by its unfolded version concerns the anticipation of
the guard of the rule $cl_v$ (used to unfold the rule $cl_r$) in the guard of $cl_r$
(as we do in the unfold operation). In fact, as shown by the following example, this could lead to the loss of some
computations,  when the unfolded rule $cl'_r$  is used rather than the
original rule $cl_r$.

\begin{example}\label{esempio:mau}
Let us consider the program
\[\begin{array}{rllll}
  P= \{ &r_1  @ p(Y)  \Leftrightarrow q(Y), s(Y)\\
  & r_2 @ q(Z) \Leftrightarrow  Z=a \,|\, \texttt{true}\\
& r_3 @ s(V) \Leftrightarrow  V=a
&\}\\\end{array}
\]
where we do not consider the identifiers (and the local token store) in
the body of rules, because we do not have propagation rules in
$P$.

The unfolding  of $r_1  @ p(Y)  \Leftrightarrow q(Y), s(Y)$  by using the rule $r_2 @ q(Z) \Leftrightarrow  Z=a \,|\, \texttt{true} $ returns the new
rule $r_1  @ p(Y)  \Leftrightarrow  Y=a  \,|\, s(Y), Y=Z $. Now the
program
\[\begin{array}{rllll}
  P'= \{
  &r_1  @ p(Y)  \Leftrightarrow  Y=a  \,|\, s(Y), Y=Z \\
  & r_2 @ q(Z) \Leftrightarrow  Z=a \,|\, \texttt{true}\\
& r_3 @ s(V) \Leftrightarrow  V=a
 &\}\\
\end{array}
\]
is not semantically equivalent to $P$ in terms of qualified
answers. In fact, given the goal $G= p(X)$ we have $(X=a )\in
\mathcal{QA'}_P(G)$, while $(X=a) \not \in \mathcal{QA'}_{P'}(G).$
\end{example}

The second problem is related to the pattern matching used in CHR computations. In fact,
following Definition~\ref{def:unf}, there are some matchings that could become
possible only at run-time, and not at compile time, because a stronger (as a first order formula)
built-in constraint store is needed. Also in this case, a rule elimination could lead to lose
possible answers as illustrated in the following example.

\begin{example}\label{ex:matching}
Let us consider the program
\[
\begin{array}{lcll}P&=\{&
r_1@p(X, Y) \Leftrightarrow q(Y, X)&\\
&&r_2@q(W, a) \Leftrightarrow W=b&\\
&&r_3@q(J, T) \Leftrightarrow J=d&\}
\end{array}
\]
where, as before, we do not consider the identifiers and the token store in the
body of rules, because we do not have propagation rules in $P$.
Let $P'$ be the program where the unfolded rule
$r_1@p(X, Y) \Leftrightarrow Y=J, X=T, J=d $, obtained by using $r_3@q(J, T) \Leftrightarrow J=d$ in $P$,
substitutes the original one (note that other unfoldings are not possible, in
particular the rule $r_2@q(W, a) \Leftrightarrow W=b$ can  not be used to unfold $r_1@p(X, Y) \Leftrightarrow q(Y, X)$)
\[
\begin{array}{lclrl}P'&=\{&
r_1@p(X, Y) \Leftrightarrow Y=J, X=T, J=d &\\
&&r_2@q(W, a) \Leftrightarrow W=b&\\
&&r_3@q(J, T) \Leftrightarrow J=d&\}.
\end{array}
\]
Let $G=p(a, R)$ be a goal. We can see that
$(R=b)\in\mathcal{QA'}_P(G)$ and  $(R=b)\not\in\mathcal{QA'}_{P'}(G)$ because,
with the considered goal (and consequently
the considered built-in constraint store) $r_2@q(W,a) \Leftrightarrow W=b$ can fire in $P$ but can not fire in
$P'$.

The third problem is related to multiple heads. In fact, the
unfolding that we have defined  assumes that the head of a rule
matches completely with the body of another one, while in general,
during a CHR computation, a rule can match with constraints
produced by more than one rule and/or  introduced by the initial
goal. The following example illustrates this point.

\begin{example}\label{ex:unicatesta}
Let us consider the program
\[\begin{array}{rlll}
  P= \{ &r  @ p(Y)  \Leftrightarrow  q(Y), h(b)\\
  & v @ q(Z), h(V) \Leftrightarrow  Z=V & \}\\
\end{array}
\]
where we do not consider the identifiers and the token store in
the body of rules, as usual.

The unfolding  of $r  @ p(Y)  \Leftrightarrow  q(Y), h(b)$ by using  $v @ q(Z), h(V) \Leftrightarrow  Z=V$ returns the new
rule
$$r @ p(Y) \Leftrightarrow  Y=Z, V=b, Z=V.$$ Now the program
\[\begin{array}{rlll}
  P'= \{ &r @ p(Y)  \Leftrightarrow Y=Z, V=b, Z=V \\
  & v  @  q(Z), h(V) \Leftrightarrow   Z=V & \}\\
\end{array}
\]
where we substitute the original rule by its  unfolded version is
not semantically equivalent to $P$. In fact, given the goal $G=
p(X), h(a), q(b)$, we have that  $(X=a)\in \mathcal{QA'}_P(G)$, while $(X=a) \not \in
\mathcal{QA'}_{P'}(G).$
\end{example}

\end{example}

\subsection{A condition for safe rule replacement}

We have identified some conditions which ensure that we can safely replace the
original rule $cl_r$ by its unfolded version while maintaining the
qualified answers semantics. Intuitively, this holds when: 1)  the
constraints of the body of  $cl_r$ can
 be rewritten only by CHR rules such that all the atoms in the head contain the same set of variables;
2) there exists no rule $cl_v$ which can be fired by using
a part of constraints introduced in the body of $cl_r$ plus some
other constraints; 3) all the rules that can be applied at run-time to the body of the original rule $cl_r$, can also be applied at
transformation time.
Before defining formally these conditions, we need some further
notations. First of all, given a rule $cl_r$ we define
two sets.

The first one contains a set of pairs: for each pair the first
component is a rule that can be used to unfold $cl_r$, while the second one is the
sequence of the identifiers of the atoms in the body of $cl_r$ which
are used in the unfolding.

The second set contains all the rules that can be used for the
{\em partial unfolding} of $cl_r$; in other words, it is the set of rules that can fire by
using at least an atom in the body of $cl_r$ and necessarily
some other CHR and built-in constraints. Moreover, such a set contains
also the rules that can fire, when an opportune built-in constraint
store is provided by the computation, but that cannot be unfolded.

\begin{definition}\label{def:Pposeneg}

Let $P$ be an annotated CHR program and let $cl_r, cl_v$ be the following two annotated rules
$$
\begin{array}{rcl}
r@H_1\backslash H_2 &\Leftrightarrow&  D\,|\,  A; T \mbox{ and}\\
v@H_1'\backslash H_2' &\Leftrightarrow & D '\,|\,   B; T'
\end{array}
$$
such that $cl_r, cl_v\in P$ and  $cl_v$ is renamed
apart with respect to $cl_r$. We define $U^{+}$ and $U^{\#}$ as follows:
\begin{enumerate}
    \item\label{uno} $(cl_v, (i_1, \ldots, i_n)) \in$ $ U^{+}_P(cl_r )$
 if and only if
$cl_r $ can be
unfolded with $cl_v$ (by Definition~\ref{def:unf}) by using the sequence
of the identified atoms in $  A$ with identifiers $(i_1,
\ldots, i_n)$.
    \item\label{due} $cl_v\in  U^{\#}_P(cl_r )$ if and only if at least one of the
following conditions holds: \\
\begin{enumerate}
\item\label{ai} there exists $(A_1,A_2) \subseteq A$ and
     a built-in constraint $C$ such that
    $Fv(C) \cap Fv(cl_v) = \emptyset$, the constraint
    $D \wedge C$ is satisfiable,
    $\mathcal{CT}  \models (D\wedge C) \rightarrow \exists _{cl_v}
((chr(A_1,A_2)=(H'_1,  H'_2))\wedge D')$, $v @id
(A_1,A_2) \not \in T$, and $(cl_v, id
(A_1,A_2)) \not  \in$ $ U^{+}_P(cl_r)$ \\
\item\label{bi} or there exist $k \in A$,
    $h\in H_1'\uplus H_2'$ and
     a built-in constraint $C$ such that
    $Fv(C) \cap Fv(cl_v) = \emptyset$, the constraint
    $D \wedge C$ is satisfiable,
    $\mathcal{CT}  \models (D\wedge C) \rightarrow \exists _{cl_v}
((chr(k)=h)\wedge D')$, and there exists no $(A_1,A_2) \subseteq A$ such that
$v @id (A_1,A_2) \not \in T$ and
$\mathcal{CT}  \models (D\wedge C \wedge (chr(k)=h)) \rightarrow (chr(A_1,A_2)=(H'_1,  H'_2))$.
\\
\end{enumerate}
\end{enumerate}
\end{definition}
Some explanations are in order here.

The set $U^{+}$ contains all the couples composed by those rules that can be used to unfold a
fixed rule $cl_r$, and the identifiers of the constraints considered in the unfolding,
introduced in Definition~\ref{def:unf}.

Let us consider now the set $U^\#$.
The conjunction of built-in constraints $C$
represents a generic set of built-in constraints
(such a set naturally can be equal to every possible built-in constraint store that can
be generated by a real computation before the application of rule $cl_v$); the
condition $Fv(C) \cap Fv(cl_v) = \emptyset$
is required to avoid free variable capture,
it represents the renaming (with fresh variables) of a rule $cl_v$ with respect to the computation
before the use of the $cl_v$ itself in an \textbf{Apply'} transition;
the condition $v@id(  A_1,   A_2)\not \in T$ avoids trivial non-termination due the propagation rules;
the conditions $\mathcal{CT}  \models (D\wedge C) \rightarrow \exists _{cl_v}
((chr(  A_1,   A_2)=(H'_1,  H'_2))\wedge D')$ and
$\mathcal{CT}  \models (D\wedge C) \rightarrow \exists _{cl_v}
((chr(k)=h)\wedge D')$ secure that a strong enough built-in constraint
is provided by the computation, before the application of rule $cl_v$;
finally, the condition
$(cl_v, id (  A_1,   A_2))\not\in
U_P^{+}(cl_r)$ is required
to avoid to consider the rules that can be correctly unfolded in the body of $cl_r$.
There are two kinds of rules that are added to $U^\#$.
The first one, due to Condition~\ref{ai} in Definition~\ref{def:Pposeneg},
indicates a matching substitution problem similar to that one described  in Example~\ref{ex:matching}. The second kind, due to Condition~\ref{bi} in Definition~\ref{def:Pposeneg}, indicates a multiple
heads problem similar to that one in Example~\ref{ex:unicatesta}.
Hence, as we will see in Definition~\ref{def:nsafedel}, in
order to be able to correctly perform the unfolding, the set $U^\#$ must be empty.

Note also that if $U^{+}_P(cl_r)$ contains a pair, whose first component is
a rule with a multiple head and such that the atoms in the head contain different sets of variables, then by definition,
$U^{\#}_P(cl_r)\neq \emptyset$ (Condition~\ref{bi} of Definition~\ref{def:Pposeneg}).

The following definition introduces a notation for the set obtained by unfolding a rule with (the rules in) a program.

\begin{definition}
Let $P$ be an annotated CHR program and assume that
$cl \in P$,
\[Unf_P(cl)\] is the set of all annotated rules obtained by unfolding
the rule $cl$ with a rule in $P$, by using Definition~\ref{def:unf}.
\end{definition}

We can now give the central definition of this section.

\begin{definition}[{\sc Safe rule replacement}]\label{def:nsafedel}
Let $P$ be an annotated CHR program and let $cl_r \in P$ be the annotated rule $r@H_1\backslash H_2
\Leftrightarrow  D\,|\,  A; T$, such that the following
holds
\begin{enumerate}
\item[i)] $U^{\#}_P(cl_r ) =\emptyset$,
\item[ii)]  $U^{+}_P(cl_r) \neq \emptyset$ and
\item[iii)] for each
$r@ H_1\backslash H_2 \Leftrightarrow D'\, |\,   A'; T' \in
   Unf_P (cl_r)$
we have that
$\mathcal{CT}  \models D \leftrightarrow D'$.
\end{enumerate}
Then, we say that the rule $cl_r$
can be safely replaced (by its unfolded version) in $P$.
\end{definition}

Condition $\bf i) $ of the previous definition implies that
$cl_r $ can be
safely replaced in $P$ only if:
\begin{itemize}
\item $U^{+}_P(cl_r )$ contains only pairs, whose 
first component is a rule such that each atom in the head contains the same set of variables;

\item  a sequence of identified atoms of
body of the rule $cl_r$ can be used to fire a rule $cl_v$
only if $cl_r$ can be unfolded with $cl_v$  by using the same sequence
of the identified atoms.
\end{itemize}

Condition {\bf ii)} states that there exists at least one rule for unfolding
the rule $cl_r$.

Condition {\bf iii)}  states that each annotated rule obtained by the
unfolding of $cl_r$ in $P$ must have a guard equivalent to that one of $cl_r$: in fact
the condition  $\mathcal{CT}  \models D \leftrightarrow D'$ in {\bf iii)} avoids the
problems discussed in Example~\ref{esempio:mau}, thus allows the moving (i.e. strengthening)
of the guard in the unfolded rule.

Note that Definition~\ref{def:nsafedel} is independent from the particular substitution $\theta$ chosen in Definition~\ref{def:unf} in order to define the unfolding of the rule
\[\begin{array}{rcl}
r@H_1\backslash H_2 &\Leftrightarrow&  D\,|\,  K,   S_1,   S_2, C; T \mbox{ with respect to}\\
v@H_1'\backslash H_2' &\Leftrightarrow & D '\,|\,   B; T'
\end{array}
\]
In fact, let us assume that there exist two substitution $\theta$ and $\gamma$ which satisfy the conditions of Definition~\ref{def:unf}.
Then $\mathcal{CT}  \models (C\wedge D) \rightarrow (d\theta \leftrightarrow d\gamma)$ for each $d \in D'$.
Therefore, if $V=\{d\in D'\mid \mathcal{CT} \models C\wedge D\rightarrow d\theta\}$ and
$W=\{d\in D'\mid \mathcal{CT} \models C\wedge D\rightarrow d\gamma\}$, we have that $V=W$ and then
$D''=D\setminus V= D \setminus W$.
Now, it is easy to check that Condition {\bf iii)} follows if and only if $D'' \theta=D''\gamma=\emptyset$.

The following is an example of a safe replacement.
\begin{example}\label{ex:safe1}
Consider the program $P$ consisting of the following four rules
$$
\begin{small}
\begin{array}{l}
r_1@p (X,Y, Z)  \Leftrightarrow r(b,b,Z)\#1, s(Z,b,a)\#2, q(X, f(Z), a)\#3, r(g(X,b), f(a), f(Z))\#4;\emptyset\\
r_2@q(V,U,W), r(g(V,b), f(W), U)\Leftrightarrow W=a \mid s(V,U,W)\#1, r(U,U,V)\#2;\emptyset\\
r_3@r(M,M,N), s (N,M,a)\Leftrightarrow  p(M,N,N)\#1;\emptyset\\
r_4@s(L,J,I)\Rightarrow  I=L;\emptyset\\
\end{array}
\end{small}
$$
where the four rules identified by $r_1, r_2, r_3$, and $r_4$ are called $cl_{r_1}, cl_{r_2}, cl_{r_3}$, and $cl_{r_4}$, respectively.
By Definition \ref{def:Pposeneg}, we have that
\[
\begin{array}{l}
U^{+}_P(cl_{r_1}) =\{cl_{r_2}@3,4, \, cl_{r_3}@1,2, \, cl_{r_4}@2\}\\
 U^{\#}_P (cl_{r_1}) =\emptyset.
\end{array}
\]

Moreover
$$
\begin{small}
\begin{array}{l}
Unf_P (cl_{r_1})= \\
\begin{array}{llll}
\{& r_1@p (X,Y, Z)  \Leftrightarrow & \hspace*{-0.2cm}r(b,b,Z)\#1, s(Z,b,a)\#2,s(V,U,W)\#5, r(U,U,V)\#6,
\\
&& \hspace*{-0.2cm}X=V, U=f(Z), W=a;\emptyset\\
&r_1@p (X,Y, Z)  \Leftrightarrow & \hspace*{-0.2cm}q(X, f(Z), a)\#3, r(g(X,b), f(a), f(Z))\#4, p(M,N,N)\#5,  \\
&& \hspace*{-0.2cm}M=b, N=Z ;\emptyset\\
&r_1@p (X,Y, Z)  \Leftrightarrow & \hspace*{-0.2cm}r(b,b,Z)\#1, s(Z,b,a)\#2, q(X, f(Z), a)\#3,
 \\
&&\hspace*{-0.2cm}  r(g(X,b), f(a), f(Z))\#4, I=L, Z=L, b=J, a=I;
\{cl_{r_4}@2\}&\}
\end{array}
\end{array}
\end{small}
$$

Then  $cl_1$ can be safely replaced in $P$ according to Definition~\ref{def:nsafedel} and then we obtain
\[\begin{array}{l}
  P_1  =  (P\setminus   \{cl_1\} ) \, \cup
   Unf_{P}(cl_1),
\end{array}
\]
where $P_1$ is the program
$$
\begin{small}
\begin{array}{lll}
\{& r_1@p (X,Y, Z)  \Leftrightarrow  \hspace*{-0.2cm}
\begin{array}[t]{l}
r(b,b,Z)\#1, s(Z,b,a)\#2,s(V,U,W)\#5, r(U,U,V)\#6,
\\
X=V, U=f(Z), W=a;\emptyset
\end{array}\\
& r_1@p (X,Y, Z)  \Leftrightarrow  \hspace*{-0.2cm}\begin{array}[t]{l} q(X, f(Z), a)\#3, r(g(X,b), f(a), f(Z))\#4, p(M,N,N)\#5, \\
 M=b, N=Z ;\emptyset
 \end{array}
 \\
& r_1@p (X,Y, Z)  \Leftrightarrow \hspace*{-0.2cm}\begin{array}[t]{l}
r(b,b,Z)\#1, s(Z,b,a)\#2, q(X, f(Z), a)\#3, \\
 r(g(X,b), f(a), f(Z))\#4, I=L, Z=L, b=J, a=I;
\{cl_{r_4}@2\}
\end{array}\\
& r_2@q(V,U,W), r(g(V,b), f(W), U)\Leftrightarrow W=a \mid s(V,U,W)\#1, r(U,U,V)\#2;\emptyset\\
& r_3@r(M,M,N), s (N,M,a)\Leftrightarrow  p(M,N,N)\#1;\emptyset\\
& r_4@s(L,J,I)\Rightarrow  I=L;\emptyset & \}
\end{array}
\end{small}
$$
\end{example}

We can now provide the result which shows the correctness of our safe replacement rule. The proof is in the Appendix.

\begin{theorem}\label{theo:n1completeness}
Let $P$ be an annotated program,  $cl$ be a rule in $P$ such that
$cl$ can be safely replaced
in $P$ according to Definition~\ref{def:nsafedel}. Assume also that
\[\begin{array}{l}
  P'  =  (P\setminus   \{cl\} ) \, \cup
   Unf_{P}(cl).
\end{array}
\]

Then $\mathcal{QA'}_{P}(G)=\mathcal{QA'}_{P'}(G)$ for any arbitrary
goal $G$.
\end{theorem}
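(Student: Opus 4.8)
The plan is to separate the \emph{addition} of the unfolded rules from the \emph{deletion} of $cl$. Setting $Q=P\cup Unf_P(cl)$ we have $P'=Q\setminus\{cl\}$, and by adding the rules of $Unf_P(cl)$ to $P$ one at a time each step is an instance of Proposition~\ref{prop:equality} (every rule of $Unf_P(cl)$ is the unfolding of $cl$ with respect to some rule of $P\subseteq Q$); hence $\mathcal{QA'}_{Q}(G)=\mathcal{QA'}_{P}(G)$ for every goal $G$. It therefore remains to prove $\mathcal{QA'}_{Q}(G)=\mathcal{QA'}_{P'}(G)$, namely that erasing $cl$ from the enlarged program $Q$ preserves qualified answers. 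Throughout I write $cl=cl_r$ as $r@H_1\backslash H_2\Leftrightarrow D\mid K,S_1,S_2,C;T$ and let $cl_v$ range over the rules used to unfold it, with $S_1,S_2$ the designated body atoms matched by the head of $cl_v$.

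The inclusion $\mathcal{QA'}_{P'}(G)\subseteq\mathcal{QA'}_{Q}(G)$ is the easy one, and reduces to showing that a configuration final for $P'$ is also final for $Q$. Every $P'$-derivation is a $Q$-derivation since $P'\subseteq Q$, so I only have to rule out that $cl$ fires on a $P'$-final configuration. If it did, its head $H_1\backslash H_2$ would match the store, its guard $D$ would be entailed and the token condition would hold; but by Condition~ii) of Definition~\ref{def:nsafedel} the set $U^{+}_P(cl_r)$ is non-empty, so $Unf_P(cl)$ contains a rule $cl'_r$ with the \emph{same} head $H_1\backslash H_2$, the same identifier $r$ and, by Condition~iii), a guard equivalent to $D$. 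Such a rule would fire on exactly the same constraints under exactly the same token condition, contradicting finality for $P'$. Hence every $P'$-final configuration is $Q$-final and yields the same qualified answer in both programs.

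For the converse $\mathcal{QA'}_{Q}(G)\subseteq\mathcal{QA'}_{P'}(G)$ I would take a successful $Q$-derivation $\delta$ ending in a $Q$-final configuration $\sigma$ and transform it into a $P'$-derivation reaching the same $\sigma$. The crucial observation is that, by the remark following Definition~\ref{def:nsafedel}, Condition~iii) forces $D''\theta=\emptyset$, so once $cl$ has fired the guard and matching equalities of the corresponding $cl_v$ are \emph{already} entailed by $D\wedge C$; thus $cl_v$ is applicable to the freshly introduced atoms $S_1,S_2$ immediately, and stays applicable since the built-in store grows monotonically. Because $U^{\#}_P(cl_r)=\emptyset$ (Condition~i)), no other rule of $P$ can consume or even partially match $S_1,S_2$: any such firing would fall under Condition~\ref{ai} or~\ref{bi} of Definition~\ref{def:Pposeneg} and place a rule in $U^{\#}$. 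Consequently, in $\delta$ these atoms must eventually be consumed by the matching $cl_v$ — otherwise $cl_v$ would still be applicable to $\sigma$, contradicting finality — and the intermediate steps between the $cl$-step and that $cl_v$-step never touch $S_1,S_2$. Using a commutation (diamond) property of independent $\omega'_t$ transitions I would move each such $cl_v$-step up so that it immediately follows its $cl$-step, and then collapse the adjacent pair into a single application of the corresponding unfolded rule $cl'_r\in Unf_P(cl)\subseteq P'$; that the collapsed step produces exactly the same store, built-in constraints and history is the local correspondence underlying Proposition~\ref{prop:equality}, with the $inst$ and $clean$ bookkeeping of Definition~\ref{def:unf} ensuring that identifiers and tokens match (identifier values being irrelevant to the projected answer). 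Replacing every $cl$/$cl_v$ pair in this way yields a $P'$-derivation ending in $\sigma$, which is $P'$-final since $P'\subseteq Q$.

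The main obstacle is exactly this reordering: I must establish a commutation lemma for $\omega'_t$ showing that a $cl_v$-step acting on $S_1,S_2$ commutes with every intermediate step of $\delta$, and that performing it early neither disables a later transition nor creates a spurious one. The two safe-replacement conditions are what make this work — $U^{\#}_P(cl_r)=\emptyset$ guarantees non-interference with $S_1,S_2$, while $D''\theta=\emptyset$ guarantees persistent applicability — but the handling of identifiers and of the propagation history (in particular the propagation case $H_2'=\epsilon$, where the token $v@id(S_1)$ must be accounted for on both sides) is delicate and is where the argument has to be carried out carefully. Combining the two inclusions gives $\mathcal{QA'}_{Q}(G)=\mathcal{QA'}_{P'}(G)$, and together with $\mathcal{QA'}_{Q}(G)=\mathcal{QA'}_{P}(G)$ this yields $\mathcal{QA'}_{P}(G)=\mathcal{QA'}_{P'}(G)$ for every goal $G$, as required.
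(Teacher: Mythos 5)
Your proposal follows essentially the same route as the paper: the same decomposition into first adding $Unf_P(cl)$ (justified by Proposition~\ref{prop:equality}) and then deleting $cl$, the same use of Conditions~ii) and~iii) to show that any configuration on which $cl$ could fire also admits a firing of an unfolded rule with the same head and equivalent guard, and the same rearrange-and-collapse argument for the converse inclusion. The commutation/reordering lemma you flag as the main obligation is exactly what the paper establishes in its auxiliary Propositions~\ref{lemma:servcomplete}, \ref{propcambioequality} and~\ref{propavanza}, so your outline is correct and matches the paper's proof.
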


Of course, the previous result can be applied to a sequence of program transformations.
Let us define such a sequence as follows.

\begin{definition}[{\sc U-sequence}]\label{def:uno}
Let $P$ be an annotated CHR program. An \emph{U-sequence} of programs
starting from $P$ is a sequence of annotated CHR programs $P_0, \ldots,
P_n$, such that
\[
\begin{array}{lll}
  P_0 & = & P  \mbox{ and }\\
  P_{i+1}& = &  (P_i \setminus   \{cl^i \} ) \, \cup  Unf_{P_i}(cl^i), \\
\end{array}
 \]
 where $i \in [0,n-1]$, $cl^i \in P_i$ and can be safely replaced in $P_i$.
\end{definition}

\begin{example}\label{ex:safe2}
Let us to consider the program $P_1$ of Example~\ref{ex:safe1}. The clause $cl_2$ can be safely replaced in $P_1$ according to Definition~\ref{def:nsafedel} and then we obtain
\[\begin{array}{l}
  P_2  =  (P_1\setminus   \{cl_2\} ) \, \cup
   Unf_{P_1}(cl_2),
\end{array}
\]
where $P_2$ is the program

$$
\begin{small}
\begin{array}{lll}
\{&\hspace*{-0.2cm} r_1@p (X,Y, Z)  \Leftrightarrow  \hspace*{-0.2cm}
\begin{array}[t]{l}
r(b,b,Z)\#1, s(Z,b,a)\#2,s(V,U,W)\#5, r(U,U,V)\#6,
\\
X=V, U=f(Z), W=a;\emptyset
\end{array}\\
&\hspace*{-0.2cm} r_1@p (X,Y, Z)  \Leftrightarrow  \hspace*{-0.2cm}\begin{array}[t]{l} q(X, f(Z), a)\#3, r(g(X,b), f(a), f(Z))\#4, p(M,N,N)\#5, \\
 M=b, N=Z ;\emptyset
 \end{array}
 \\
& \hspace*{-0.2cm}r_1@p (X,Y, Z)  \Leftrightarrow \hspace*{-0.2cm}\begin{array}[t]{l}
r(b,b,Z)\#1, s(Z,b,a)\#2, q(X, f(Z), a)\#3, \\
 r(g(X,b), f(a), f(Z))\#4, I=L, Z=L, b=J, a=I;
\{cl_{r_4}@2\}
\end{array}\\
&\hspace*{-0.2cm} r_2@q(V,U,W), r(g(V,b), f(W), U)\Leftrightarrow W=a \mid p(M,N,N)\#3, V=N, U=M ;\emptyset\\
&\hspace*{-0.2cm} r_2@q(V,U,W), r(g(V,b), f(W), U)\Leftrightarrow W=a \mid \hspace*{-0.2cm}
\begin{array}[t]{l}s(V,U,W)\#1, r(U,U,V)\#2,  V=L, \\
 U=J, W=I , I=L;\{cl_{r_4}@1\}
\end{array}\\
&\hspace*{-0.2cm} r_3@r(M,M,N), s (N,M,a)\Leftrightarrow  p(M,N,N)\#1;\emptyset\\
&\hspace*{-0.2cm} r_4@s(L,J,I)\Rightarrow  I=L;\emptyset & \hspace*{-0.3cm} \}.
\end{array}
\end{small}
$$
\end{example}

Then, from  Theorem~\ref{theo:n1completeness} and
Proposition~\ref{prop:nequality}, we have the
following.

\begin{corollary}\label{lemma:ncompleteness}
Let $P$ be a program and let $P_0, \ldots, P_n$ be an U-sequence starting from $Ann(P)$. Then
$\mathcal{QA}_{P}(G)=\mathcal{QA'}_{P_n}(G)$ for any arbitrary goal
$G$.
\end{corollary}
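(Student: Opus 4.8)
The plan is to derive the result by combining the program/annotated-program equivalence of Proposition~\ref{prop:nequality} with a simple induction along the U-sequence, where each inductive step is an instance of Theorem~\ref{theo:n1completeness}. Since that theorem is the heavy machinery and is already available, the corollary itself requires only that the single-step equivalences compose. First I would fix an arbitrary goal $G$ and record the two ingredients. By Definition~\ref{def:uno}, a U-sequence starting from $Ann(P)$ has $P_0 = Ann(P)$, so Proposition~\ref{prop:nequality} gives $\mathcal{QA}_{P}(G) = \mathcal{QA'}_{Ann(P)}(G) = \mathcal{QA'}_{P_0}(G)$. It therefore suffices to show $\mathcal{QA'}_{P_0}(G) = \mathcal{QA'}_{P_n}(G)$.

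Next I would prove by induction on $i$ that $\mathcal{QA'}_{P_0}(G) = \mathcal{QA'}_{P_i}(G)$ for every $i \in [0,n]$. The base case $i=0$ is immediate. For the inductive step, assume the claim for $i$. By Definition~\ref{def:uno} there is a rule $cl^i \in P_i$ that can be safely replaced in $P_i$ (in the sense of Definition~\ref{def:nsafedel}), and $P_{i+1} = (P_i \setminus \{cl^i\}) \cup Unf_{P_i}(cl^i)$. This is exactly the hypothesis of Theorem~\ref{theo:n1completeness} with program $P_i$ and rule $cl^i$, whence $\mathcal{QA'}_{P_i}(G) = \mathcal{QA'}_{P_{i+1}}(G)$. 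Transitivity with the inductive hypothesis yields $\mathcal{QA'}_{P_0}(G) = \mathcal{QA'}_{P_{i+1}}(G)$, completing the induction. Taking $i=n$ and chaining with the equality from the first step gives $\mathcal{QA}_{P}(G) = \mathcal{QA'}_{P_n}(G)$, as required.

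The development contains no genuine obstacle, precisely because a single U-sequence step is by construction an instance of the safe replacement covered by Theorem~\ref{theo:n1completeness}, and all the equivalences hold for \emph{every} goal, so they compose freely. The only point that needs a line of justification is that each $P_i$ is a legitimate annotated CHR program, so that Theorem~\ref{theo:n1completeness} applies at every stage: this holds because $P_0 = Ann(P)$ is annotated, removing a rule preserves this property, and every rule in $Unf_{P_i}(cl^i)$ is annotated by Definition~\ref{def:unf}. I would carry this observation as an inductive invariant alongside the main induction.
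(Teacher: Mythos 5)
Your proposal is correct and follows exactly the route the paper intends: the paper derives this corollary directly from Theorem~\ref{theo:n1completeness} and Proposition~\ref{prop:nequality}, and your explicit induction along the U-sequence is just the spelled-out version of that combination. Nothing is missing.
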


\subsection{Confluence and Termination}\label{sec:confluence&termination}

In this section, we prove that our unfolding preserves termination
provided that one considers normal derivations. These
are the derivations in which the  \textbf{Solve} (\textbf{Solve'})
transitions are applied as soon as possible, as specified by Definition~\ref{def:ND}.
Moreover, we prove that our unfolding preserves also confluence, provided that one considers only non-recursive unfoldings.

We first need to introduce the concept of built-in
free configuration: This is a configuration which has no built-in constraints in the first component. \comment{or has an unsatisfiable built-in store.}

\begin{definition}[{\sc Built-in free configuration}]\label{def:BFS}
Let  $\sigma=\la G, S,D, T\ra_o\in {\it Conf_t}$ ($\sigma=\la  G , D, T\ra_o\in {\it Conf'_t}$).
The configuration $\sigma$ is built-in free  if
$G$ is a multiset of (identified) CHR-constraints.
\end{definition}

Now, we can introduce the concept of normal derivation.

\begin{definition}[{\sc Normal derivation}]\label{def:ND}
Let  $P$ be a (possibly annotated) CHR program  and let
$\delta$ be a derivation in $P$.
We say that $\delta$ is normal if, for each configuration  $\sigma$ in $\delta$, a transition
\textbf{Apply} (\textbf{Apply'})
is used on $\sigma$ only if $\sigma$ is built-in free.
\end{definition}

Note that, by definition, given a CHR program $P$, $\mathcal{QA}(P)$ can be calculated by considering only normal derivations and analogously for an annotated CHR program $P'$.

\begin{definition}[{\sc Normal Termination}]
A CHR program $P$ is called \emph{terminating}, if there are no infinite derivations.
A (possibly annotated) CHR program $P$ is called  \emph{normally terminating}, if there are no infinite normal derivations.
\end{definition}

The following result shows that normal termination is preserved by unfolding with the safe replacement condition. The proof is in the Appendix.

\begin{proposition}[{\sc Normal Termination}]\label{prop:termination} Let $P$ be a CHR program and
let $P_0, \ldots, P_n$ be an U-sequence starting from $Ann(P)$. $P$ satisfies
normal termination if and only if $P_n$  satisfies normal termination.
\end{proposition}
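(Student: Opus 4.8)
The plan is to reduce the statement to a single safe replacement and then to set up a two--way simulation between the normal derivations of the two programs. Since an \emph{U-sequence} is built by iterating safe replacements (Definition~\ref{def:uno}), by a straightforward induction on $n$ it suffices to prove that, whenever $cl$ can be safely replaced in a program $Q$ and $Q' = (Q\setminus\{cl\})\cup Unf_Q(cl)$, the program $Q$ admits an infinite normal derivation if and only if $Q'$ does. The remaining link, namely that $P$ is normally terminating iff $Ann(P)=P_0$ is, follows from the step--for--step correspondence between $\omega_t$- and $\omega'_t$-derivations that underlies Proposition~\ref{prop:nequality}: each \textbf{Apply} is mirrored by an \textbf{Apply'} (with the intervening \textbf{Introduce} steps absorbed), so infinite normal $\omega_t$-derivations correspond exactly to infinite normal $\omega'_t$-derivations.

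The basic observation driving the simulation is that one application of an unfolded rule $cl'\in Unf_Q(cl)$, obtained from $cl$ and some $cl_v$, performs in a single \textbf{Apply'} step exactly what $Q$ performs by applying $cl$ and then $cl_v$ to the matched body atoms (this is the content already exploited in the correctness proof of Proposition~\ref{prop:equality}, and it preserves the resulting configuration). The \emph{easy} direction uses this as an expansion: given an infinite normal derivation of $Q'$, I would replace every application of an unfolded rule by the corresponding consecutive pair of applications of $cl$ and $cl_v$ in $Q$, inserting the \textbf{Solve'} steps needed to keep the derivation normal. This yields a normal derivation of $Q$ with at least as many \textbf{Apply'} steps, hence infinite; so non-termination of $Q'$ implies non-termination of $Q$.

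For the converse I would proceed by a contraction argument, and this is where the safe--replacement hypotheses do the work. Starting from an infinite normal derivation $\delta$ of $Q$, I would build an infinite normal derivation $\delta'$ of $Q'$ by turning each application of $cl$ into one application of a suitable unfolded rule. Condition iii) of Definition~\ref{def:nsafedel} (equivalently $D''\theta=\emptyset$, as noted after the definition) guarantees that the guard of the unfolded rule is equivalent to that of $cl$, so the replacement step is enabled in $\delta'$ exactly where $cl$ is enabled in $\delta$. The emptiness of $U^{\#}_Q(cl)$ is the key ingredient: by Condition~\ref{ai} it excludes matchings that only become possible through run-time instantiation, and by Condition~\ref{bi} it excludes partial matchings that mix body atoms of $cl$ with other constraints. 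Consequently the atoms introduced by an application of $cl$ can be consumed only by a genuine unfolding match, and they are invisible to every other rule used in $\delta$. Hence collapsing each $cl$-step together with the (unique, complete) $cl_v$-step that later consumes its body atoms does not affect the applicability of any other transition of $\delta$, and the remaining \textbf{Apply'} steps still form an infinite normal derivation of $Q'$.

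The main obstacle is precisely the last paragraph: controlling the interleaving. In a normal derivation the body atoms produced by $cl$ need not be consumed immediately, other \textbf{Apply'} steps may occur in between, and these atoms may even persist forever. I would handle this by showing, using $U^{\#}_Q(cl)=\emptyset$, that any such intervening step is independent of the matched body atoms, so that the consuming $cl_v$-application can be commuted back to sit immediately after the $cl$-application (and, when the body atoms are never consumed in $\delta$, one may still fire an unfolded rule in $\delta'$, since $U^{+}_Q(cl)\neq\emptyset$ by condition ii), and discarding those atoms early cannot disable the rest of the derivation). Verifying that these commutations and merges preserve both normality and the infinitude of the derivation is the technical heart of the proof.
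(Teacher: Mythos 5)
Your proposal is correct and takes essentially the same route as the paper: reduce to a single safe replacement step by induction on the U-sequence, relate $P$ to $Ann(P)$ via the step-preserving simulation of Lemma~\ref{lemma:intermequiv}, and then transfer infinite normal derivations in both directions by expanding each unfolded-rule application into a $cl_r$/$cl_v$ pair (Proposition~\ref{prop:servequality}) and, conversely, contracting each $cl$-application together with the consuming application into a single unfolded step (Proposition~\ref{lemma:servcomplete}, with the commutation of intervening steps handled as in Propositions~\ref{propcambioequality} and~\ref{propavanza}). The paper's own proof is only a citation of these lemmas plus ``a straightforward inductive argument,'' and the interleaving/commutation issue you single out as the technical heart is exactly the content of those supporting propositions.
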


When standard termination is considered  rather than  normal termination,
the previous result does not hold, due to  the guard elimination in the unfolding. This is shown by the following example.

\begin{example}\label{ex:termination-problem}
Let us consider the following program:
\[
\begin{array}{lcll}
P&=\{&r_1@p(X)\Leftrightarrow X=a, q(X)&\\
&&r_2@q(Y)\Leftrightarrow Y=a \mid r(Y)&\\
&&r_3@r(Z)\Leftrightarrow Z=d \mid p(Z)&\}
\end{array}
\]
where we do not consider the identifiers and the token store in the
body of rules (because we do not have propagation rules in $P$).
Then, by using $$r_2@q(Y)\Leftrightarrow Y=a \mid r(Y)$$ to unfold $r_1@p(X)\Leftrightarrow  X=a, q(X)$ (with replacement) we obtain the following program $P'$:
\[
\begin{array}{lcll}
P'&=\{&r_1@p(X)\Leftrightarrow  X=a, X=Y, r(Y)&\\
&&r_2@q(Y)\Leftrightarrow Y=a \mid r(Y)&\\
&&r_3@r(Z)\Leftrightarrow Z=d \mid p(Z)&\}.
\end{array}
\]
It is easy to check that the program $P$ satisfies the (standard) termination.
On the other hand, considering the program $P'$ and the start goal $(V=d,p(V))$, the following state can be reached
$$\la  (X=a, p(Z)\#3),(V=d, V=X, X= Y, Y=Z), \emptyset \ra_3$$
where rules $r_1@p(X)\Leftrightarrow X=a, X=Y, r(Y)$ and $r_3@r(Z)\Leftrightarrow Z=d \mid p(Z)$ can be applied infinitely many times if the built-in constraint $X=a$ is not moved by the \textbf{Solve'} rule into the built-in store. Hence, we have non-termination.\end{example}

The  next property we consider is confluence. This property guarantees that any computation for a goal
results in the same final state, no matter which of the applicable
rules are applied \cite{AF04,Fru04}.

We first give the following definition which introduces some specific notation for renamings of indexes.

\begin{definition}\label{def:renaming}
Let $j_1,
\ldots, j_o$ be distinct identification values.
\begin{itemize}
  \item A renaming of identifiers is a
substitution of the form $[j_1/i_1, \ldots, j_o/i_o]$, where  $i_1, \ldots,
i_o$ is a permutation of $j_1, \ldots, j_o$.
  \item Given an expression $E$ and a renaming of identifiers $\rho= [j_1/i_1, \ldots, j_o/i_o]$, $E \rho$
  is defined as the expression obtained from $E$ by
substituting each occurrence of the identification value $j_l$
with the corresponding $i_l$, for $l \in [1,o]$
  \item If $\rho$ and $\rho'$ are renamings of identifiers, then $\rho \rho'$ denotes the renaming of identifiers such that for each expression $E$, $E (\rho \rho')= (E \rho)\rho'$.
\end{itemize}
 We will use $\rho, \
\rho', \ldots $ to denote renamings.
\end{definition}

Now, we need the
following definition introducing a form of equivalence between configurations, which is a slight modification of that one in \cite{RBF09},
since considers a different form of configuration  and, in particular, also the presence of the token store.
Two configurations are equivalent if they have the same logical reading and the same rules are applicable to these configurations with the same results. By an abuse of notation, when it is clear from the context, we will write $\equiv_V$ to denote two equivalence relations in
${\it Conf_t}$ and in ${\it Conf'_t}$ with the same meaning.

\begin{definition}\label{def:naltraeq}
Let $V$ be a set of variables
The equivalence $\equiv _V$ between configurations in ${\it Conf_t}$  is the smallest equivalence relation  that satisfies the following conditions.

\begin{itemize}
\item $\langle d \wedge G,   S,C, T\rangle_{n}\equiv _V \langle G,   S,d \wedge C, T\rangle_{n}$,

\item $\langle G,   S, X=t \wedge C, T\rangle_{n}\equiv _V \langle G [X/t], S [X/t],X=t \wedge C, T\rangle_{n}$,

\item Let $X,Y$ be variables such that $X,Y \not \in V$ and $Y$ does not occur in $G,   S$ or $c$.
$\langle G,   S, C, T\rangle_{n}\equiv _V \langle G [X/Y], S [X/Y], C[X/Y], T\rangle_{n}$,

\item If $W=Fv(C) \setminus (Fv(G,S) \cup V)$, $U=Fv(C') \setminus (Fv(G,S) \cup V)$, and
$CT \models \exists _W C  \leftrightarrow \exists _{U} C'$ then
$\langle G,   S,C, T\rangle_{n}\equiv _V \langle G,   S, C' , T\rangle_{n}$,

\item $\langle  G,
  S, {\tt false} , T\rangle_n \equiv _V \langle  G',
  S', {\tt false} , T'\rangle_m$,

\item $\langle G,   S,C, T\rangle_{n}\equiv _V \langle G,   S\rho ,C, T\rho \rangle_{m}$
for each renaming of identifiers $\rho$ such that for each $i \in  \mathit id (S\rho) \cup  \mathit  id (T\rho)$ we have that $i<m$,

\item
$\langle G,   S,C, T\rangle_{n}\equiv _V \langle G,   S,C,  clean(S, T)\rangle_{n}$.
 \end{itemize}

\end{definition}

We can define the equivalence $\equiv _V$ between configurations in ${\it Conf'_t}$ in an analogous way.
\begin{definition}\label{def:n1altraeq}
Let $V$ be a set of variables
The equivalence $\equiv _V$ between configurations in ${\it Conf'_t}$ is the smallest equivalence relation  that satisfies the following conditions.

\begin{itemize}
\item $\langle d \wedge G,  C, T\rangle_{n}\equiv _V \langle G, d \wedge C, T\rangle_{n}$,

\item $\langle G,  X=t \wedge C, T\rangle_{n}\equiv _V \langle G[X/t], X=t \wedge C, T\rangle_{n}$,

\item Let $X,Y$ be variables such that $X,Y \not \in V$ and $Y$ does not occur in $G$ or $c$.
$\langle G,   C, T\rangle_{n}\equiv _V \langle G [X/Y], C[X/Y], T\rangle_{n}$,

\item If $W=Fv(C) \setminus (Fv(G) \cup V)$, $U'=Fv(C') \setminus (Fv(G) \cup V)$, and
$CT \models \exists _W C  \leftrightarrow \exists _{u} C'$ then
$\langle G, C, T\rangle_{n}\equiv _V \langle G,  C' , T\rangle_{n}$,

\item $\langle  G, {\tt false} , T\rangle_n \equiv _V \langle  G',
  {\tt false} , T'\rangle_m$,

\item $\langle G, C, T\rangle_{n}\equiv _V \langle G\rho ,C, T\rho \rangle_{m}$
for each renaming of identifiers $\rho$ such that for each $i \in  \mathit id (G\rho) \cup  \mathit id (T\rho)$
we have that $i\leq m$,

\item
$\langle G,C, T\rangle_{n}\equiv _V \langle G,C,  clean( G, T)\rangle_{n}$.
 \end{itemize}
\end{definition}

By definition of $\equiv_V$, it is straightforward to check that
if $\sigma, \sigma' \in {\it Conf_t} ({\it Conf'_t})$, $V$ is a set of variables, and $\sigma \equiv _V \sigma'$ then the following holds
\begin{itemize}
\item if $W \subseteq V$ then $\sigma \equiv _W \sigma'$ and
\item if $X \not \in Fv(\sigma)\cup Fv(\sigma')$ then $\sigma \equiv _{V\cup \{X\}} \sigma'$.
\end{itemize}
\comment{\begin{definition}\label{def:vecchiaaltraeq}
Let
$\sigma=$\langle G,   S,c, T\rangle_{n}$, \sigma_2 \in {\it Conf_t} ({\it Conf'_t})$ and let $V$ be a set of variables. $\sigma_1\equiv_{V}\,\sigma_2$ if at least one of the following holds:
\begin{itemize}
  \item $\sigma_1$ and $\sigma_2$ are both failed configurations;
  \item $\sigma_1$ and $\sigma_2$ are identical up to renaming of variables not in $V$ and of identifiers, up to logical equivalence of built-in constraints and up to cleaning of the token store (that is, up to deleting from the token store all the tokens for
which at least one identifier is not present in the set of
identified CHR constraints, see Definition~\ref{def:clean}). \end{itemize}
\end{definition}
}
We now introduce the concept of confluence which is a slight modification of that one in \cite{RBF09}, since it considers also the cleaning of the token store.

In the following $\mapsto^{*}$ means either $\rrarrow_{\omega_t}^{*}$ or $\rrarrow_{\omega'_t}^{*}$.

\begin{definition}[{\sc Confluence}]\label{def:Conf}
A CHR [annotated] program is \emph{confluent} if for any state $\sigma$ the following holds:
if $\sigma\mapsto^{*} \sigma_1$ and $\sigma\mapsto^{*} \sigma_2$ then there exist states $\sigma_f'$
and $\sigma_f''$ such that $\sigma_1 \mapsto^{*}\sigma_f'$ and $\sigma_2\mapsto^{*}\sigma_f''$, where $\sigma_f'\equiv_{Fv(\sigma)}\,\sigma_f''$.
\end{definition}

Now, we prove that our unfolding preserves confluence, provided that one considers only non-recursive unfolding. These
are the unfoldings such that a clause $cl$ cannot be used in order to unfold  $cl$  itself.

When safe rule replacement is considered  rather than  non-recursive safe rule replacement (see Definition~\ref{def:nrsafedel}),
the confluence is not preserved. This is shown by the following example.

\begin{example}\label{ex:cofluence-problem}
Let us consider the following program:
\[
\begin{array}{lcll}
P&=\{&r_1@p\Leftrightarrow q&\\
&&r_2@p\Leftrightarrow r&\\
&&r_3@r\Leftrightarrow r,s&\\
&&r_4@q\Leftrightarrow r,s&\}
\end{array}
\]
where we do not consider the identifiers and the token store in the
body of rules (because we do not have propagation rules in $P$).
Then, by using $r_3$ to unfold $r_3$ itself (with safe rule replacement) we obtain the following program $P'$:
\[
\begin{array}{lcll}
P'&=\{&r_1@p\Leftrightarrow q&\\
&&r_2@p\Leftrightarrow r&\\
&&r_3@r\Leftrightarrow r,s,s&\\
&&r_4@q\Leftrightarrow r,s&\}.
\end{array}
\]
It is easy to check that the program $P$ is confluent.
On the other hand, considering the program $P'$ and the start goal $p$, the following  two states can be reached
$$\sigma=\la  (r\#3, s\#4, s\#5),\texttt{true}, \emptyset \ra_5 \mbox{ and } \sigma'=\la  (r\#3, s\#4),\texttt{true}, \emptyset \ra_4$$
and
there exist no states $\sigma_1$
and $\sigma_1'$ such that $\sigma \rrarrow_{\omega'_t}^{*}\sigma_1$ and $\sigma'\rrarrow_{\omega'_t}^{*}\sigma_1'$ in $P'$, where $\sigma_1\equiv_{\emptyset}\,\sigma_1'$.\end{example}

Note that the program in previous example is not terminating. 
We cannot consider a terminating program here, since for such a program (weak) safe rule replacement would allow to preserve confluence.
Now, we give the definition of non-recursive safe rule replacement.

\begin{definition}[{\sc Non-recursive safe rule replacement}]\label{def:nrsafedel}
Let $P$ be an annotated CHR program and let $cl_r \in P$ be an annotated rule such that
$cl_r$
can be safely replaced (by its unfolded version) in $P$.
We say that $cl_r$
can be non-recursively safely replaced (by its unfolded version) in $P$ if
for each $(cl_v, (i_1, \ldots, i_n)) \in$ $ U^{+}_P(cl_r )$, we have that
$cl_v \neq cl_r$.
\end{definition}

The following is the analogous of Definition~\ref{def:uno}, where non-recursive safe rule replacement is considered.

\begin{definition}[{\sc NRU-sequence}]\label{def:nruno}
Let $P$ be an annotated CHR program. An \emph{NRU-sequence} of programs
starting from $P$ is a sequence of annotated CHR programs $P_0, \ldots,
P_n$, such that
\[
\begin{array}{lll}
  P_0 & = & P  \mbox{ and }\\
  P_{i+1}& = &  (P_i \setminus   \{cl^i \} ) \, \cup  Unf_{P_i}(cl^i), \\
\end{array}
 \]
 where $i \in [0,n-1]$, $cl^i \in P_i$ and can be non-recursively safely replaced in $P_i$.
\end{definition}

\begin{theorem} Let $P$ be a CHR program and let $P_0, \ldots, P_n$
be an NRU-sequence starting from $P_0= Ann(P)$. $P$ satisfies confluence if and only if $P_n$
satisfies confluence too.
\end{theorem}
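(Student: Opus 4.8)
The plan is to prove the statement in two stages: first relate the confluence of the CHR program $P$ under $\omega_t$ to the confluence of its annotated version $P_0 = Ann(P)$ under $\omega'_t$, and then show that a single step of non-recursive safe replacement preserves confluence, so that the claim for the whole NRU-sequence $P_0,\ldots,P_n$ follows by induction on $n$. For the first stage I would reuse the correspondence between $\rrarrow_{\omega_t}$ and $\rrarrow_{\omega'_t}$ that already underlies Proposition~\ref{prop:nequality}: that correspondence maps derivations of $P$ to derivations of $Ann(P)$ step-by-step (folding the \textbf{Introduce} moves into the labelled store) and respects the equivalences $\equiv_V$ of Definitions~\ref{def:naltraeq} and~\ref{def:n1altraeq}, so joinability of two states in one system transfers to the other. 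Hence $P$ is confluent iff $P_0$ is, and it remains to treat the unfolding steps in $\omega'_t$.

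For the inductive core I would fix an annotated program $P$, a rule $cl_r$ that can be non-recursively safely replaced, and $P' = (P\setminus\{cl_r\})\cup Unf_P(cl_r)$, and prove $P$ confluent iff $P'$ confluent. The key simplification comes from the safe-replacement conditions of Definition~\ref{def:nsafedel}: by condition iii) and the remark following it, every unfolded rule $cl'_r\in Unf_P(cl_r)$ has $D''\theta=\emptyset$, hence the \emph{same head} $H_1\backslash H_2$ and a guard equivalent to $D$. Therefore $cl'_r$ is applicable in a configuration exactly when $cl_r$ is, and by Definition~\ref{def:unf} its effect equals that of firing $cl_r$ and then, on the atoms $S_1,S_2$ produced in the body, the corresponding rule $cl_v$ with $(cl_v,\cdot)\in U^+_P(cl_r)$. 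This gives a tight step correspondence: every $\rrarrow_{\omega'_t}$ step of $P'$ that uses a rule shared with $P$ is literally a step of $P$, and every step using an unfolded $cl'_r$ is simulated, up to $\equiv_V$, by exactly two steps of $P$ (first $cl_r$, then $cl_v$). Conversely, because $U^{\#}_P(cl_r)=\emptyset$ rules out partial matchings and multiple-head mismatches (the situations of Examples~\ref{ex:matching} and~\ref{ex:unicatesta}) and $U^+_P(cl_r)\neq\emptyset$, every application of $cl_r$ in $P$ can be mirrored in $P'$ by the matching $cl'_r$, whose extra effect is a single application of $cl_v$ --- a rule present in $P$ as well.

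Using this correspondence I would transfer confluence in both directions. For $P$ confluent $\Rightarrow$ $P'$ confluent I take $\sigma\rrarrow^*_{\omega'_t}\sigma_1$ and $\sigma\rrarrow^*_{\omega'_t}\sigma_2$ in $P'$, unfold each step into the corresponding one or two $P$-steps to obtain $\sigma\rrarrow^*_{\omega'_t}\sigma_1$ and $\sigma\rrarrow^*_{\omega'_t}\sigma_2$ in $P$ (up to $\equiv_V$), apply $P$-confluence to join them at $\rho_1\equiv_{Fv(\sigma)}\rho_2$, and then replay the joining $P$-derivations in $P'$: each use of $cl_r$ is replaced by the applicable $cl'_r$, and the single pending $cl_v$-application that each such replacement anticipates can be matched on the other branch because $\rho_1\equiv_{Fv(\sigma)}\rho_2$ forces the corresponding $cl_r$-produced atoms to coincide up to $\equiv_V$, so the two $P'$-derivations reach states that are again $\equiv_{Fv(\sigma)}$-equivalent. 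The converse implication is symmetric, reading the step correspondence the other way. Throughout I would rely on standard preservation lemmas for $\equiv_V$ (that $\equiv_V$ is a bisimulation for $\rrarrow_{\omega'_t}$, which follows directly from Definition~\ref{def:n1altraeq} and the shape of the transition rules) to keep the joined configurations equivalent.

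The main obstacle is this backward translation, i.e.\ replaying a $P$-derivation in $P'$ when $cl_r$ has been applied but its matching $cl_v$ is not fired next (or ever) in that derivation. Since $P'$ lacks $cl_r$ as a standalone rule, firing $cl'_r$ unavoidably also performs one $cl_v$ step, so the $P'$-run runs ``one $cl_v$ ahead'' of the $P$-run. I would argue that this gap is harmless precisely because the replacement is \emph{non-recursive}: as $cl_v\neq cl_r$, the anticipated $cl_v$ step cannot re-create the pattern that $cl'_r$ consumes, so the bundled $cl_v$-applications never accumulate into a systematic offset between the two branches --- exactly the offset (even versus odd number of copies of $s$) that destroys confluence in Example~\ref{ex:cofluence-problem}. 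Making this ``no drift'' argument precise --- quantifying the pending $cl_v$-applications attached to a state, showing they are determined up to $\equiv_V$ by the atom store, and concluding that $\equiv_{Fv(\sigma)}$-equivalent $P$-states yield $\equiv_{Fv(\sigma)}$-equivalent $P'$-states after completion --- is the technical heart of the proof and the step where the non-recursivity hypothesis is indispensable.
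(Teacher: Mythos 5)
Your skeleton is the paper's: reduce to the annotated setting via the step correspondence of Lemma~\ref{lemma:intermequiv}, induct over the NRU-sequence, and for one replacement step simulate an application of an unfolded rule by an application of $cl_r$ followed by (at most) one application of $cl_v$ (Proposition~\ref{prop:servequality}), and conversely package a $cl_r$-application whose body atoms are consumed by $cl_v$ into a single application of the unfolded rule (Proposition~\ref{lemma:servcomplete}). You also correctly isolate the real difficulty: after joining the two branches by confluence of $P$, the joining derivations may contain applications of $cl_r$ whose body is never rewritten, and such ``dangling'' applications have no counterpart in $P'$.

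The gap is in how you propose to discharge that difficulty. Non-recursiveness ($cl_v\neq cl_r$ for every $(cl_v,\cdot)\in U^{+}_P(cl_r)$) does \emph{not} imply that ``the anticipated $cl_v$ step cannot re-create the pattern that $cl'_r$ consumes'': the body of $cl_v$ may well contain atoms matching the head of $cl_r$, so your ``no drift'' heuristic rests on a property the hypothesis does not give you, and your programme of quantifying pending $cl_v$-applications has no evident termination argument. What the hypotheses actually provide, and what the paper uses, is a terminating \emph{completion}: by conditions ii)--iii) of Definition~\ref{def:nsafedel} there is a rule $cl_v$ whose guard and matching are already entailed after firing $cl_r$, so $cl_v$ can be applied to the body of the last dangling $cl_r$-application; since the two end states are $\equiv_{Fv(\sigma)}$-equivalent this extension can be mirrored on the other branch while preserving the equivalence; and since $cl_v\neq cl_r$ the new step is not itself a $cl_r$-application, so the number $l$ of dangling $cl_r$-applications strictly decreases. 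Induction on $l$ reaches derivations in which every $cl_r$-application is paired with a body-consuming step and hence, by Proposition~\ref{lemma:servcomplete}, replaceable by an unfolded rule, yielding genuine $P'$-derivations. With $cl_v=cl_r$ the completion step would create a fresh dangling application and $l$ would not decrease --- that, not ``pattern re-creation'', is exactly the failure exhibited by Example~\ref{ex:cofluence-problem}. Without this decreasing measure your argument does not close.
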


\section{Weak safe rule replacement}\label{sec:normally -rep}
In this subsection, we consider only programs which are normally terminating and confluent.
For this class of programs we give a condition for rule replacement which is much weaker than that one used in the previous section and which still allows one to preserve the qualified answers semantics. Intuitively this new condition requires that there exists a rule obtained by the unfolding of $cl_r$ in $P$ whose guard is equivalent to that one in $cl_r$.

\begin{definition}[{\sc Weak safe rule replacement}]\label{def:wsafedel}
Let $P$ be an annotated CHR program and let $r@H_1\backslash H_2
\Leftrightarrow  D\,|\,  A; T \in P$ be a rule such that there exists
$$r@ H_1\backslash H_2 \Leftrightarrow D'\, |\,   A'; T' \in
   Unf_P (r@H_1\backslash H_2 \Leftrightarrow D\,|\,  A; T)$$
with $\mathcal{CT}  \models D \leftrightarrow D'$.

Then, we say that the rule $r@H_1\backslash H_2 \Leftrightarrow  D\,|\,  A; T $
can be weakly safely replaced (by its unfolded version) in $P$.
\end{definition}

%
%
%
%
%

\begin{example}\label{ex:ncofluence}
Let us consider the following program $P$:
\[
\begin{array}{lcll}
P_1&=\{&r_1@p(X)\Leftrightarrow q(X),s(X)&\\
&&r_2@t(a)\Leftrightarrow r(b)&\\
&&r_3@q(Y)\Leftrightarrow t(Y)&\\
&&r_4@s(a)\setminus q(a)\Leftrightarrow r(b)&\}
\end{array}
\]
where we do not consider the identifiers and the token store in the
body of rules (because we do not have propagation rules in $P$).
By Definition \ref{def:wsafedel}, $r_1$ can be weakly safely replaced (by its unfolded version) in $P$ and then we can obtain the program
\[P_1  =  (P\setminus   \{r_1\} ) \, \cup \,
   Unf_{P}(r_1),\]
   where $P_1$ is the program
\[
\begin{array}{lcll}
P_1&=\{&r_1@p(X)\Leftrightarrow s(X), t(Y),X=Y &\\
&&r_2@t(a)\Leftrightarrow r(b)&\\
&&r_3@q(Y)\Leftrightarrow t(Y)&\\
&&r_4@s(a)\setminus q(a)\Leftrightarrow r(b)&\}.
\end{array}
\]
Finally, observe that $r_1$ cannot be safely replaced (by its unfolded version) in $P$.
\end{example}

The following proposition shows that normal termination and confluence
are preserved by weak safe rule replacement. The proof is in the Appendix.

\begin{proposition}\label{prop:wterm}
Let $P$ be an annotated CHR program and let $cl \in P$ such that $cl $
can be weakly safely replaced in $P$. Moreover let
\[P'  =  (P\setminus   \{cl\} ) \, \cup \,
   Unf_{P}(cl).\]
If $P$ is normally terminating then $P'$ is normally terminating. If $P$ is normally terminating and confluent then $P'$ is confluent too.
\end{proposition}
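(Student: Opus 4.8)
The plan is to prove both claims by comparing \emph{derivations} in $P'$ with derivations in $P$, exploiting the fact that every rule of $P'$ can be simulated in $P$. Write $cl = r@H_1\backslash H_2 \Leftrightarrow D \mid A; T$ and let $\widehat{cl}\in Unf_P(cl)$ be the witnessing unfolded rule (which exists by Definition~\ref{def:wsafedel}) whose guard $D'$ satisfies $\mathcal{CT}\models D\leftrightarrow D'$. By the same reasoning as in the remark following Definition~\ref{def:nsafedel}, the equivalence $D'\equiv D$ forces the extra guard $D''\theta$ of $\widehat{cl}$ to be empty, so $\widehat{cl}$ has exactly head $H_1\backslash H_2$, guard $D$ and identifier $r$, differing from $cl$ only in its body and local token store. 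Since the applicability of a rule in an \textbf{Apply'} step depends only on its head, its guard and the token condition $r@id(H_1,H_2)\notin T$, it follows that $cl$ is applicable to a configuration (in $P$) if and only if $\widehat{cl}$ is applicable to it (in $P'$). Combining this with the facts that every rule of $Unf_P(cl)$ shares the head $H_1\backslash H_2$ and a guard entailing $D$, and that the rules of $P\setminus\{cl\}$ are common to both programs, I obtain the key structural fact: \emph{a configuration is final for $P$ if and only if it is final for $P'$}.

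The second ingredient is the forward simulation lemma that I would extract from the appendix proof of Proposition~\ref{prop:equality}: every single step in $P'$ can be reproduced in $P$ up to $\equiv_V$. A step using a rule of $P\setminus\{cl\}$ is copied verbatim, while a step using the unfolding of $cl$ with some $cl_v$ is reproduced by first applying $cl$ and then applying $cl_v$ to the constraints $S_1,S_2$ of its body, possibly interleaved with \textbf{Solve'} steps. The matching and the guard of $cl_v$ are guaranteed to hold at run time exactly because the unfolded rule exists, so its static unfolding conditions (Definition~\ref{def:unf}) were met and the built-in store reached is at least as strong. As $\equiv_V$ preserves the applicable rules together with their results, this single-step simulation composes, yielding: if $\sigma\rrarrow^*_{\omega'_t}\tau$ in $P'$ then $\sigma\rrarrow^*_{\omega'_t}\tau'$ in $P$ for some $\tau'\equiv_V\tau$.

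For normal termination, suppose $P'$ admitted an infinite normal derivation. Between two consecutive \textbf{Apply'} steps only finitely many \textbf{Solve'} steps can occur, because each \textbf{Solve'} removes one built-in constraint from the store and adds none; hence the derivation contains infinitely many \textbf{Apply'} steps. Running the simulation step by step, and keeping the constructed derivation normal by solving the built-ins introduced by $cl$ before applying $cl_v$, produces a normal derivation in $P$ in which each of the infinitely many \textbf{Apply'} steps of $P'$ contributes at least one \textbf{Apply'} step. This is an infinite normal derivation of $P$, contradicting the normal termination of $P$.

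For confluence, assume $P$ is normally terminating and confluent and take $\sigma\rrarrow^*_{\omega'_t}\sigma_1$ and $\sigma\rrarrow^*_{\omega'_t}\sigma_2$ in $P'$. By the normal termination of $P'$ just established, a maximal normal derivation drives $\sigma_1$ to a final configuration $\tau_1$ of $P'$, and likewise $\sigma_2$ to $\tau_2$ (failed configurations need no work, being all $\equiv_V$-equivalent). Applying the forward simulation to $\sigma\rrarrow^*_{\omega'_t}\tau_1$ and $\sigma\rrarrow^*_{\omega'_t}\tau_2$ gives derivations $\sigma\rrarrow^*_{\omega'_t}\tau_1'$ and $\sigma\rrarrow^*_{\omega'_t}\tau_2'$ in $P$ with $\tau_i'\equiv_{Fv(\sigma)}\tau_i$; since $\tau_i$ is $P'$-final and $\equiv_V$ preserves finality, $\tau_i'$ is $P'$-final, hence $P$-final by the structural fact of the first paragraph. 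Confluence of $P$ applied to these two $P$-derivations ending in final configurations forces $\tau_1'\equiv_{Fv(\sigma)}\tau_2'$ (final configurations admit only the trivial continuation), and therefore $\tau_1\equiv_{Fv(\sigma)}\tau_2$ by transitivity of $\equiv_V$, which exhibits the required joining of $\sigma_1$ and $\sigma_2$ inside $P'$. I expect the main obstacle to be the forward simulation lemma: making the replacement of one unfolded step by the pair $cl,cl_v$ fully rigorous while controlling the fresh identifiers (through $inst$, Definition~\ref{definst}) and the token store (through $clean$, Definition~\ref{def:clean}, and the two cases $H_2'=\epsilon$ and $H_2'\neq\epsilon$ of Definition~\ref{def:unf}), so that the simulating $P$-configuration is genuinely $\equiv_V$-equivalent to the $P'$-configuration; this bookkeeping is the one already carried out for Proposition~\ref{prop:equality}, which I would reuse, whereas the genuinely new point is the coincidence of final configurations, where the weak safe replacement hypothesis — the existence of $\widehat{cl}$ with guard exactly $D$ — is indispensable.
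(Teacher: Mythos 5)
Your proposal is essentially sound and rests on the same two pillars as the paper's proof: the forward simulation of a step of an unfolded rule by a $cl$-step followed by a $cl_v$-step (Proposition~\ref{prop:servequality}), and the observation that the witnessing rule $\widehat{cl}\in Unf_P(cl)$ --- with head $H_1\backslash H_2$, identifier $r$ and guard equivalent to $D$ --- makes $cl$ applicable to a configuration exactly when $\widehat{cl}$ is, so that replacing $cl$ by its unfoldings cannot create new final configurations. The organization, however, is genuinely different. The paper never compares $P$ and $P'$ directly: it interposes $P''=P\cup Unf_P(cl)$, transfers normal termination and confluence from $P$ to $P''$ by simulation, and then passes from $P''$ to $P'$ via the trivial inclusion $P'\subseteq P''$ plus the finality argument (a $P'$-final configuration to which $cl$, the only rule of $P''\setminus P'$, applies would also admit $\widehat{cl}$). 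Crucially, the paper phrases confluence throughout via Lemma~\ref{conflnormterm}, which for normally terminating programs reduces confluence to the equivalence of the endpoints of \emph{maximal normal} derivations; this is precisely what licenses the use of Proposition~\ref{prop:servequality}, which is only proved for built-in free configurations. You collapse the two stages into a single direct $P'$-to-$P$ simulation and work with Definition~\ref{def:Conf} verbatim, which shortens the argument at the price of a heavier simulation obligation.

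That price is the one real gap: in the confluence part you feed the \emph{whole} derivations $\sigma\rrarrow^{*}\sigma_i\rrarrow^{*}\tau_i$ into the forward simulation, but the prefixes $\sigma\rrarrow^{*}\sigma_i$ are arbitrary, so they may fire rules (including unfolded ones) on configurations that are not built-in free; Proposition~\ref{prop:servequality} does not cover that case. Moreover the clause $\langle d\wedge G,C,T\rangle\equiv_V\langle G,d\wedge C,T\rangle$ shows that $\equiv_V$ does not literally preserve the set of applicable rules (nor finality), only modulo interposed \textbf{Solve'} steps, so your claims that ``the single-step simulation composes'' and that ``$\equiv_V$ preserves finality'' need to be restricted. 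The fix is exactly the paper's Lemma~\ref{conflnormterm}: first replace the given derivations by maximal \emph{normal} ones reaching $\simeq$-equivalent final configurations, and only then simulate; for the built-in free configurations that the simulation then produces, $\simeq$ (and hence $\equiv_V$) does preserve applicability and finality, and the rest of your argument goes through unchanged.
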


The converse of the previous theorem does not hold, as shown by the following example.
\begin{example}\label{ex:wconfluence-problem}
Let us consider the following program:
\[
\begin{array}{lcll}
P&=\{&r_1@p(X)\Leftrightarrow q(X)&\\
&&r_2@q(a)\Leftrightarrow p(a)&\\
&&r_3@q(Y)\Leftrightarrow r(Y)&\}
\end{array}
\]
where we do not consider the identifiers and the token store in the
body of rules (because we do not have propagation rules in $P$).
Then, by using $r_3$ to unfold $r_1$ itself (with weak safe rule replacement) we obtain the following program $P'$:
\[
\begin{array}{lcll}
P'&=\{&r_1@p(X)\Leftrightarrow X=Y, r(Y)&\\
&&r_2@q(a)\Leftrightarrow p(a)&\\
&&r_3@q(Y)\Leftrightarrow r(Y).&\}
\end{array}
\]
It is easy to check that the program $P'$ satisfies the (normal) termination.
On the other hand, considering the program $P$ and the start goal $p(a)$, the following state can be reached
$$\la  (p(a)\#3),(X=a), \emptyset \ra_3$$
where rules $r_1@p(X)\Leftrightarrow q(X)$ and $r_2@q(a)\Leftrightarrow p(a)$ in $P$ can be applied infinitely many times.
Hence, we have non-(normally)termination.\end{example}

Next, we show that weak safe rule replacement transformation preserves qualified answers.

\begin{theorem}\label{prop:wqualified}
Let $P$ be a normally terminating and confluent annotated program and let $cl$ be a rule in $P$ such that
$cl $ can be weakly safely replaced
in $P$ according to Definition~\ref{def:wsafedel}. Assume also that
\[P'  =  (P\setminus   \{cl\} ) \, \cup \,
   Unf_{P}(cl).
\]
Then $\mathcal{QA'}_{P}(G)=\mathcal{QA'}_{P'}(G)$ for any arbitrary
goal $G$.
\end{theorem}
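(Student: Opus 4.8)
The plan is to route everything through the auxiliary program $P''=P\cup Unf_P(cl)$, which only \emph{adds} the unfolded rules without deleting $cl$, and then to use confluence and normal termination to transport answers between $P$, $P''$ and $P'$. First I would show that adding the unfolded rules preserves qualified answers: writing $Unf_P(cl)=\{cl'_1,\ldots,cl'_k\}$, each $cl'_j$ is the unfolding of $cl$ with respect to some $cl_{v_j}\in P$, and since $cl$ and $cl_{v_j}$ remain present in every intermediate program, a $k$-fold application of Proposition~\ref{prop:equality} gives $\mathcal{QA'}_{P''}(G)=\mathcal{QA'}_P(G)$ for every goal $G$. Because $cl$ is weakly safely replaceable, $Unf_P(cl)$ is nonempty and contains a distinguished rule $\hat{cl}$ with the same head $H_1\backslash H_2$ as $cl$ and a guard $D'$ satisfying $\mathcal{CT}\models D\leftrightarrow D'$ (Definition~\ref{def:wsafedel}).

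The technical heart is a bidirectional step simulation showing that the transition relations of $P$ and of $P''$ have the same reflexive–transitive closure up to $\equiv$. Since $P\subseteq P''$, every $P$-step is a $P''$-step. Conversely, each $P''$-step either uses a rule of $P$ (one $P$-step) or uses an unfolded rule $cl'_j$, in which case it is matched, up to $\equiv_{Fv(\cdot)}$, by the two $P$-steps ``apply $cl$, then apply $cl_{v_j}$''; this is precisely the rule-level correspondence underlying the proof of Proposition~\ref{prop:equality}, together with the token bookkeeping performed by $inst$, $clean$ and the construction of Definition~\ref{def:unf}. Hence the single-step relation of $P''$ is contained in $\rrarrow_{\omega'_t}^{*}$ of $P$ and contains $\rrarrow_{\omega'_t}$ of $P$, so the two closures coincide modulo $\equiv$. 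As $\equiv$ is respected by transitions, confluence and the absence of infinite normal derivations transfer verbatim from $P$ to $P''$. Thus $P''$ is confluent and normally terminating, and consequently each goal $G$ admits a \emph{unique} normal form $\sigma''_{nf}$ up to $\equiv_{Fv(G)}$, whose associated answer (or failure) is the single element of $\mathcal{QA'}_{P''}(G)=\mathcal{QA'}_P(G)$.

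Finally I would close the loop through $P'$. By Proposition~\ref{prop:wterm}, $P'$ is confluent and normally terminating, so the initial configuration $\langle I(G),\texttt{true},\emptyset\rangle$ has a normal form $\sigma'_{nf}$, unique up to $\equiv_{Fv(G)}$. I claim $\sigma'_{nf}$ is also a $P''$-normal form: no rule of $P'\subseteq P''$ applies to it by construction, and $cl$ (the only rule of $P''\setminus P'$) cannot apply either, for if $cl$ fired on $\sigma'_{nf}$ then $\hat{cl}\in Unf_P(cl)\subseteq P'$ would fire too — it has the same head, hence the same head-matching and propagation-token condition (both rules carry the identifier $r$), and its guard $D'$ is entailed because $\mathcal{CT}\models D\leftrightarrow D'$ — contradicting $P'$-normality. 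Since $P'\subseteq P''$ we also have $\langle I(G),\texttt{true},\emptyset\rangle\rrarrow_{\omega'_t}^{*}\sigma'_{nf}$ in $P''$, so by uniqueness of $P''$-normal forms $\sigma'_{nf}\equiv_{Fv(G)}\sigma''_{nf}$. Reading a qualified answer off a configuration is invariant under $\equiv_{Fv(G)}$, and all failed configurations are $\equiv$-identified; hence $\sigma'_{nf}$ is successful exactly when $\sigma''_{nf}$ is, and then yields the same answer. Therefore $\mathcal{QA'}_{P'}(G)=\mathcal{QA'}_{P''}(G)=\mathcal{QA'}_P(G)$, as required.

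The main obstacle is the step simulation of the second paragraph: making the equation ``a $cl'_j$-step equals a $cl$-step followed by a $cl_{v_j}$-step, up to $\equiv$'' fully precise. The delicate points are the interleaving of \textbf{Solve'} steps required to keep the simulating $P$-derivation \emph{normal}, and the exact treatment of the identifiers and of the local token store produced by $inst$ and $clean$. This is exactly where one must rely on the internal correspondence developed in the proof of Proposition~\ref{prop:equality} rather than on its statement alone; everything else then follows from the standard fact that a confluent and normally terminating program has a unique normal form per goal up to $\equiv$.
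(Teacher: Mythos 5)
Your proposal is correct and follows essentially the same route as the paper: both pass through $P''=P\cup Unf_P(cl)$ (handled by Proposition~\ref{prop:equality}), both rely on Proposition~\ref{prop:wterm} to get normal termination and confluence of $P'$ and $P''$, and both use the distinguished unfolded rule with an equivalent guard and the same head to argue that a configuration stuck in $P'$ is also stuck in $P''$. The only cosmetic difference is that you package the two set inclusions into a single uniqueness-of-normal-forms argument, whereas the paper proves them separately (the second by contradiction via Lemma~\ref{conflnormterm}); the mathematical content is the same.
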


\begin{proof}
Analogously to Theorem~\ref{theo:n1completeness}, by using Proposition~\ref{prop:nequality}  we can prove that
$\mathcal{QA'}_{P}(G)=\mathcal{QA'}_{P''}(G)$ where
 \[
  P'' = P \, \cup \, Unf_{P}(cl),
\]
for any arbitrary goal $G$.

Then, to prove the thesis, we have only to prove that
\[\mathcal{QA'}_{P'}(G) = \mathcal{QA'}_{P''}(G).\]
We prove the two inclusions separately.
\begin{description}
  \item[{\bf ($\mathcal{QA'}_{P'}(G) \subseteq \mathcal{QA'}_{P''}(G)$) }]
  The proof is the same of the case $\mathcal{QA'}_{P'}(G) \subseteq \mathcal{QA'}_{P''}(G)$ of Theorem~\ref{theo:n1completeness} and hence it is omitted.

  \item[{\bf ($\mathcal{QA'}_{P''}(G) \subseteq \mathcal{QA'}_{P'}(G)$) }]
  The proof is by contradiction. Assume that there exists $Q \in \mathcal{QA'}_{P''}(G) \setminus \mathcal{QA'}_{P'}(G)$. Since, from the proof of Proposition~\ref{prop:wterm}, we can conclude that $P''$ is normally terminating and confluent, we have that $\mathcal{QA'}_{P''}(G)$ is a singleton. Moreover, since by the previous point
  $\mathcal{QA'}_{P'}(G) \subseteq \mathcal{QA'}_{P''}(G)$, we have that $\mathcal{QA'}_{P'}(G)= \emptyset$.
  This means that each normal derivation in $P'$ either is not terminating or terminates with a failed configuration. Then, by using Proposition~\ref{prop:wterm}, we have that each normal derivation in $P'$ terminates with a failed configuration. Since $P' \subseteq P''$, we have that there exist normal derivations in $P''$ which terminate with a failed configuration. Then, by Lemma~\ref{conflnormterm} and since $Q \in \mathcal{QA'}_{P''}(G)$, we have a contradiction and then the thesis holds.
\end{description}
\end{proof}

Let $cl$ be the rule $r@H_1\backslash H_2 \Leftrightarrow D\,|\,  A; T$.
Note that Proposition~\ref{prop:wterm} and Theorem~\ref{prop:wqualified} hold also if \[P'=
 (P\setminus   \{cl\} ) \, \cup \, S,\]
 where $S \subseteq Unf_{P}(cl)$  and there exists
$cl'=r@H_1\backslash H_2 \Leftrightarrow D'\,|\,  A'; T'
\in S$  such that  $\mathcal{CT}  \models D \leftrightarrow D'$.

\comment{Analogously if $cl^i$ is the rule $r@H_1\backslash H_2 \Leftrightarrow D\,|\,  A; T$ then the following Corollary~\ref{lemma:wcompleteness} holds also if
\[P_{i+1}=  (P_i \setminus   \{cl^i\}) \, \cup S_i\]
where $S_i \subseteq Unf_{P_i}(cl^i)$  and there exists
$ r@H_1\backslash H_2 \Leftrightarrow D'\,|\,  A'; T'  \in S_i$ such that $\mathcal{CT}  \models D \leftrightarrow D'$.\\}

If in Definition~\ref{def:uno} we consider weak safe rule replacement rather than safe rule replacement, then we can obtain a definition
of  WU-sequence (rather than U-sequence).
From the previous theorem and by Proposition~\ref{prop:wterm}, by using an obvious inductive argument, we can derive that the semantics (in terms of qualified
answers) is preserved in WU-sequences starting from a normally terminating and confluent annotated program,
where weak safe replacement is applied repeatedly.

%

\section{Conclusions}\label{sec:conclusion_and_future}

In this paper, we have defined an unfold operation for CHR which preserves the qualified answers of a program.

This was obtained by transforming a CHR program into an
annotated one which is then unfolded. The equivalence of the unfolded
program and the original (unannotated) one is proven by using  a slightly modified operational semantics for annotated programs.
We have then provided a condition that can be used to replace a
rule by its unfolded version, while preserving the qualified answers. We have also shown that this condition ensures that confluence and termination are preserved, provided that one considers normal derivations. Finally,
we have defined a further, weaker, condition which allows one to safely replace a rule by its unfolded version (while preserving qualified answers) for programs which are normally terminating and confluent.

There are only few other papers that consider source to source transformation
of CHR programs. \cite{Fru04}, rather than considering a generic transformation
system focuses on the specialization of rules w.r.t. a specific goal, analogously
to what happens in partial evaluation. In \cite{FH03} CHR rules are transformed
in a relational normal form, over which a source to source transformation is performed. Some form of
transformation for probabilistic CHR is considered in \cite{WFLP02}, while
guard optimization was studied in \cite{SSD05b}. Another paper which involves program transformation
for CHR is \cite{SS09}.

Both the general and the goal specific approaches are important in order
to define practical transformation systems for CHR. In fact, on the
one hand of course one needs some general unfold rule, on the other
hand, given the difficulties in removing rules from the transformed
program, some goal specific techniques can help to improve the
efficiency of the transformed program for specific classes of
goals. A method for deleting redundant CHR rules is considered
in \cite{AF04}. However, it is based on a semantic check and it
is not clear whether it can be transformed into a specific syntactic
program transformation rule.

When considering more generally the field of concurrent logic
languages, we find few papers which address the issue of program
transformation. Notable examples include  \cite{EGM01} that deals with
the transformation of  Concurrent Constraint Programming (CCP) and
\cite{UF88} that considers Guarded Horn Clauses (GHC).  The
results in these papers are not directly applicable to CHR  because
neither CCP nor GHC allow rules with multiple heads.

As mentioned in the introduction, some of the results presented here appeared in
\cite{TMG07} and in the thesis  \cite{Tac08}. However, it is worth noticing that
the conditions for safe rule replacement that we have presented in Section 5
and the content of Section 6 are original contributions of this paper.
In particular, differently from the conditions given in \cite{TMG07} and \cite{Tac08},
the conditions defined in Section 5 allow us to perform rule replacement also when rules with multiple heads
are used for unfolding a
given rule. This is a major improvement, since CHR rules have naturally multiple heads.

The results obtained in the current article can
be considered as a first step
in the direction of defining a transformation system for CHR programs,
based on unfolding. This step could be extended in several directions:
First of all, the unfolding operation could be extended to take
into account also the constraints in the propagation part of the head
of a rule.
Also, we could extend to CHR some of the other transformations,
notably folding \cite{TS84} which has already been applied to CCP in \cite{EGM01}.
Finally, we would like to investigate from a practical perspective
to what extent program transformation can improve the
performance of the CHR solver. Clearly, the application of an
unfolded rule avoids some computational steps (assuming  that
unfolding is done at the time of compilation, of course). However, the
increase in the number of program rules produced by unfolding could eliminate this improvement.

Here, it would probably be important to consider some unfolding
strategy, in order to decide which rules have to be unfolded.


An efficient unfolding strategy could also incorporate in particular
probabilistic or statistical information. The idea would be to only unfold 
CHR rules which are used often and leave those which are used only 
occasionally unchanged in order to avoid an unnecessary increase in the
number of program rules. This approach could be facilitated by probabilistic 
CHR extensions such as the ones as presented for example
in \cite{WFLP02}\ and \cite{CHRPrism}.  Extending the results of this paper
to probabilistic CHR will basically follow the lines and ideas presented here.
The necessary information
which one would need to decide whether and in which sequence to 
unfold CHR rules could obtain experimentally, e.g. by profiling, or formally 
via probabilistic program analysis. One could see this as a kind
of {\em speculative} unfolding.

\bibliographystyle{acmtrans}
\bibliography{thesis}

\newpage

\appendix

 \section{Proofs}

In this appendix, we give the proofs of some of the results contained in the paper.

\subsection{Equivalence of the two operational semantics}

Here, we provide the proof of  Proposition~\ref{prop:nequality}.
To this aim we first introduce some preliminary notions and lemmas.

Then, we define two configurations (in the two different transition systems)
equivalent when they are essentially the same up to renaming of identifiers.

\begin{definition}[{\sc Configuration equivalence}]\label{def:PLQA}
 Let
 $\sigma=\la(H_1,C),  H_2, D, T\ra_n \in {\it Conf_t}$ be a configuration in the transition system $\omega_t$ and let  $\sigma'=\la( K,C), D, T'\ra_m \in {\it Conf'_t}$ be a configuration in the transition system $\omega'_t$.

$\sigma$ and $\sigma'$ are \emph{equivalent} (and we write
$\sigma\approx\sigma'$) if:
\begin{enumerate}
\item\label{uno-e}  there exist $K_1$ and $K_2$, such that $K =  K_1 \uplus K_2$, $H_1= chr(K_1)$ and $chr(H_2) = chr(K_2)$,
\item\label{due-e} for each $l \in id (  K_1)$, $l$ does not occur in $T'$,
\item\label{tre-e} there exists a renaming of identifier $\rho$ s.t. $H_2\rho=K_2$ and
$T\rho=T'$.
\end{enumerate}
\end{definition}
Condition~\ref{uno-e} grants that $\sigma$ and $\sigma'$ have equal
CHR constraints, while Condition~\ref{due-e} ensures that no propagation rule is applied to constraints in $\sigma'$ corresponding to constrains in $\sigma$ that are not previously
introduced in the CHR store. Finally, condition~\ref{tre-e} requires that there exists a renaming of identifiers such that
the identified CHR constraints  and the tokens of $\sigma$
and the ones associated with them in $\sigma'$ are equal, after the renaming.

The following result shows the equivalence of the two introduced
semantics proving the equivalence of intermediate configurations.

\begin{lemma}\label{lemma:intermequiv}
Let $P$ and $Ann(P)$ be respectively a CHR program and its annotated version.
Moreover, let
$\sigma \in {\it Conf_t}$ and let  $\sigma' \in {\it Conf'_t}$ such that
$\sigma\approx\sigma'$.
Then, the following holds
\begin{itemize}
  \item there exists a derivation $\delta = \sigma \rrarrow^*_{\omega_t} \sigma_1$ in $P$ if and only if
there exists a derivation $\delta' =\sigma' \rrarrow^*_{\omega'_t} \sigma'_1$ in $Ann(P)$ such  $\sigma_1\approx\sigma'_1$
  \item the number of \textbf{Solve} (\textbf{Apply}) transition steps in $\delta$ and the number of \textbf{Solve'} (\textbf{Apply'}) transition steps in $\delta'$ are equal.
\end{itemize}
\end{lemma}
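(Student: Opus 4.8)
The plan is to prove the statement as a bisimulation between the two transition systems, establishing each implication of the "if and only if" as a separate simulation and carrying out both by induction on the length of the given derivation. In either direction the base case (an empty derivation) is immediate, since $\sigma\approx\sigma'$ holds by hypothesis, so the whole content lies in a single-transition analysis which I then splice into the inductive step. The guiding observation is that the only transition of $\omega_t$ with no counterpart in $\omega'_t$ is \textbf{Introduce}: moving an unlabelled CHR constraint $h$ from the goal store into the CHR store in $\omega_t$ should correspond to \emph{no} step in $\omega'_t$, because in the fused store of $\sigma'$ that constraint is already present and already labelled. This is exactly why the step counts are required to match only for \textbf{Solve}/\textbf{Solve'} and \textbf{Apply}/\textbf{Apply'}.

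For the direction $\omega_t\Rightarrow\omega'_t$ I would case-split on the first transition $\sigma\rrarrow_{\omega_t}\sigma''$. A \textbf{Solve} step (removing a built-in $c$ from the common built-in part of the goal) is matched one-to-one by the \textbf{Solve'} step removing the same $c$, and the three conditions of Definition~\ref{def:PLQA} are preserved verbatim. An \textbf{Introduce} step is matched by the empty derivation: using condition~\ref{uno-e} of Definition~\ref{def:PLQA} I locate the identified copy $h\#j$ of $h$ inside $K_1$, move it from $K_1$ to $K_2$ in the splitting of the store $K$, and extend the identifier renaming $\rho$ by $n\mapsto j$. This extension is legitimate because the fresh counter value $n$ occurs neither in $H_2$ nor in $T$, while $j\in id(K_1)$ is, by condition~\ref{due-e}, absent from $T'$ and, by uniqueness of identifiers, absent from $id(K_2)$. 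The \textbf{Apply} step is the substantial case: the matched head instances $M_1\uplus M_2$ live in the CHR store $H_2$, hence via $\rho$ their images $M_1\rho\uplus M_2\rho$ lie in $K_2\subseteq K$; since $chr$ is invariant under renaming of identifiers and the built-in store $D$ is shared, both the guard-entailment side condition and the history condition transfer (the latter because $\rho$ is a bijection on identifiers with $T\rho=T'$). Firing the corresponding rule of $Ann(P)$ (whose local token store is empty, so that $inst$, see Definition~\ref{definst}, merely relabels the body with fresh identifiers and adds no token) I then re-split the resulting store by placing the new body constraints into the goal-like component $K_1$, its built-in and CHR parts being handled just as in $\omega_t$, and re-establish the three conditions; the renaming $\rho$ needs no change on the CHR-store component because $(H_2\setminus M_2)\rho=K_2\setminus M_2\rho$.

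The direction $\omega'_t\Rightarrow\omega_t$ is symmetric for \textbf{Solve'} but requires a preparatory argument for \textbf{Apply'}. Here the constraints $N$ matched by the fired rule may lie anywhere in the store $K$, whereas \textbf{Apply} in $\omega_t$ can only consume constraints sitting in the CHR store, i.e.\ in $K_2$. The key auxiliary step is therefore a mobility argument: for the part $N\cap K_1$ I first perform, in $\omega_t$, a sequence of \textbf{Introduce} transitions that bring the corresponding unlabelled goal constraints (which exist by condition~\ref{uno-e}) into the CHR store, each time extending $\rho$ exactly as in the \textbf{Introduce} case above, reaching a configuration $\tilde\sigma\approx\sigma'$ whose CHR store contains counterparts of all of $N$. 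A single \textbf{Apply} on $\tilde\sigma$ then simulates the given \textbf{Apply'} with the same bookkeeping as before. Since these \textbf{Introduce} steps are uncounted, one \textbf{Apply'} is still matched by exactly one \textbf{Apply}, and likewise \textbf{Solve'} by \textbf{Solve}.

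I expect the genuinely delicate point to be the \textbf{Apply}/\textbf{Apply'} correspondence, specifically the management of identifiers: one must check that the fresh identifiers produced by $inst$ (all strictly larger than the counter, hence strictly fresh) never collide with identifiers already recorded in $K_2$ or in the token store, that $\rho$ can always be extended while remaining an injective renaming of identifiers, and that the history conditions $r@id(M_1,M_2)\notin T$ and its $\omega'_t$ analogue stay in lockstep under $\rho$. Once this single-step lemma is established in both directions with these invariants intact, the full biconditional together with the equality of \textbf{Solve}/\textbf{Apply} counts follows by a routine induction on derivation length, which is precisely what is needed to derive Proposition~\ref{prop:nequality}.
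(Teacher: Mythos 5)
Your proposal is correct and follows essentially the same route as the paper: a step-by-step simulation in both directions in which \textbf{Solve}/\textbf{Solve'} and \textbf{Apply}/\textbf{Apply'} correspond one-to-one, \textbf{Introduce} is matched by the empty derivation (shifting $h\#f$ from $K_1$ to $K_2$ and adjusting $\rho$), and an \textbf{Apply'} consuming constraints from $K_1$ is handled by first inserting uncounted \textbf{Introduce} steps on the $\omega_t$ side, exactly as the paper's ``without loss of generality'' move; the paper merely realizes your extension of $\rho$ as the transposition $\rho\{n'/f,f/n'\}$ to keep it a permutation. The conclusion by induction on derivation length is the same.
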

\begin{proof}
 We show that any transition step from any configuration in one system can be imitated from a (possibly empty) sequence of  transition steps from an equivalent configuration in the other system to achieve an equivalent configuration.
 Moreover there exists a \textbf{Solve} (\textbf{Apply}) transition step in $\delta$ if and only if there exists a  \textbf{Solve'} (\textbf{Apply'}) transition step in $\delta'$.

 Then, the proof follows by a straightforward inductive argument.

Let $\sigma=\la(H_1,C),  H_2, D, T\ra_n \in {\it Conf_t}$
and let  $\sigma'=\la(  K,C), D, T'\ra_m \in {\it Conf'_t}$ such that
$\sigma\approx\sigma'$.
By definition of $\approx$,  there exist $K_1$ and $K_2$ and a renaming $\rho$ such that
\begin{equation}
  \label{7giu3} K =K_1\uplus K_2, \, H_1 = chr(K_1),  \, chr (H_2) = chr(  K_2), \,  H_2 \rho=   K_2\mbox{ and } T\rho=T'.
\end{equation}

\begin{description}
\item[Solve and Solve':] they move a built-in constraint from the Goal store or the Store
respectively to the built-in constraint store. In this case, let $C=C' \uplus \{c\}$. By definition of the two transition systems
\[\begin{array}{l}
\sigma \rrarrow_{\omega_t}^{Solve}
\la(H_1,C'),  H_2, D \wedge c, T\ra_n \mbox{ and }
\sigma ' \rrarrow_{\omega'_t}^{Solve'}
\la(  K,C'), D \wedge c, T'\ra_m.
\end{array}
\]
By definition of $\approx$, it is easy to check that $\la(H_1,C'),  H_2, D \wedge c, T\ra_n \approx
\la(  K,C'), D \wedge c, T'\ra_m$.
\\

\item[Introduce:] this kind of transition there exists only in $\omega_t$ semantics and its
application labels a CHR constraint in the goal store and moves it in the CHR store.
In this case, let $H_1=H'_1\uplus \{h\}$ and

$$
\begin{array}{l}
\sigma\rrarrow_{\omega_t}^{Introduce}
\la (H'_1, C), H_2 \uplus \{h\#n\}, D, T\ra_{n+1}.
\end{array}
$$

Let $H_2'= H_2 \uplus \{h\#n\}$. By (\ref{7giu3}) and since $H_1=H'_1\uplus \{h\}$,
there exists an identified atom $h\#f \in K_1$.
Let $n'=\rho(n)$ (where $n'=n$ if $n$ is not in the domain of $\rho$).

Now, let $K'_1=K_1 \setminus \{h\#f\}$ and $K'_2= K_2 \uplus \{h\#f\}$. By (\ref{7giu3}), we have that $K = K'_1\uplus  K'_2$, $H'_1 = chr(K'_1)$ and $chr (H'_2) = chr(K'_2)$.

Moreover, by definition of $\approx$, for each $l \in id (K_1)$, $l$ does not occur in $T'$. Therefore, since by construction $K'_1 \subseteq  K_1$, we have that for each $l \in id (  K'_1)$, $l$ does not occur in $T'$.

Now, to prove that $\sigma' \approx
\la (H'_1, C),    H_2', D, T\ra_{n+1}$, we have only to prove that there exists a renaming $\rho'$, such that
$T\rho'=T'$ and $  H'_2 \rho'=   K'_2$.
We can consider the new renaming $\rho'=\rho  \{n'/f, f/n'\}$.
By definition $\rho'$ is a renaming of identifiers.

Let us start proving that
$  H'_2\rho'=  K'_2$.

We recall that
$ H_2\rho=  K_2$ by hypothesis.
Since by construction, $f \not \in id (K_2)= id(H_2\rho)$, we have that
$  H_2 \rho'=  H_2 \rho \{n'/f, f/n'\}=   H_2 \rho \{n'/f\}.$
Moreover, since by definition $n \not \in id (  H_2)$ and $n'=\rho(n)$, we have that
$  H_2 \rho \{n'/f\}=   H_2 \rho$.
By the previous observations, we have that
\[  H'_2 \rho'=   H_2 \rho
\, \uplus \, (\{h\#n\} \{n/f\}) =   K'_2.\]
Finally, we prove that $T\rho'=T'$. Since by definition of configurations in ${\it Conf_t}$,
$n$ does not occur in $T$ and $n'= \rho(n)$, we have that $T \rho'=(T \rho) \{f/n'\}=T'\{f/n'\}$, where the last equality follows by hypothesis. Moreover  since $f \in id (  K_1)$, we have that $f$ does  not occur in $T'$.
Therefore, $T'\{ f/n'\}=T'$ and then the thesis.
\\

\item[Apply and Apply':] Let $cl_r =r@F'\backslash F'' \Leftrightarrow D_1 \,|\, B, C_1 \in P$  and let $cl'_r =r@F'\backslash F'' \Leftrightarrow D_1 \,|\, \tilde B, C_1\in Ann(P)$
be its annotated version, where ${\tilde B}= I(B)$.
 The latter can
be applied to the considered configuration $\sigma'=\la(K,C), D, T'\ra_m$. In particular $F',F''$ match respectively with $ P_1$ and $ P_2$ such that
$ P_1 \uplus P_2\subseteq  K$. Without loss of generality, by using a suitable number of {\bf Introduce} steps, we can assume that
$r@F'\backslash F'' \Leftrightarrow D_1 \,|\, B, C_1 \in P$ can be applied to $\sigma=\la(H_1,C), H_2, D, T\ra_n$. In particular, considering the hypothesis $\sigma\approx\sigma'$, we can assume for $i=1,2$, there exists $ Q_i $ such that $ Q_1 \uplus  Q_2 \subseteq  H_2$, $ Q_i \rho =  P_i$ and $F',F''$ match respectively with $Q_1$ and $Q_2$.

Then, by (\ref{7giu3}), there exist $P_3$ and $ Q_3$ such that
$ Q_3 \rho =  P_3$, $ K_2=  P_1 \uplus  P_2\uplus P_3$ and
$ H_2=  Q_1 \uplus  Q_2 \uplus  Q_3$.

By construction, since $T \rho=T'$ and $( P_1, P_2)= ( Q_1, Q_2)\rho$ (and then
$chr ( P_1, P_2)=chr ( Q_1, Q_2)$), we have that

\begin{itemize}
\item $r@id( P_1, P_2) \not \in T'$ if and only if
$r@id( Q_1, Q_2) \not \in T$ and
\item $\mathcal{CT}  \models D \rightarrow
\exists _{cl'_r}(( (F', F'')=chr ( P_1,  P_2))\wedge D_1)$ if and only if
$\mathcal{CT}  \models D \rightarrow
\exists _{cl_r}( ((F', F'')=chr ( Q_1,  Q_2))\wedge D_1)$.
\end{itemize}

Therefore, by definition of {\bf Apply} and of {\bf Apply'}
\[\sigma \rrarrow^{Apply}_{\omega_t} \la \{H_1,C\}\uplus \{B, C_1 \},( Q_1,  Q_3),
((F',  F'')=chr( Q_1, Q_2))\wedge D_1 \wedge D, T_1 \ra_n\]
if and only if
\[\sigma' \rightarrow^{Apply'}_{\omega'_t}  \la( K_1, P_1, P_3 ,C, B', C_1),
((F',  F'')=chr( P_1,  P_2))\wedge D_1 \wedge  D, T'_1 \ra_o\]
where
\begin{itemize}
  \item $T_1=T \cup \{{\it r}@id(Q_1,Q_2)\}$,
  \item $( B',\emptyset ,o )= inst(\tilde B,\emptyset ,m)$ and
  \item $T_1'=T'\cup \{r @id(P_1,P_2)\}$.
\end{itemize}

Let $\sigma_1= \la \{H_1,C\}\uplus \{B, C_1 \},( Q_1,  Q_3),
((F',  F'')=chr( Q_1, Q_2))\wedge D_1 \wedge D, T_1 \ra_n$ and
$\sigma'_1= \la( K_1, P_1,  P_3 ,  B',C, C_1),
((F',  F'')=chr( P_1,  P_2))\wedge D_1 \wedge D, T'_1 \ra_o$. \\

Now, to prove the thesis, we have to prove that
$\sigma_1 \approx \sigma'_1$.

The following holds.

\begin{enumerate}
\item  There exist $ K'_1= ( K_1,  B') $ and $ K'_2=( P_1,  P_3)$, such that $( K_1, P_1,  P_3 ,  B') = K'_1\cup  K'_2$,
    $H_1 \uplus B = chr( K'_1)$ and $chr ( Q_1, Q_3) = chr( K'_2)$.
\item Since for each $l \in id ( K_1)$, $l$ does not occur in $T'$,
 $ P_1 \subseteq  K_2$ and by definition of \textbf{Apply'} transition, we have that for each $l \in id ( K'_1)= id( K_1,  B')$, $l$ does not occur in $T'_1$,
\item By construction and since $T\rho=T'$, we have that $T_1\rho=T'_1$.
Moreover, by construction  $( Q_1, Q_3) \rho= ( P_1, P_3)=  K'_2$.
\end{enumerate}
By definition, we have that $\sigma_1 \approx \sigma'_1$ and then the thesis.
\end{description}
\end{proof}

\noindent Then, we easily obtain the following \\

\setcounter{proposition}{0}
\begin{proposition}
Let $P$ and $Ann(P)$ be respectively a CHR program and its annotated version.
Then, for every goal $G$,
$$\mathcal{QA}_{P}(G) = \mathcal{QA'}_{Ann(P)}(G)$$
holds.
\end{proposition}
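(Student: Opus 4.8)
The plan is to reduce the statement entirely to Lemma~\ref{lemma:intermequiv}, which already establishes a step-for-step correspondence between $\omega_t$-derivations in $P$ and $\omega'_t$-derivations in $Ann(P)$ through the configuration equivalence $\approx$ of Definition~\ref{def:PLQA}. The first step is to observe that the two initial configurations are $\approx$-equivalent. Given a goal $G$, write $G=(H_1,C)$ with $H_1$ its CHR part and $C$ its built-in part. Then $\langle G,\emptyset,\texttt{true},\emptyset\rangle_1$ and $\langle I(G),\texttt{true},\emptyset\rangle_m$ fit the pattern of Definition~\ref{def:PLQA} by taking $K_1$ to be the identified CHR constraints of $I(G)$ (so that $chr(K_1)=H_1$), $K_2=\emptyset$, and $\rho$ the identity: condition~(1) holds since $chr(\emptyset)=\emptyset$, condition~(2) holds trivially since $T'=\emptyset$, and condition~(3) holds with the identity renaming. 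Hence the two initial configurations satisfy $\approx$.

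Second, I would invoke Lemma~\ref{lemma:intermequiv} with these equivalent initial configurations. It yields, for any $\omega_t$-derivation from $\langle G,\emptyset,\texttt{true},\emptyset\rangle_1$ to some $\sigma_1$, a matching $\omega'_t$-derivation from $\langle I(G),\texttt{true},\emptyset\rangle_m$ to some $\sigma'_1$ with $\sigma_1\approx\sigma'_1$, and conversely. Because $\approx$ is defined so that the two configurations carry the same built-in store $D$ and the same CHR constraints up to an identifier renaming (which $chr$ erases), a successful $\omega_t$-final configuration $\langle\emptyset,K,D,T\rangle_n$ and its partner $\sigma'_1=\langle K',D,T'\rangle_{n'}$ satisfy $chr(K)=chr(K')$ and carry the same $D$; thus $\exists_{-Fv(G)}(chr(K)\wedge D)=\exists_{-Fv(G)}(chr(K')\wedge D)$ and the side condition $\mathcal{CT}\not\models D\leftrightarrow\texttt{false}$ transfers verbatim. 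So, once the endpoints are matched, the answers they contribute are literally identical.

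The remaining and main point is to match the irreducibility (``$\not\rrarrow$'') conditions of the two definitions, which is delicate because the two transition systems differ: $\omega'_t$ has no \textbf{Introduce}, and its \textbf{Solve'} acts on the unified store rather than on a separate goal store. The key observation I would use is that at a successful $\omega_t$-final configuration the goal store is empty, so neither \textbf{Solve} nor \textbf{Introduce} can fire and the only candidate transition is \textbf{Apply}; dually, a non-reducible $\omega'_t$-configuration cannot contain any pending built-in constraint in its store (otherwise \textbf{Solve'} would apply), so by $\approx$ the matching $\omega_t$-configuration has empty goal store. Restricted to such configurations, the single-step mimicking established inside the proof of Lemma~\ref{lemma:intermequiv} puts \textbf{Apply} and \textbf{Apply'} steps in exact one-to-one correspondence, so an equivalent pair is simultaneously reducible or irreducible. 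Combining this with the endpoint-matching of the previous paragraph gives both inclusions $\mathcal{QA}_P(G)\subseteq\mathcal{QA'}_{Ann(P)}(G)$ and $\mathcal{QA'}_{Ann(P)}(G)\subseteq\mathcal{QA}_P(G)$, hence equality. I expect this reconciliation of the differing irreducibility notions, rather than the derivation correspondence itself (which the lemma hands us), to be the only real obstacle.
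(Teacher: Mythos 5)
Your proposal is correct and follows essentially the same route as the paper: the paper's own proof simply notes that the two initial configurations are $\approx$-equivalent and then appeals to Lemma~\ref{lemma:intermequiv}. Your additional care about matching the irreducibility conditions (empty goal store versus absence of pending built-ins, with \textbf{Introduce} steps absorbed for free) is exactly the detail the paper leaves implicit in its one-line appeal to the lemma.
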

\setcounter{proposition}{4}
\begin{proof}
By definition of $\mathcal{QA}$ and of $\mathcal{QA'}$,
the initial configurations of the two transition systems are equivalent. Then, the proof follows by Lemma~\ref{lemma:intermequiv}.
\end{proof}

\subsection{Correctness of the unfolding}
We prove now the correctness of our unfolding definition.

Next proposition states that qualified answers can be obtained  by considering normal derivations only for both the semantics  considered. Its proof is straightforward and hence it is omitted.

\begin{proposition}\label{prop:solonorm}
Let $P$ be CHR program  and let $P'$ an annotated CHR program.
Then
$$
\begin{array}{ll}
\mathcal{QA}_P(G) =
\{\exists_{-Fv(G)}(chr(K)\wedge d) \mid & \mathcal{CT} \not\models d \leftrightarrow {\tt false}, \\
& \delta= \langle G,\emptyset,\texttt{true}, \emptyset \rangle_1
\rightarrow^*_{\omega_t} \langle \emptyset,K, d, T\rangle_n\not\rightarrow_{\omega_t}\\
&\mbox{and $\delta$ is a normal derivation}\}\\
\end{array}
$$
and
$$\begin{array}{ll}
\mathcal{QA'}_{P'}(G) =
\{\exists_{-Fv(G)}(chr(K)\wedge d) \mid & \mathcal{CT} \not\models d \leftrightarrow {\tt false},\\
&\delta=\langle I (G),
\texttt{true}, \emptyset\rangle_m\rightarrow^{*}_{\omega'_t}
\langle   K, d, T\rangle_n\not\rightarrow_{\omega'_t}\\
&\mbox{and $\delta$ is a normal derivation}\}.
\end{array}
$$
\end{proposition}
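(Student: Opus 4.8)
The plan is to prove each of the two displayed equalities by showing that restricting to normal derivations leaves the set of qualified answers unchanged. One inclusion is immediate: every normal derivation is in particular a derivation, so the answers it produces already belong to $\mathcal{QA}_P(G)$ (resp. $\mathcal{QA'}_{P'}(G)$). The real content is the opposite inclusion, and the idea is that any successful derivation $\delta$ ending in a non-failed final configuration can be \emph{reordered} into a normal derivation reaching the same final configuration, hence yielding the same qualified answer. Everything reduces to a single \emph{Solve-anticipation lemma}: a \textbf{Solve} step may always be commuted one position to the left past an immediately preceding step, provided the solved built-in constraint is already present in the goal store before that step.

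To prove this local swap I would argue by cases on the preceding step $t$. If $t$ is a \textbf{Solve} of another constraint or an \textbf{Introduce}, the two steps touch disjoint parts of the configuration (or merely add conjuncts to the built-in store, which we treat as a conjunction) and commute trivially. The only delicate case is when $t$ is an \textbf{Apply} step: performing the \textbf{Solve} of $c$ first only replaces the built-in store $C$ by the logically stronger $C\wedge c$, and since $\mathcal{CT}\models C\rightarrow\exists_r(\ldots)$ entails $\mathcal{CT}\models(C\wedge c)\rightarrow\exists_r(\ldots)$, while the matched head constraints still sit in $S$ and the freshness condition $r@id(H_1,H_2)\notin T$ is untouched by \textbf{Solve}, the \textbf{Apply} stays enabled on exactly the same constraints and yields the same body, matching equalities, tokens and counter. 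The two resulting built-in stores differ only in the position of the conjunct $c$, so the same configuration is reached in both orders. I would also record the monotonicity remark that, since the built-in store only grows along $\delta$ and the final store is satisfiable, every intermediate store is implied by it and hence satisfiable; reordering therefore never turns the successful derivation into a failed one. The $\omega'_t$ case is identical, with no \textbf{Introduce} to consider and with the counter and the $inst$ computation left unaffected by \textbf{Solve'}.

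Using the swap lemma repeatedly I would push each \textbf{Solve} step leftward, taking as termination measure the sum over all built-in constraint occurrences of the number of non-\textbf{Solve} steps between the appearance of the constraint in the goal store and the step that solves it; each swap strictly decreases this finite, nonnegative measure. When no further swap is possible the derivation is normal: if some \textbf{Apply} acted on a configuration still containing a built-in constraint $c$, then $c$ (which no other step removes) would be solved at a strictly later step, and the swap at that \textbf{Solve} and its predecessor would still be available, a contradiction. Since every swap preserves the final configuration up to the conventional reading of the built-in store as a conjunction, the resulting normal derivation ends in the same $\la\emptyset,K,D,T\ra_n$ (resp. $\la K,D,T\ra_n$), producing the same answer $\exists_{-Fv(G)}(chr(K)\wedge D)$ with $\mathcal{CT}\not\models D\leftrightarrow\texttt{false}$. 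This yields the missing inclusion and hence Proposition~\ref{prop:solonorm}.

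The step I expect to be the main obstacle is the \textbf{Apply} case of the swap lemma, where one must check at once that strengthening the store preserves the guard entailment, that the matching and token-freshness conditions are unaffected by \textbf{Solve}, and that the two orderings produce literally the same configuration, together with the satisfiability-monotonicity observation that excludes spurious failures. Once this lemma is established, the termination argument and the transfer to $\omega'_t$ are routine.
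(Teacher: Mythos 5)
The paper itself gives no proof of this proposition (it is declared ``straightforward'' and omitted), so there is nothing to compare against line by line; your commutation argument is exactly the kind of reasoning the authors are implicitly relying on, and your treatment of the crucial \textbf{Apply} case is correct: strengthening the built-in store from $C$ to $C\wedge c$ preserves the guard entailment, the matching and the token condition are untouched, the two orders yield the same configuration up to reassociation of the conjunction, and the monotonicity-of-the-store observation correctly rules out spurious failures. The same holds for the $\omega'_t$ variant.

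There is, however, one repairable gap in your termination argument. Your measure counts, for each built-in occurrence $d$, only the \emph{non-Solve} steps between the appearance of $d$ and its \textbf{Solve} step; but your own case analysis (and your terminal-state argument) requires swapping a \textbf{Solve} past an immediately preceding \textbf{Solve}. Consider an \textbf{Apply} $A$ fired with $c$ still in the goal store, whose body introduces a built-in $c'$, followed by $\mathrm{Solve}(c')$ and then $\mathrm{Solve}(c)$: to normalise you must first move $\mathrm{Solve}(c)$ past $\mathrm{Solve}(c')$, and that swap leaves your measure unchanged, so ``each swap strictly decreases the measure'' is false as stated; conversely, if you forbid such swaps, the terminal derivation need not be normal, because the immediate predecessor of $\mathrm{Solve}(c)$ may be another \textbf{Solve}. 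The fix is routine: either take as measure the number of pairs $(A,d)$ where $A$ is an \textbf{Apply} step acting on a configuration whose goal store still contains the built-in occurrence $d$, and observe that for any such pair the whole block of individual swaps that carries $\mathrm{Solve}(d)$ back to just before $A$ (each swap being legitimate, since $d$ persists in the goal store throughout) decreases this count by at least one without creating new pairs; or argue by induction on the position of the first non-built-in-free \textbf{Apply}. With that adjustment the proof is complete.
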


The next proposition essentially shows the correctness of unfolding w.r.t. a derivation step.
We first define an equivalence between configurations in ${\it Conf'_{t}}$.

\begin{definition}[{\sc Configuration Equivalence}]\label{def:SE}
Let  $\sigma=\la  G , D, T\ra_o$ and  $\sigma'=\la   G', D', T'\ra_o$ be configurations  in ${\it Conf'_t}$.
$\sigma$ and $\sigma'$ are equivalent and we write $\sigma\simeq\sigma'$ if one of the following facts hold:
\begin{itemize}
  \item $\sigma$ and $\sigma'$ are both failed configurations
  \item or $G = G'$, $\mathcal{CT} \models D \leftrightarrow D'$ and $clean(G, T)= clean(G', T')$.
\end{itemize}

\end{definition}

\begin{proposition}\label{prop:servequality}
Let $cl_r, cl_v$ be annotated CHR rules and $cl'_r$  be the result
of the unfolding of $cl_r$ with respect to $cl_v$. Let $\sigma$ be a generic built-in free configuration such that we can use the transition \textbf{Apply'} with the rule $cl'_r$ obtaining the configuration $\sigma_{r'}$ and then the built-in free configuration $\sigma_{r'}^f$. Then, we can construct a derivation which uses at most the rules $cl_r$ and $cl_v$ and obtain a built-in free configuration $\sigma^f$ such that $\sigma_{r'}^f \simeq \sigma^f$.
\end{proposition}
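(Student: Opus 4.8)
The plan is to simulate a single \textbf{Apply'} step with $cl'_r$ by first firing the original rule $cl_r$, then moving the built-in constraints of the body of $cl_r$ into the store with \textbf{Solve'}, then firing $cl_v$ on the (suitably identified) copies of $S_1,S_2$, and finally normalising the built-in part again with \textbf{Solve'}. Writing $cl_r=r@H_1\backslash H_2\Leftrightarrow D\mid K,S_1,S_2,C;T$ and $cl_v=v@H_1'\backslash H_2'\Leftrightarrow D'\mid B;T'$ as in Definition~\ref{def:unf}, let $\sigma=\la \mathcal{H}_1\uplus \mathcal{H}_2\uplus G, C_0, T_0\ra_n$ be the built-in free configuration on which $cl'_r$ fires, with $\mathcal{H}_1,\mathcal{H}_2$ the identified constraints matched by the head $H_1\backslash H_2$. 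The firing of $cl'_r$ means, in particular, that $C_0$ entails (under the head matching) the guard $D\wedge D''\theta$ of $cl'_r$, and that $r@id(\mathcal{H}_1,\mathcal{H}_2)\notin T_0$.

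First I would show that $cl_r$ is itself applicable to $\sigma$: its head and token conditions are literally those of $cl'_r$ (same head, same rule name), and its guard $D$ is a logical consequence of the guard $D\wedge D''\theta$ of $cl'_r$, hence entailed by $C_0$. Applying $cl_r$ with offset $n$ introduces identified copies of $K,S_1,S_2$ together with the built-in constraints $C$, and records $r@id(\mathcal{H}_1,\mathcal{H}_2)$ plus the instantiated local store of $cl_r$. After a block of \textbf{Solve'} steps the built-in store contains $C_0\wedge (chr(\mathcal{H}_1,\mathcal{H}_2)=(H_1,H_2))\wedge D\wedge C$.

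Next comes the crucial step: firing $cl_v$ on the instantiated $S_1,S_2$. Since $inst$ (Definition~\ref{definst}) only renames identifiers, $chr(S_1,S_2)$ is unchanged, and the unfolding hypothesis $\mathcal{CT}\models (C\wedge D)\rightarrow chr(S_1,S_2)=(H_1',H_2')\theta$ supplies the required matching via $\theta$. For the guard of $cl_v$ I would split $D'=V\uplus D''$: by definition of $V$ the part $V\theta$ is entailed by $C\wedge D$, while $D''\theta$ is available because it was forced into the store by the firing of $cl'_r$ (it is entailed by $C_0$, which is still present), where the condition $Fv(D''\theta)\cap Fv(H_1',H_2')\theta\subseteq Fv(H_1,H_2)$ of Definition~\ref{def:unf} guarantees that this entailment survives the renaming apart of $cl_v$ and the existential quantification $\exists_{cl_v}$. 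The token condition $v@id(S_1,S_2)\notin T$ of the unfolding, together with the freshness of the offset-$n$ identifiers (which exceed every index occurring in $T_0$), yields that the token of $cl_v$ is not yet in the current history, so $cl_v$ indeed fires; a final \textbf{Solve'} block then produces the built-in free configuration $\sigma^f$.

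It then remains to check $\sigma_{r'}^f\simeq\sigma^f$ in the sense of Definition~\ref{def:SE}. For the CHR component I would verify that the two derivations produce exactly the same identified constraints: this is precisely why the offsets in Definition~\ref{def:unf} are chosen as they are, since incrementing the body of $cl_v$ first by $m$ (the greatest index of $cl_r$) and then by $n$ along the $cl'_r$ path coincides with incrementing it by the counter $m+n$ reached after firing $cl_r$ along the two-step path, so that $G=G'$ on the nose. For the built-in component, adding $D'$ when firing $cl_v$ is $\mathcal{CT}$-equivalent, modulo the matching equalities $chr(S_1,S_2)=(H_1',H_2')$ already present, to adding $D''\theta$ as $cl'_r$ does. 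The main obstacle, and the part needing the most care, is the token bookkeeping: I would check that $clean$ (Definition~\ref{def:clean}) reconciles the two histories in both the propagation case ($H_2'=\epsilon$, where $v@id(S_1)$ is kept) and the simpagation/simplification case ($H_2'\neq\epsilon$, where the copies matching $H_2'$ are deleted and the tokens of $T$ referring to them become dangling), so that $clean(G,T_{\mathit{fin}})=clean(G',T'_{\mathit{fin}})$ as required by $\simeq$.
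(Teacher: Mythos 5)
Your proposal follows essentially the same route as the paper's proof: simulate the single \textbf{Apply'} step with $cl'_r$ by firing $cl_r$, normalising with \textbf{Solve'}, firing $cl_v$ on the instantiated copies of $S_1,S_2$ (using the split $D'=V\uplus D''$ and the entailments from Definition~\ref{def:unf} for the guard, and the offset arithmetic of $inst$ for the identifiers), and then checking the built-in equivalence and the $clean$-reconciliation of the token stores. The only point the paper treats that you pass over is the branch where the intermediate built-in store becomes \texttt{false}, in which case the derivation stops after $cl_r$ and both sides are failed configurations, which is exactly why the statement says ``at most'' the rules $cl_r$ and $cl_v$.
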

\begin{proof}
Assume that
$$
\begin{array}{rl}
\sigma&\rrarrow^{cl'_r}\sigma_{r'}\rrarrow^{Solve^{*}}
\sigma_{r'}^f\\
&\searrow_{\, cl_r}
\sigma_r \rrarrow^{Solve^{*}}
\sigma_r^f (\rrarrow^{cl_v}
\sigma_v\rrarrow^{Solve^{*}}
\sigma_v^f)
\end{array}
$$

The labeled arrow $\rrarrow^{Solve^{*}}$ means that only {\bf Solve} transition steps are applied.
Moreover:
\begin{itemize}
           \item if $\sigma_r^f$ has the form
$\la   G, {\tt false}, T\ra$ then the derivation between the parenthesis is not present and $\sigma^f=\sigma_r^f$.
           \item the derivation between the parenthesis is present and $\sigma^f=\sigma_v^f$, otherwise.
         \end{itemize}
Let  $\sigma=\la (  H_1,  H_2,  H_3) , C, T\ra_j$ be a built-in free configuration
and let $cl_r$ and $cl_v$ be the rules
$r@H'_1\backslash H'_2 \Leftrightarrow  D_r\,|\, K,S_1,S_2, C_r; T_r $ and
$v@S_1'\backslash S_2' \Leftrightarrow  D_v\,|\, P, C_v;T_v$ respectively,
where $C_r$ is the conjunction of all the built-in constraints in the body
of $cl_r$, $\theta$ is a substitution such that $dom(\theta) \subseteq Fv(S_1',S_2')$ and
\begin{equation}\label{13marzo1}
    \mathcal{CT}  \models (D_r \wedge C_r) \rightarrow chr(S_1,S_2)=(S_1',S_2')\theta.
\end{equation}
Furthermore assume that $m$ is the greatest identifier which appears in the
rule $cl_r$ and that $inst(P, T_v,m)=(P_1, T_1, m_1)$.
Then, the \emph{unfolded} rule $cl'_r$ is:
\[
    r@ H'_1\backslash H'_2
\Leftrightarrow D_r, (D_v'\theta)\, |\,   K,  S_1,
P_1, C_r, C_v, chr(  S_1,   S_2)= (S_1', S_2'); T_{r'}
\]
where $v @id (S_1,S_2) \not \in T_r$,
$V\subseteq D_v$, $V=\{ c \mid \mathcal{CT}  \models (D_r\wedge C_r)\rightarrow c\theta\},$
$D_v'= D_v\backslash V$, $Fv(D_v'\theta) \cap Fv((S_1',S_2')\theta)\subseteq Fv(H_1',H_2')$, the
constraint $(D_r, (D_v'\theta))$ is satisfiable and
\begin{itemize}
    \item if $S_2'=\epsilon$ then $T_{r'}=T_r \cup T_1 \cup
              \{v @id (S_1)\}$
    \item if $S_2'\neq \epsilon$ then $T_{r'}=clean(( K, S_1), T_r) \cup T_1$.
    \end{itemize}

By the previous observations, we have that
    \begin{equation}\label{10dic2}
    \mathcal{CT}  \models (D_r\wedge C_r)\rightarrow V\theta,
    \end{equation}
and therefore $\mathcal{CT}  \models V\theta \leftrightarrow \exists _{-Fv(D_r\wedge C_r)}V\theta$. Then, without loss of generality, we can assume that
    \begin{equation}\label{3giu101}
    Fv(V\theta) \subseteq Fv(cl_r).
    \end{equation}
Analogously, by (\ref{13marzo1}) and since $dom(\theta) \subseteq Fv(S_1',S_2')$, we can assume that
 \begin{equation}\label{7giu5}
    Fv(chr(S_1,S_2)=(S_1',S_2')\theta) = Fv((chr(S_1,S_2)=(S_1',S_2'))\theta)\subseteq Fv(cl_r).
 \end{equation}

Moreover, since by definition $Fv(D_v'\theta) \cap Fv((S_1',S_2')\theta)\subseteq Fv(H_1',H_2')$ and $dom(\theta) \subseteq Fv(S_1',S_2')$, we have that
    \begin{equation}\label{7giu1}
    Fv(D_v'\theta) \subseteq Fv(H_1',H_2') \cup Fv(cl_v).
    \end{equation}
Let us consider the application of the rule $cl'_{r}$ to $\sigma$. By definition of the transition \textbf{Apply'}, we have that
\begin{equation}\label{10dic1}
    \mathcal{CT} \models C\rightarrow \exists_{cl'_{r}}
    ((chr(H_1, H_2)=(H'_1, H'_2))\wedge D_r\wedge (D_v'\theta))
\end{equation}
and
$$\begin{array}{l}
    \sigma_{r'}=\la (Q,C_r, C_v,chr(S_1,S_2)= (S_1', S_2')), D, T_3\ra_{j+m_1},
  \end{array}
  $$
 \noindent
 where
 \begin{itemize}
   \item $  Q=( H_1,  H_3,  Q_1)$,
   \item $\mathcal{CT}  \models  D \leftrightarrow(chr(H_1,   H_2)= (H_1', H_2')\wedge D_r\wedge (D_v'\theta)\wedge C)$,
   \item  $inst ((  K,  S_1,  P_1), T_{r'}, j)=(  Q_1, T_{r'}', j+m_1)$ and
     $T_3=T \cup T_{r'}' \cup\{r @id (H_1,H_2)\}$.
 \end{itemize}

    Therefore, by definition
    $\sigma_{r'}^f=\la   Q, C_{r'}^f, \, T_3\ra_{j+m_1},$
    where
    $$\begin{array}{ll}
       \mathcal{CT}  \models C_{r'}^f \leftrightarrow  (C_r\wedge  C_v\wedge chr(  S_1,   S_2)= (S_1', S_2') \wedge D).
     \end{array}
    $$

Let us consider now the application of
$cl_r$ to $\sigma$ and then of $cl_v$ to the $\sigma_r^f$ obtained from the previous application.
Since by construction $Fv((chr(H_1, H_2)=(H'_1, H'_2))\wedge D_r) \cap Fv(cl'_r) \subseteq Fv(cl_{r})$ and by (\ref{10dic1}), we have that
\[\mathcal{CT} \models C\rightarrow \exists_{cl_r}((chr(H_1, H_2)=(H'_1, H'_2))\wedge D_r).
\]
Therefore, by definition of the transition \textbf{Apply'}, we have that
$$\begin{array}{l}
    \sigma_{r}=\la (  Q_2, C_r),
          chr(  H_1,   H_2)= (H_1', H_2')\wedge D_r \wedge C, T_4\ra_{j+m},
  \end{array}
  $$
\noindent where
\begin{itemize}
  \item $  Q_2=(  H_1,  H_3,  K'',  S''_1,  S''_2)$,
  \item $((K'',S''_1,S''_2), T_2, j+m) = inst ((K,S_1,S_2), T_r, j)$ and
$T_4=T \cup T_2 \cup\{r @id (H_1, H_2)\}$.
\end{itemize}

Therefore, by definition $\sigma_{r}^f=\la  Q_2, C_{r}^f, \, T_4\ra_{j+m},$
  where
\begin{equation}\label{13marzo2}
\mathcal{CT}  \models C_{r}^f \leftrightarrow  C_r\wedge
     chr(H_1, H_2)= (H_1', H_2')\wedge D_r\wedge C.
\end{equation}

Now, we have two possibilities
\begin{description}
  \item[($C_{r}^f = \tt false$).] In this case, by construction, we have that $C_{r'}^f = \tt false$. Therefore $ \sigma_{r'}^f \simeq  \sigma_{r}^f$ and then the thesis.
  \item[($C_{r}^f    \neq \tt false$).]
\comment{By (\ref{10dic1}) and (\ref{13marzo2})
  \[\begin{array}{ll}
    \mathcal{CT} \models  & C_{r}^f
     \rightarrow  C_r \wedge D_r \wedge \exists _{cl'_r} (D'_v \theta ).
  \end{array}
 \]
Then, by} By (\ref{13marzo2}) and (\ref{10dic2})
(\ref{13marzo1}), we have that
\[\mathcal{CT} \models C_{r}^f \rightarrow
chr(S_1,S_2)= (S_1', S_2')\theta \wedge
V \theta.\]

Moreover, by (\ref{13marzo2}), (\ref{10dic1}) and (\ref{7giu1})
\[\begin{array}{ll}
    \mathcal{CT} \models  C_{r}^f
     \rightarrow  & \exists_{H_1',H_2', cl_v}
     (chr(H_1,H_2)=(H_1',H_2')\wedge (D'_v \theta )) \\
     &\wedge \, chr(H_1,H_2)=(H_1',H_2')
  \end{array}
 \]
and then $
    \mathcal{CT} \models   C_{r}^f
     \rightarrow  \exists_{cl_v}
     (D'_v \theta)$.
Therefore, by (\ref{3giu101}), (\ref{7giu5}) and since the rules are renamed apart,
\[\mathcal{CT} \models C_{r}^f \rightarrow \exists_{cl_v}
     (chr(S_1,S_2)= (S_1', S_2')\theta \wedge
V \theta \wedge D'_v \theta ).\]
Then, by definition of $D_v$ and since $dom(\theta) \subseteq Fv(S_1',S_2')$, we have that
$\mathcal{CT} \models   C_{r}^f \rightarrow \exists_{cl_v} ((chr(S_1,S_2)=(S_1',S_2') \wedge D_v) \theta)$.

Therefore, since  $dom(\theta) \subseteq Fv(S_1',S_2')\subseteq Fv(cl_v)$,
\[
 \mathcal{CT} \models C_{r}^f \rightarrow \exists_{cl_v} (chr(S_1,S_2)=(S_1',S_2') \wedge D_v).\]

Then, $\sigma^f_r$ is such that we can use the transition \textbf{Apply'} with the rule $cl_v$ obtaining the new configuration
$$\begin{array}{ll}
    \sigma_{v}=&\la (Q_3 ,  C_v),\, D', \, T_5\ra_{m_1},
  \end{array}
  $$
where \begin{itemize}
        \item $Q_3=(H_1,H_3,  K'',  S''_1,   P_2)$
        \item $\mathcal{CT} \models D'  \leftrightarrow (chr( S_1, S_2)= (S_1', S_2')\wedge D_v \wedge C_r\wedge chr(H_1,H_2)= (H_1',H_2')\wedge D_r \wedge C$),
        \item $inst(P, T_v, j+m) = (  P_2, T_v', m_1)$ and $T_5=T_4 \cup T_v' \cup\{v @id (
S''_1, S''_2)\}$.
      \end{itemize}

Finally, by definition, we have that
$\sigma_{v}^f=\la   Q_3,C_{v}^f, \, T_5\ra_{m_1},$
  where
  $$\begin{array}{l}
   \mathcal{CT} \models  C_{v}^f \leftrightarrow C_v \wedge D'.
  \end{array}
  $$
  By definition of $D$ and $D'$, we have that $\mathcal{CT} \models C_{r'}^f \leftrightarrow C_{v}^f$.

  If $C_{v}^f =\tt false $ then the proof is analogous to the previous case and hence it is omitted.
  Otherwise, observe that by construction,
$  Q=(  H_1,  H_3,  Q_1)$, where
$  Q_1$ is obtained from $(  K,  S_1,  P_1)$ by adding the natural $j$ to each identifier in
$(  K,  S_1)$ and by adding the natural $j+m$ to each identifier in
$  P$.
Analogously, by construction,
$  Q_3=(  H_1,  H_3,  K'',  S''_1,P_2)$, where
$(  K'',  S''_1)$  are obtained from $(  K,  S_1)$ by adding the natural $j$ to each identifier in
$(  K,  S_1)$ and $  P_2$ is obtained from $  P$ by adding the natural  $j+m$ to each identifier in $  P$.

Therefore $Q=Q_3$ and then, to prove the thesis, we have only to prove that \[clean(  Q,T_3)=clean(  Q, T_5).\]

Let us introduce the function $inst': Token\times\mathbb{N} \rrarrow \mathbb{N}$ as
the restriction of the function $inst$ to token sets and natural numbers, namely
$inst'(T,n)= T'$, where $T'$ is obtained from $T$ by
incrementing each identifier in $T$ with $n$. So, since
$T'_{r'}= inst'(T_{r'}, j)$,
$T_{r'}= T_r\cup T_1\cup \{v@id(S_1, S_2) \}$ and
$T_1=inst'(T_v, m)$, we have that

$$
\begin{array}{lcl}
T_3 & = & T\, \cup \, T'_{r'}\, \cup \,\{r@id(  H_1, H_2)\}\\
       &=& T \, \cup \, inst'(clean((K,S_1), T_r), j)\, \cup \, inst'(T_v, j+ m)
       \,\cup \\
       && inst'(\{v@id(  S_1, S_2)\}, j)\, \cup \,
       \{r@id(H_1, H_2)\}
\end{array}
$$

Analogously, since $T_4= T \cup T_2 \cup \{r@id(H_1,H_2)\}$,
$T_2 = inst'(T_{r}, j)$ and $T'_v= inst'(T_v,j+m)$, we have that
$$
\begin{array}{lcl}
T_5 &=& T_4\, \cup \, T'_v \, \cup \, \{v@id(  S''_1,S''_2)\}\\

    &=& T \, \cup \, inst'(T_r, j) \, \cup \, \{r@id(H_1,H_2)\}\,
    \cup \, inst'(T_v,j+m) \, \cup \, \\
    &&\{v@id(S''_1, S''_2)\}
    \end{array}
$$
Now, since by construction $(S''_1, S''_2)$ is obtained from $(S_1, S_2)$ by adding the natural $j$ to each identifier, we have that $inst'(\{v@id( S_1, S_2)\}, j)= \{v@id(S''_1, S''_2)\}$.
Moreover, by definition of annotated rule $id(T_r) \subseteq id (  K,   S_1,   S_2)$
and $  Q=(  H_1,  H_3,  Q_1)$, where
$  Q_1$ is obtained from $(  K,  S_1,  P_1)$ by adding the natural $j$ to each identifier in
$(  K,  S_1)$ and by adding the natural $j+m$ to each identifier in
$  P$. Then $clean(  Q, inst'(clean((  K,   S_1), T_r), j))=
clean(  Q, inst'(T_r, j))$ and then the thesis holds.
\end{description}
\end{proof}

\noindent Hence we obtain the correctness result. \\

\setcounter{proposition}{1}
\begin{proposition}
Let $P$ be an annotated CHR program with
$cl_r, cl_v\in P$. Let $cl'_r$  be the result
of the unfolding of $cl_r$ with respect to $cl_v$ and let $P'$ be the program
obtained from $P$ by adding rule $cl'_r$. Then, for every goal $G$,
$\mathcal{QA'}_{P'}(G) = \mathcal{QA'}_P(G)$ holds.
\end{proposition}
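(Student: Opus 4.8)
The plan is to establish the two inclusions $\mathcal{QA'}_{P}(G)\subseteq\mathcal{QA'}_{P'}(G)$ and $\mathcal{QA'}_{P'}(G)\subseteq\mathcal{QA'}_{P}(G)$ separately, working throughout with normal derivations, which is legitimate by Proposition~\ref{prop:solonorm}. Recall that $P'=P\cup\{cl'_r\}$, so the only new behaviour $P'$ can exhibit is through applications of $cl'_r$.

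For the inclusion $\mathcal{QA'}_{P}(G)\subseteq\mathcal{QA'}_{P'}(G)$ I would argue that adding $cl'_r$ never disturbs an existing successful computation. Since $P\subseteq P'$, every derivation in $P$ is already a derivation in $P'$, so the only point to check is that a configuration which is final for $P$ remains final for $P'$, i.e. that $cl'_r$ cannot fire on it. This holds because $cl'_r$ has the same head $H_1\backslash H_2$ and the same identifier $r$ as $cl_r$, while its guard $D,(D''\theta)$ entails the guard $D$ of $cl_r$. Hence whenever the matching, the guard entailment and the token condition $r@id(H_1,H_2)\notin T$ demanded by \textbf{Apply'} hold for $cl'_r$, they also hold for $cl_r$: the matching and token conditions are identical, and the extra existentially quantified variables appearing in $\exists_{cl'_r}$ but not in $\exists_{cl_r}$ (the fresh variables coming from $cl_v$) occur neither in the matching equality nor in $D$, so they can be dropped from the quantifier. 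Contrapositively, if $cl_r$ cannot fire then neither can $cl'_r$, so a $P$-final configuration is $P'$-final and yields the same qualified answer.

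The inclusion $\mathcal{QA'}_{P'}(G)\subseteq\mathcal{QA'}_{P}(G)$ is the substantial one and rests on Proposition~\ref{prop:servequality}. Given a successful normal derivation in $P'$, I would proceed by induction on the number $N$ of applications of $cl'_r$ occurring in it. If $N=0$ the derivation lives entirely in $P$ and we are done. If $N\geq 1$, by normality the first application of $cl'_r$ takes place at a built-in free configuration $\sigma$ and, after the ensuing \textbf{Solve'} steps, reaches the next built-in free configuration $\sigma^f_{r'}$. This is precisely the situation of Proposition~\ref{prop:servequality}, which lets me replace this single $cl'_r$-step together with its Solve steps by a fragment using only the rules $cl_r$ and $cl_v$ (both in $P$) and reaching a configuration $\sigma^f$ with $\sigma^f_{r'}\simeq\sigma^f$. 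The prefix up to $\sigma$ uses only $P$-rules (it precedes the first use of $cl'_r$), so after the replacement we are left with the tail of the original derivation, which starts at $\sigma^f_{r'}$ and must now be re-routed through the $\simeq$-equivalent $\sigma^f$.

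To close the induction I would use a transfer lemma asserting that $\simeq$ is an equivalence preserved by $\rrarrow_{\omega'_t}$: if $\sigma_1\simeq\sigma_2$ and $\sigma_1\rrarrow_{\omega'_t}\tau_1$ via some rule, then $\sigma_2\rrarrow_{\omega'_t}\tau_2$ via the same rule with $\tau_1\simeq\tau_2$. This is routine from Definition~\ref{def:SE}, since $\simeq$-equivalent configurations share the same goal store and logically equivalent built-in stores, and every token relevant to an \textbf{Apply'} step has all of its identifiers among those of the common store, so the condition $r@id(H_1,H_2)\notin T$ depends only on $clean(G,T)$, which $\simeq$ keeps equal. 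Replaying the tail step by step from $\sigma^f$ thus reaches an $\simeq$-equivalent final configuration with the same answer, and since the replay uses the same rules in the same order it still contains exactly $N-1$ applications of $cl'_r$. The whole reconstructed object is a $P'$-derivation with $N-1$ applications of $cl'_r$ yielding the same qualified answer, so the induction hypothesis applies and delivers the required $P$-derivation. I expect the main obstacle to be exactly this transfer lemma together with its token bookkeeping: one must verify that $\simeq$ is a genuine congruence for $\rrarrow_{\omega'_t}$, and in particular that after an \textbf{Apply'} step—where the counter advances and $inst$ introduces fresh identifiers—the equalities $clean(G,T)=clean(G',T')$ are re-established, so that the cleaned token stores stay synchronized and no propagation is spuriously enabled or blocked in one derivation but not the other.
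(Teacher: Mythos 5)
Your proof is correct and follows essentially the same route as the paper: the same split into two inclusions, with the easy direction handled by observing that applicability of $cl'_r$ at a configuration implies applicability of $cl_r$ (the paper phrases this as a contradiction at the final configuration), and the substantial direction handled via Proposition~\ref{prop:servequality} together with an induction over the applications of $cl'_r$ in a normal derivation. The transfer property of $\simeq$ that you make explicit is exactly what the paper compresses into its ``straightforward inductive argument.''
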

\setcounter{proposition}{6}

\begin{proof}
We prove the two inclusions separately.
\begin{description}
  \item[{\bf ($\mathcal{QA'}_{P'}(G) \subseteq \mathcal{QA'}_P(G)$)}] The proof follows from Propositions~\ref{prop:solonorm} and~\ref{prop:servequality} and by a straightforward inductive argument.
  \item[{\bf ($\mathcal{QA'}_{P}(G) \subseteq \mathcal{QA'}_{P'}(G)$)}]
  The proof is by contradiction. Assume that there exists $Q \in \mathcal{QA'}_{P}(G) \setminus \mathcal{QA'}_{P'}(G)$. By definition there exists a derivation
  \[\delta=\langle I(G),
\texttt{true}, \emptyset\rangle_m\rightarrow^{*}_{\omega'_t}
\langle   K, d, T\rangle_n\not\rightarrow_{\omega'_t}\] in $P$, such that $Q =
\exists _{-Fv(G)}(chr(   K)\wedge d)$. Since $P \subseteq P'$, we have that there exists the derivation
$\langle I(G),
\texttt{true}, \emptyset\rangle_m\rightarrow^{*}_{\omega'_t}
\langle   K, d, T\rangle_n$ in $P'$. Moreover, since $P' =P \cup \{cl'_r\}$ and by hypothesis $Q\not  \in \mathcal{QA'}_{P'}(G)$, we have that there exists a derivation step $\langle   K, d, T\rangle_n\rightarrow_{\omega'_t}
\langle   K_1, d_1, T_1\rangle_{n_1}$ by using the rule $cl'_r$.
Then, by definition of unfolding there exists a derivation step $\langle   K, d, T\rangle_n\rightarrow_{\omega'_t}
\langle   K_2, d_2, T_2\rangle_{n_2}$ in $P$, by using the rule $cl_r$ and then we have a contradiction.
\end{description}
\end{proof}

\subsection{Safe replacement}

We can now provide the result which shows the correctness of the safe rule replacement condition.
This is done by using  the following proposition.

\begin{proposition}\label{lemma:servcomplete}
Let $cl_r, \, cl_v$ be two annotated CHR rules such that the following holds
\begin{itemize}
\item $cl_r$ is of the form $r@H'_1\backslash H'_2 \Leftrightarrow  D_r\,|\,  K_r; T_r $,
  \item $cl'_r \in Unf_{\{cl_v\}}(cl)$ is of the form $r@H'_1\backslash H'_2 \Leftrightarrow  D_r'\,|\,  K_r'; T'_r $, with
  $\mathcal{CT}  \models D_r \leftrightarrow D'_r$ and it is obtained by unfolding the identified atoms $A \subseteq K_r$.
\end{itemize}
Moreover, let $\sigma$ be a generic built-in free configuration such that we can
construct a derivation $\delta$ from $\sigma$ where
\begin{itemize}
  \item $\delta$ uses at the most the rules $cl_r$ and $cl_v$ in the order,
  \item a built-in free configuration $\sigma^f$  can be obtained and
  \item if $cl_v$ is used, then $cl_v$ rewrites the atoms $A'$ such that $chr(A)=chr(A')$. \comment{corresponding (in the obvious sense) to the considered $A$ in the body of $cl_r$.}
\end{itemize}
Then, we can use the transition \textbf{Apply'} with the rule $cl'_r$ obtaining the configuration $\sigma_{r'}$ and then the built-in free configuration $\sigma_{r'}^f$ such that $\sigma_{r'}^f \simeq \sigma^f$.
\end{proposition}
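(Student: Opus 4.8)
The plan is to read this statement as the converse of Proposition~\ref{prop:servequality}: there one starts from an application of $cl'_r$ and reconstructs a $cl_r;cl_v$ derivation, whereas here one starts from a $cl_r;cl_v$ derivation and must produce the single application of $cl'_r$. The bulk of the work — showing that the two resulting built-in free configurations have logically equivalent built-in stores, identical CHR stores, and token stores that agree after $clean$ — is exactly the computation already carried out in the proof of Proposition~\ref{prop:servequality}, since that computation produces both $\sigma^f_{r'}$ and $\sigma^f_v$ and relates them. The only genuinely new ingredient is to establish that $cl'_r$ is \emph{applicable} to $\sigma$ in the first place, and this is where the hypothesis $\mathcal{CT}\models D_r\leftrightarrow D'_r$ enters.

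First I would fix notation as in Definition~\ref{def:unf} and in the proof of Proposition~\ref{prop:servequality}: write $cl_r=r@H'_1\backslash H'_2\Leftrightarrow D_r\mid K_r;T_r$ with $C_r$ the conjunction of the built-in constraints occurring in the body $K_r$, let $A\subseteq K_r$ be the unfolded atoms, write $cl_v=v@S'_1\backslash S'_2\Leftrightarrow D_v\mid P,C_v;T_v$, let $\theta$ be the unfolding substitution so that $\mathcal{CT}\models(D_r\wedge C_r)\rightarrow chr(A)=(S'_1,S'_2)\theta$, and set $V=\{d\in D_v\mid\mathcal{CT}\models(D_r\wedge C_r)\rightarrow d\theta\}$, $D''_v=D_v\setminus V$, so that the guard of $cl'_r$ is $D'_r=D_r\wedge(D''_v\theta)$. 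The hypothesis $\mathcal{CT}\models D_r\leftrightarrow D'_r$ then gives $\mathcal{CT}\models D_r\rightarrow D''_v\theta$; but for every $d\in D''_v$ this forces $\mathcal{CT}\models(D_r\wedge C_r)\rightarrow d\theta$, contradicting $d\notin V$ unless $D''_v$ is empty. Hence the hypothesis in fact makes $D''_v\theta$ trivial, so $cl'_r$ has the same head $H'_1\backslash H'_2$ and a guard equivalent to $D_r$. Since $\delta$ begins by applying $cl_r$ to the built-in free $\sigma$ — so that $\mathcal{CT}\models C\rightarrow\exists_{cl_r}((chr(H_1,H_2)=(H'_1,H'_2))\wedge D_r)$ holds — the same matching witnesses applicability of $cl'_r$, and \textbf{Apply'} with $cl'_r$ fires on $\sigma$, yielding $\sigma_{r'}$ and, after the \textbf{Solve'} steps, the built-in free $\sigma^f_{r'}$.

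It then remains to show $\sigma^f_{r'}\simeq\sigma^f$, and I would split on whether $cl_v$ is used in $\delta$. If $cl_v$ is not used then $\sigma^f=\sigma^f_r$, and I would first argue that this can only occur when $\sigma^f_r$ is failed: otherwise the matching condition $\mathcal{CT}\models(D_r\wedge C_r)\rightarrow chr(A)=(S'_1,S'_2)\theta$, the entailment of $V\theta$ by $D_r\wedge C_r$, and $\mathcal{CT}\models D_r\rightarrow D''_v\theta$ together show that both the matching and the full guard $D_v\theta$ of $cl_v$ are entailed at $\sigma^f_r$, so $cl_v$ would be applicable. When $\sigma^f_r$ is failed, the store of $\sigma^f_{r'}$ contains the same conjuncts $C_r\wedge D_r\wedge(chr(H_1,H_2)=(H'_1,H'_2))\wedge C$ and is therefore also false, so both configurations are failed and $\sigma^f_{r'}\simeq\sigma^f$ by the first clause of Definition~\ref{def:SE}. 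If $cl_v$ is used, then it rewrites atoms $A'$ with $chr(A')=chr(A)$ on the non-failed $\sigma^f_r$, which is precisely the configuration analysed in Proposition~\ref{prop:servequality}: the built-in store computation there gives $\mathcal{CT}\models C^f_{r'}\leftrightarrow C^f_v$, the CHR stores coincide ($Q=Q_3$), and the $inst'$ bookkeeping gives $clean(Q,T_3)=clean(Q,T_5)$. Reusing that computation, with the fresh identifiers of $\delta$ chosen by the same $inst$ offsets, yields $\sigma^f_{r'}\simeq\sigma^f_v=\sigma^f$.

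The main obstacle I anticipate is the token-store bookkeeping in the second case: I must check that the token $v@id(A')$ is absent from the history reached at $\sigma^f_r$ (so that $cl_v$ can indeed fire there), which follows from the unfolding side-condition $v@id(S_1,S_2)\notin T_r$ together with the identifier shift performed by $inst$, and that the two token sets $T_3$ (from $cl'_r$) and $T_5$ (from $cl_r;cl_v$) become equal once restricted by $clean$ to the common CHR store $Q$. Both points are exactly the delicate step already handled at the end of the proof of Proposition~\ref{prop:servequality}, and they carry over unchanged here because the tokens added and removed along the $cl_r;cl_v$ derivation are identical to those produced by the single application of $cl'_r$.
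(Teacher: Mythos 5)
Your proposal is correct and follows essentially the same route as the paper: you derive $D''_v\theta=\emptyset$ from $\mathcal{CT}\models D_r\leftrightarrow D'_r$ and the maximality of $V$, use it to transfer applicability from $cl_r$ to $cl'_r$, and then reuse the store, guard and $clean$/token computations of Proposition~\ref{prop:servequality} to conclude $\sigma_{r'}^f\simeq\sigma^f$, exactly as the paper does (which likewise defers the final bookkeeping to that earlier proof). Your explicit treatment of the case where $cl_v$ is not used (forcing $\sigma_r^f$ to be failed) is in fact slightly more careful than the paper's, which only states this split in the setup of its case analysis.
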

\begin{proof}
Assume that
$$
\begin{array}{rl}
\sigma&\rrarrow^ {\, cl_r}
\sigma_r \rrarrow^{Solve^{*}}
\sigma_r^f \rrarrow^{cl_v}
\sigma_v\rrarrow^{Solve^{*}}
\sigma_v^f\\
&\searrow_{\, cl'_r}\sigma_{r'}\rrarrow^{Solve^{*}}
\sigma_{r'}^f\\
\end{array}
$$

The labeled arrow $\rrarrow^{Solve^{*}}$ means that only {\bf Solve} transition steps are applied.
Moreover
\begin{itemize}
           \item if $\sigma_r^f$ has the form
$\la   G, {\tt false}, T\ra$ then the derivation between the parenthesis is not present and $\sigma^f=\sigma_r^f$.
           \item the derivation between the parenthesis is present and $\sigma^f=\sigma_v^f$, otherwise.
         \end{itemize}

We first need some notation.  Let  $\sigma=\la (F_1,F_2,F_3) , C, T\ra_j$ be a built-in free configuration
and let $cl_r$ and $cl_v$ be of the form
$r@H_1\backslash H_2 \Leftrightarrow  D_r\,|\,  K,  A, C_r; T_r $ and
$v@H'_1\backslash H'_2 \Leftrightarrow  D_v\,|\,   P, C_v;T_v$ respectively,
$A=A_1 \uplus A_2$, $C_r$ is the conjunction of all the built-in constraints in the body
of $cl_r$ and $\theta$ is a substitution such that $dom(\theta) \subseteq Fv(H'_1, H'_2)$ and
\begin{equation}\label{113marzo1}
    \mathcal{CT}  \models (D_r \wedge C_r) \rightarrow chr(A_1, A_2)=(H'_1, H'_2)\theta .
\end{equation}
Furthermore let $m$ be the greatest identifier which appears in the
rule $cl_r$ and let $(P_1, T_1, m_1)=inst(P,T_v,m)$.

Then, the \emph{unfolded} rule $cl'_r$ is:
\[
r@ H_1\backslash H_2
\Leftrightarrow D_r, (D_v'\theta)\, |\, K, A_1,
P_1, C_r, C_v, chr(A_1, A_2)=(H'_1, H'_2); T'_{r}
\]
where $v @id (A_1, A_2) \not \in T_r$,
$V=\{d\in D_v\mid \mathcal{CT} \models (D_r\wedge C_r)\rightarrow d\theta\}$,
$D'_v= D_v\backslash V$, $Fv(D'_v\theta) \cap Fv(k')\theta \subseteq Fv(H_1, H_2)$, the
constraint $(D_r, (D'_v\theta))$ is satisfiable and
\begin{itemize}
    \item if $H_2'=\epsilon$ then $T'_{r}= T_r \cup T_1 \cup\{v @id (A_1)\}$
    \item if $H_2'\neq \epsilon$ then $T'_{r}=clean((  K,  A_1) , T_r) \cup T_1$.
    \end{itemize}

Since by hypothesis, $\mathcal{CT}  \models (D_r, (D_v'\theta))\leftrightarrow D_r$, we have that
\begin{equation}\label{110dic2}
    \mathcal{CT}  \models (D_r\wedge C_r)\rightarrow D_v\theta \mbox{ and } D_v'\theta=\emptyset.
\end{equation}

Let us now consider the application of the rule
$cl_r$ to $\sigma$. By definition of the \textbf{Apply'} transition step, we have that
\begin{equation}\label{110dic1}
    \mathcal{CT} \models C\rightarrow \exists_{cl_r}((chr(F_1,F_2)=(H_1, H_2))\wedge D_r)
\end{equation}
and
$$\begin{array}{l}
    \sigma_{r}=\la (Q_2, C_r),
          chr(F_1,F_2)= (H_1, H_2)\wedge D_r \wedge C, T_4\ra_{j+m},
  \end{array}
  $$
where $Q_2=(F_1,F_3,K',A')$,
$((K',A'), T_2, j+m) = inst ((K,A), T_r, j)$ and
$T_4=T \cup T_2 \cup\{r @id (F_1, F_2)\}$.

Therefore, by definition
$$ \sigma_{r}^f=\la  Q_2, C_{r}^f, \, T_4\ra_{j+m}.
  $$
  where
\begin{equation}\label{13marzo21}
\mathcal{CT}  \models C_{r}^f \leftrightarrow  C_r\wedge
     chr(F_1,F_2)= (H_1, H_2)\wedge D_r \wedge C.
\end{equation}
Let us now apply the rule $cl'_{r}$ to
$\sigma$. By (\ref{110dic1}), (\ref{110dic2}) and
by definition of the  \textbf{Apply'} transition step, we have that
$$\begin{array}{l}
    \sigma_{r'}=\la (Q, C_r, C_v, chr(A_1, A_2)= (H'_1, H'_2)),D
, T_3\ra_{j+m_1},
  \end{array}
  $$
where
\begin{itemize}
  \item $\mathcal{CT}  \models D \leftrightarrow chr(F_1,F_2)= (H_1, H_2)\wedge D_r \wedge C$,
  \item $Q=(F_1, F_3,  Q_1)$,
  \item $inst ((K, A_1,  P_1), T'_{r}, j)=(Q_1, T''_{r}, j+m_1)$ and
$T_3=T \cup T''_{r} \cup\{r @id (F_1, F_2)\}$.
\end{itemize}

Therefore, by definition
$$\sigma_{r'}^f=\la   Q, C_{r'}^f, \, T_3\ra_{j+m_1}.$$
where
$$\begin{array}{l}
       \mathcal{CT}  \models C_{r'}^f \leftrightarrow  C_r\wedge
       C_v\wedge chr(A_1, A_2)= (H'_1, H'_2)
       \wedge D.
\end{array}$$

Now, we consider the two previously obtained configurations $\sigma_r^f$ and
$\sigma^f_{r'}$. Since by hypothesis $\sigma_v^f$ is a non-failed configuration, we have that $C_{r}^f    \neq \tt false$

Now, let $A' \in Q_2$ such that $chr(A')=chr(A)$. Note that such atoms there exist, since by construction $A$ are atoms in the body of $cl_r$.

  By definition, since
$A$ are atoms in the body of $cl_r$, $dom(\theta) \subseteq Fv(H'_1, H'_2)\subseteq Fv(cl_v)$, by (\ref{13marzo21}), (\ref{113marzo1}) and (\ref{110dic2}), \comment{and (\ref{110dic1}),} we have that
\[\begin{array}{ll}
    \mathcal{CT} \models  & C_{r}^f
     \rightarrow ((chr(A_1, A_2)=(H'_1, H'_2))\wedge D_v)\theta
  \end{array}
 \]
 and therefore, since $dom(\theta) \subseteq Fv(cl_v)$, we have that
\[\begin{array}{ll}
    \mathcal{CT} \models  & C_{r}^f
     \rightarrow \exists_{cl_v}((chr(A_1, A_2)=(H'_1, H'_2))\wedge D_v).
  \end{array}
 \]
Then, since by hypothesis $cl_v$ rewrites the atom $A=(A_1, A_2)$ such that $chr(A')=chr(A'_1, A'_2)=chr (A_1, A_2)=chr(A)$, we have that
$$\begin{array}{ll}
    \sigma_{v}= \la (Q_3 ,  C_v), D', T_5\ra_{m_1},
  \end{array}
  $$
     where
     \begin{itemize}
       \item $ Q_3=(F_1, F_3,  K',  A'_1,   P_2)$,
       \item  $D'= (chr(A_1, A_2)=(H'_1, H'_2)\wedge D_v \wedge C_r\wedge
     chr(F_1,F_2)= (H_1, H_2)\wedge D_r \wedge C)$,
       \item $inst (  P, T_v, j+m) = (  P_2, T_v', m_1)$ and
       $T_5=T_4 \cup T_v' \cup\{v @id (
k'')\}$.
     \end{itemize}
Finally, by definition, we have that
$\sigma_{v}^f=\la   Q_3,C_{v}^f, \, T_5\ra_{m_1},$ where
  $$\begin{array}{ll}
    \mathcal{CT}  \models & C_{v}^f \leftrightarrow (C_v \wedge D').
  \end{array}
  $$

  If $C_{v}^f =\tt false $ then the proof is analogous to the previous case and hence it is omitted.

  Otherwise, the proof is analogous to that given for Proposition~\ref{prop:servequality} and hence it is omitted.
\end{proof}

\begin{proposition}\label{propcambioequality}
Let $\sigma_0=\langle F,c, T \rangle_m $ be a built-in configuration and let $cl$ be an annotated CHR rule such that the following holds.\begin{description}
        \item[a)] $cl= r@H_1\backslash H_2 \Leftrightarrow D\,|\,  A; T$. where
        $(H_1,  H_2)=(h_1, \dots, h_n)$,
        \item[b)]
        there exists $(K_1, K_2)= (k_1, \dots, k_n) \subseteq F$ such that
        ${\it r}@id (K_1,K_2) \not \in T $ and $\mathcal{CT}  \models c \rightarrow \exists _{cl}
        ((chr(  K_1,   K_2)=(H_1,  H_2))\wedge D)$,
        \item[c)] there exist $l \in \{1, \dots, n\}$ and $(K'_1, K'_2)= (k'_1, \dots, k'_n) \subseteq F$ such that
        $k_l=k'_l$,
        ${\it r}@id (K'_1,K'_2) \not \in T $ and $\mathcal{CT}  \models c \wedge chr(k_l)=h_l \rightarrow
        (chr(  K'_1,   K'_2)=(H_1,  H_2))$,
        \item[d)] $\sigma_0\rrarrow _{\omega'_t}\sigma$ is an $\textbf{Apply'}$ transition step which uses the clause $cl$, rewrites the atoms $(K_1, K_2)$ and such that
            $\sigma=\langle ((F \setminus K_1) \uplus A'),C, T ' \rangle_{m'}$, where $C$ is the constraint
            $(chr(  K_1,   K_2)=(H_1,  H_2)) \wedge D \wedge c$.
      \end{description}
      Then, there exists an $\textbf{Apply'}$  transition step  $\sigma_0\rrarrow _{\omega'_t}\sigma'$  which uses the clause $cl$, rewrites the atoms $(K'_1, K'_2)$ and such that
      $\sigma'=\langle ((F \setminus K'_1) \uplus A'),C', T "\rangle_{m'}$, where $C'$ is the constraint
            $(chr(  K'_1,   K'_2)=(H_1,  H_2)) \wedge D \wedge c$ and
            \begin{enumerate}
              \item \label{28sett1} $\mathcal{CT}  \models ((chr (F \setminus K_1) \wedge  A') \wedge C) \leftrightarrow ((chr (F \setminus K'_1) \wedge  A') \wedge C') $,
              \item \label{28sett2} $T''=(T' \setminus\{{\it r}@id (K_1,K_2)\}) \cup {\it r}@id (K'_1,K'_2)$.
            \end{enumerate}
      \end{proposition}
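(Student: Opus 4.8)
The plan is to verify, in order, two things: first that the alternative matching $(K'_1,K'_2)$ also licenses an \textbf{Apply'} step from $\sigma_0$ with the very same rule $cl$, and then that the resulting configuration $\sigma'$ satisfies the two stated conclusions. For applicability I must re-derive the two side conditions of \textbf{Apply'} for the primed matching. The token condition $r@id(K'_1,K'_2)\notin T$ is given outright by hypothesis c). For the guard entailment $\mathcal{CT}\models c\rightarrow\exists_{cl}((chr(K'_1,K'_2)=(H_1,H_2))\wedge D)$ I would fix an arbitrary model of $c$; by b) there are values $\bar{v}$ for the variables of $cl$ with $chr(K_1,K_2)=(H_1,H_2)$ and $D$ both holding under $\bar{v}$. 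Reading off the $l$-th component equality gives $chr(k_l)=h_l$ under $\bar{v}$, and since $k_l=k'_l$ this is also $chr(k'_l)=h_l$; feeding this into c) yields $chr(K'_1,K'_2)=(H_1,H_2)$ under $\bar{v}$, so the same $\bar{v}$ witnesses the primed guard together with $D$. Hence the primed \textbf{Apply'} step exists and produces $\sigma'=\langle(F\setminus K'_1)\uplus A', C', T''\rangle_{m'}$ of the required shape; choosing the same renaming of $cl$ for both steps, the instantiated body $A'$ is literally the same multiset, because $inst$ depends only on the body of $cl$ and the starting counter $m$, not on the matching.

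The heart of the argument is condition~(1), the logical equivalence of the two stores; call its two sides $L$ and $R$. The key observation is that, under the constraint store $C$, both matchings collapse onto the same head: from $C$ we have $chr(K_1,K_2)=(H_1,H_2)$ and, by the derivation above, also $chr(K'_1,K'_2)=(H_1,H_2)$, so the built-in part of $C$ entails $chr(k_i)=chr(k'_i)$ (argument equality) for every position $i$. Consequently each user-defined atom kept by one matching but consumed by the other has an argument-equal partner kept by the first, so by congruence — substitution of equals, exactly as already used in the $\equiv_V$ rules — the CHR stores $chr(F\setminus K_1)$ and $chr(F\setminus K'_1)$ become interchangeable once these equalities are present. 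The inclusion $L\rightarrow R$ is then immediate, since $C$ entails $C'$ and supplies all the needed equalities.

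For the reverse inclusion the subtlety is that $C'$ on its own does not force $chr(k_i)=h_i$ at the non-shared positions, because condition c) is one-directional: it produces the primed matching from the unprimed one, not conversely. Here I would re-invoke b) to pick a fresh witness for the variables of $cl$ realising the $(K_1,K_2)$-matching (possible since $c$ holds), which reinstates the equalities $chr(k_i)=chr(k'_i)$ and again lets congruence identify $chr(F\setminus K'_1)$ with $chr(F\setminus K_1)$; one then checks that the built-in components of $A'$ remain satisfiable under this re-chosen witness. Care is needed with the multiset multiplicities of the swapped atoms and with the shared atom $k_l=k'_l$, which stays on the same side of the partition in both matchings. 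Condition~(2) is by contrast pure token bookkeeping: \textbf{Apply'} adds to $T$ exactly the head token $r@id(\cdot)$ for the matched constraints together with the (identically) instantiated local tokens of $cl$; since body and counter agree, the latter contribution coincides in both steps, so the propagation histories differ only in the head token, giving $T''=(T'\setminus\{r@id(K_1,K_2)\})\cup r@id(K'_1,K'_2)$.

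I expect the main obstacle to be precisely this reverse direction of condition~(1): reconciling the one-directional nature of hypothesis c) by re-selecting the rule's fresh variables via b), and then performing the multiset accounting of consumed-versus-kept atoms exactly, so that the congruence step legitimately turns one store into the other.
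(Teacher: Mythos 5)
Your proposal is correct and follows essentially the same route as the paper: the guard entailment for $(K'_1,K'_2)$ is obtained exactly as you do, by extracting the $l$-th component equality from hypothesis \textbf{b)} and feeding it into \textbf{c)}, and condition~(2) is indeed just token bookkeeping. The only presentational difference is that the paper isolates a matching substitution $\vartheta$ with $\mathcal{CT}\models c\rightarrow (chr(K_1,K_2)=(H_1,H_2)\vartheta)$ and derives the single entailment $\mathcal{CT}\models c\rightarrow (chr(K'_1,K'_2)=chr(K_1,K_2))$, from which both directions of condition~(1) follow at once, whereas you argue the two inclusions separately with a per-model witness --- the same idea in model-theoretic clothing.
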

\begin{proof}
First of all, by definition of $\textbf{Apply'}$  transition step and since, by hypothesis ${\bf c)}$, ${\it r}@id (K'_1,K'_2) \not \in T $, we have to prove that
\[\mathcal{CT}  \models c \rightarrow \exists _{cl}
        ((chr(  K'_1,   K'_2)=(H_1,  H_2))\wedge D).
        \]
         By hypothesis $
        {\bf b)}$ and since $Fv(c) \cap Fv(cl)= \emptyset$, we have that
         $\mathcal{CT}  \models c \rightarrow \exists _{cl} (c \wedge
        (chr( k_l)=h_l)\wedge D)$. Hence the thesis follows by hypothesis ${\bf c)}$.

        Now, we have to prove~\ref{28sett1}. By hypothesis ${\bf b)}$,  we have that
        $\mathcal{CT}  \models c  \rightarrow \exists _{cl}(chr(  K_1,   K_2)=(H_1,  H_2))$.
        Therefore, there exists a substitution $\vartheta$ such that $dom(\vartheta)= Fv(H_1,  H_2)$ and
\begin{equation}\label{7sett111}
            \mathcal{CT}  \models c  \rightarrow (chr(  K_1,   K_2)=(H_1,  H_2)\vartheta).
        \end{equation}
        By hypothesis ${\bf c)}$ and since $dom(\vartheta)  \cap Fv(c, K_1, K_2) = \emptyset$, we have that
        \[\mathcal{CT}  \models (c \wedge (chr(k_l)=h_l  \vartheta ))\rightarrow
        (chr(  K'_1,   K'_2)=(H_1,  H_2)\vartheta)
        \]
         and by (\ref{7sett111}),
        $\mathcal{CT}  \models c  \rightarrow (chr( k_l)=(h_l)\vartheta)$.\\
        Then $\mathcal{CT}  \models c  \rightarrow (chr(  K'_1,   K'_2)=chr(  K_1,   K_2))$  and then the thesis.
        \\
        The proof of~\ref{28sett2} is obvious by definition of $\textbf{Apply'}$  transition step.
\end{proof}

\begin{proposition}\label{propavanza}
Let $\sigma_0=\langle F,c, T \rangle_m $ be a built-in configuration such that there exists a normal terminating derivation $\delta$ starting from $\sigma$ which ends in a configuration $\sigma$. Assume that $\delta$ uses an annotated CHR rule $cl$ such that the following holds.
\begin{description}
        \item[a)] $cl= r@H_1\backslash H_2 \Leftrightarrow D\,|\,  A; T$
        \item[b)]
        there exists $(K_1, K_2)\subseteq F$ such that $cl$ rewrites the atoms $(K_1, K_2)$ in $\delta$ and $\mathcal{CT}  \models c \rightarrow \exists _{cl} ((chr(K_1, K_2)=(H_1,  H_2))\wedge D)$

      \end{description}
Then, there exists a normal terminating derivation $\delta'$ starting from $\sigma_0$ such that
\begin{itemize}
  \item $\delta'$ uses at most the same clauses of $\delta$ and uses the rule $cl$ in the first $\bf Apply'$ transition step, in order to rewrite the atoms $(K_1, K_2)$,
  \item $\delta'$ ends in a configuration $\sigma' $ such that $\sigma \simeq \sigma'$.
\end{itemize}
\end{proposition}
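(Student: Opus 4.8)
The plan is to \emph{anticipate} the firing of $cl$ to the very front of the derivation. First I would absorb the leading \textbf{Solve'} steps of $\delta$: since $\delta$ is normal, its initial segment consists of \textbf{Solve'} transitions that move all built-in constraints of $F$ into the store, reaching a built-in free configuration $\hat\sigma_0$ whose built-in store $\hat c$ satisfies $\mathcal{CT}\models\hat c\to c$. By monotonicity of entailment, hypothesis (b) still gives $\mathcal{CT}\models\hat c\to\exists_{cl}((chr(K_1,K_2)=(H_1,H_2))\wedge D)$, so $cl$ is applicable on the identified atoms $(K_1,K_2)$ at $\hat\sigma_0$. Moreover, since $cl$ actually rewrites exactly $(K_1,K_2)$ later in $\delta$, these atoms (in particular those of $K_2$, which $cl$ deletes) are never removed by the intermediate steps, and the built-in store only grows; hence $cl$ stays applicable on $(K_1,K_2)$ at every configuration up to the point where it fires in $\delta$.

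I would then proceed by induction on the number $j$ of \textbf{Apply'} steps of $\delta$ that precede the firing of $cl$. The base case $j=0$ is immediate (realigning, if necessary, the matched atoms via Proposition~\ref{propcambioequality}). For the inductive step I swap $cl$ past the \textbf{Apply'} step immediately preceding it. \textbf{Solve'} steps commute with $cl$ trivially, as they only enlarge the built-in store and touch no CHR atom. For an \textbf{Apply'} step using a rule $cl'$, the atoms deleted by $cl'$ are disjoint from $(K_1,K_2)$ (otherwise $cl$ could not later fire on $(K_1,K_2)$), so firing $cl$ first deletes $K_2$, keeps $K_1$, and leaves $cl'$ matchable; firing $cl'$ afterwards on the corresponding atoms (again adjusted by Proposition~\ref{propcambioequality} when the swap forces an equivalent but syntactically different matching) reaches a configuration equal to the original one up to the items discussed below.

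The remaining work is the bookkeeping of identifiers and tokens. The two firing orders assign different fresh identifiers to the bodies through $inst$ (Definition~\ref{definst}) and accumulate the tokens $r@id(\cdot)$ together with the local stores in a different order. I would show that these discrepancies are exactly the ones tolerated by the equivalence $\simeq$ of Definition~\ref{def:SE}: the built-in stores stay logically equivalent because each step contributes the same matching equalities and guard, and the token stores coincide after applying $clean$ (Definition~\ref{def:clean}), precisely as in the final part of the proof of Proposition~\ref{prop:servequality}. Composing the $j$ swaps then yields a normal terminating derivation $\delta'$ that fires $cl$ on $(K_1,K_2)$ in its first \textbf{Apply'} step, uses at most the clauses of $\delta$, and ends in a configuration $\simeq$-equivalent to $\sigma$.

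The main obstacle is the rule--rule commutation in the case where the preceding rule $cl'$ \emph{keeps} (through its $H_1'$ part) an atom that $cl$ \emph{deletes}: anticipating $cl$ removes that atom and can block the naive replay of $cl'$. Handling this delicate simpagation/propagation overlap is where the argument becomes technical; I expect to exclude it on the atoms that enable $cl$ to fire and, where a direct replay of the tail of $\delta$ is obstructed, to complete $\delta'$ to an $\simeq$-equivalent final configuration using normal termination of the ambient derivations. Maintaining the invariant $\sigma\simeq\sigma'$ \emph{exactly} (the relation requires $G=G'$ up to $clean$) is what forces the careful identifier and token tracking of the previous paragraph.
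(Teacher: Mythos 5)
Your overall strategy (anticipating the firing of $cl$ by commuting it past the preceding steps of $\delta$) is reasonable, and it is far more explicit than what the paper offers: the paper's entire proof of this proposition is the sentence ``The proof is obvious by definition of derivation.'' However, your argument has a genuine gap, and it sits exactly at the point you yourself flag as ``the main obstacle.'' When an intermediate \textbf{Apply'} step of $\delta$ uses a rule $cl'$ that matches an atom of $K_2$ in its \emph{kept} head $H_1'$, that atom is still available for $cl$ later in $\delta$, so the hypotheses of the proposition are satisfied; but once you fire $cl$ first, $K_2$ is deleted and $cl'$ may no longer be applicable. Your proposed escape --- completing $\delta'$ to an $\simeq$-equivalent final configuration ``using normal termination'' --- does not work: normal termination only guarantees that $\delta'$ terminates, not that it terminates in a configuration $\simeq$-equivalent to $\sigma$; for that you would need confluence, which is not among the hypotheses. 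Definition~\ref{def:SE} requires the final goal stores to be equal (up to $clean$ on the token stores), so a blocked replay of $cl'$ genuinely changes the outcome: take $F=\{k\#1,h\#2\}$ with $cl=r@\,k\Leftrightarrow q$ and $cl'=v@\,k\setminus h\Leftrightarrow s$; the order $cl'$ then $cl$ ends with goal $\{q,s\}$, while firing $cl$ first ends with $\{q,h\}$, and these are not $\simeq$-equivalent.

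This is not merely a defect of your write-up: it shows that the proposition, read literally, is delicate, and the paper's one-line proof does not address the issue either. In the paper the proposition is only invoked inside the proof of Theorem~\ref{theo:n1completeness}, under the safe-replacement hypothesis $U^{\#}_P(cl)=\emptyset$ of Definition~\ref{def:Pposeneg}, whose condition~\ref{bi} is precisely what excludes a rule firing on a proper overlap with the atoms on which $cl$ fires. To close your proof you should either import such a side condition into the statement, or restrict the claim to derivations in which no intermediate \textbf{Apply'} step touches any atom of $K_1\uplus K_2$; under that restriction your swapping induction, together with the identifier and token bookkeeping via $inst$ and $clean$ that you describe (and Proposition~\ref{propcambioequality} for realigning matchings), does go through.
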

\begin{proof}
The proof is obvious by definition of derivation.
\end{proof}

\noindent Hence we have the following result.\\
\setcounter{theorem}{0}
\begin{theorem}
Let $P$ be an annotated program,  $cl$ be a rule in $P$ such that
$cl$ can be safely replaced
in $P$ according to Definition~\ref{def:nsafedel}. Assume also that
\[\begin{array}{l}
  P'  =  (P\setminus   \{cl\} ) \, \cup
   Unf_{P}(cl).
\end{array}
\]

Then $\mathcal{QA'}_{P}(G)=\mathcal{QA'}_{P'}(G)$ for any arbitrary
goal $G$.
\end{theorem}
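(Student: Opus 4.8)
The plan is to factor the argument through the intermediate program $P'' = P \cup Unf_P(cl)$, which differs from $P'$ only by the presence of the original rule $cl$. First I would observe that $P''$ is obtained from $P$ by repeatedly adding one unfolded rule at a time, so applying Proposition~\ref{prop:equality} once for each rule of $Unf_P(cl)$ yields $\mathcal{QA'}_{P''}(G) = \mathcal{QA'}_P(G)$ for every goal $G$. It then remains to prove $\mathcal{QA'}_{P'}(G) = \mathcal{QA'}_{P''}(G)$, i.e. that deleting $cl$ from $P''$ does not change the qualified answers; combining the two equalities gives the theorem. Throughout I would use Proposition~\ref{prop:solonorm} to restrict attention to normal derivations.

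The inclusion $\mathcal{QA'}_{P'}(G) \subseteq \mathcal{QA'}_{P''}(G)$ is the easy one and rests on a single structural remark: by Definition~\ref{def:unf} every rule in $Unf_P(cl)$ has exactly the head $H_1 \setminus H_2$ of $cl$, and by Condition~iii) of Definition~\ref{def:nsafedel} its guard is logically equivalent to the guard $D$ of $cl$. Hence $cl$ and every unfolded rule are applicable (via \textbf{Apply'}) to precisely the same configurations, with the same matched constraints and the same token side-condition $r@id(H_1,H_2) \notin T$, differing only in the body they produce. Since $P' \subseteq P''$, any $P'$-derivation is a $P''$-derivation; and if its final non-failed configuration admitted an application of $cl$ in $P''$, then — using Condition~ii), which guarantees $Unf_P(cl) \neq \emptyset$ — some unfolded rule of $P'$ would be applicable to it as well, contradicting finality in $P'$. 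Thus the $P'$-final configuration is also $P''$-final and yields the same qualified answer.

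The converse inclusion $\mathcal{QA'}_{P''}(G) \subseteq \mathcal{QA'}_{P'}(G)$ is the core of the argument and is where I expect the main difficulty. Here I would take a normal, terminating, non-failed $P''$-derivation $\delta$ and induct on the number of applications of $cl$ in it, each time removing one such application by folding it together with a later rule application into a single step of an unfolded rule of $P'$ (via Proposition~\ref{lemma:servcomplete}). Concretely, for the first firing of $cl$ I would use Proposition~\ref{propavanza} to rearrange $\delta$ so that this firing is performed as early as possible, producing the body atoms $S_1, S_2$ of $cl$. The decisive point is then to show that these atoms are consumed, later in the derivation, by exactly one rule $cl_v$ matching the full set $(S_1,S_2)$ with $(cl_v, id(S_1,S_2)) \in U^+_P(cl)$: Condition~i), i.e. $U^\#_P(cl) = \emptyset$, rules out both the possibility that some rule fires on a proper combination of $cl$'s body atoms together with foreign constraints (Condition~\ref{bi} of Definition~\ref{def:Pposeneg}, the multiple-head obstruction of Example~\ref{ex:unicatesta}) and the possibility that a rule becomes applicable on these atoms only through a run-time matching that was not anticipatable statically (Condition~\ref{ai}, the matching obstruction of Example~\ref{ex:matching}). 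Moreover, since the derivation terminates in a configuration to which no rule applies, the atoms $(S_1,S_2)$ cannot survive unrewritten, because Condition~iii) makes the whole guard of $cl_v$ (instantiated by the matching $\theta$) entailed by $D$ together with the built-in part of $cl$'s body, so $cl_v$ would otherwise still be applicable.

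Once this consuming application of $cl_v$ is located, Proposition~\ref{propcambioequality} lets me assume, up to the configuration equivalence $\simeq$, that $cl_v$ rewrites precisely the atoms introduced as $S_1,S_2$ (rather than some other matching sharing a constraint with them), so that Proposition~\ref{lemma:servcomplete} applies and replaces the pair ``$cl$ then $cl_v$'' by the corresponding unfolded rule $cl'_r \in Unf_P(cl) \subseteq P'$, yielding an equivalent derivation with one fewer use of $cl$. The induction terminates in a derivation that uses no occurrence of $cl$ — hence a $P'$-derivation — ending in a configuration $\simeq$-equivalent to the final configuration of $\delta$ and therefore producing the same qualified answer. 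The genuinely hard part is the bookkeeping hidden in this last step: matching the token stores after the fold (the local token store of the unfolded rule and the $clean$ operation of Definition~\ref{def:clean} must reproduce exactly the tokens generated by the two separate applications, with special care for the propagation case $H_2' = \epsilon$), and ensuring that the rearrangements of Propositions~\ref{propavanza} and~\ref{propcambioequality} interact correctly with normality of the derivation; these are exactly the points that the auxiliary propositions of the appendix are designed to discharge.
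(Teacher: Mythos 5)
Your proposal is correct and follows essentially the same route as the paper's proof: factoring through $P'' = P \cup Unf_P(cl)$ via Proposition~\ref{prop:equality}, handling the inclusion $\mathcal{QA'}_{P'}(G) \subseteq \mathcal{QA'}_{P''}(G)$ by the head/guard-equivalence argument from Conditions ii) and iii), and discharging the converse inclusion by induction on the applications of $cl$, using Propositions~\ref{propavanza} and~\ref{propcambioequality} to normalize the position and the matched atoms of the consuming rule and Proposition~\ref{lemma:servcomplete} to fold each ``$cl$ then $cl_v$'' pair into one unfolded-rule step. The only cosmetic difference is that the paper uses Proposition~\ref{propavanza} to pull the consuming step $s_2$ immediately after the $cl$-step $s_1$ rather than to move $s_1$ earlier, but the effect on the argument is the same.
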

\setcounter{theorem}{3}

\begin{proof}
By using a straightforward inductive argument and by
Proposition~\ref{prop:equality}, we have that
$\mathcal{QA'}_{P}(G)=\mathcal{QA'}_{P''}(G)$ where
 \[
  P'' = P \, \cup \, Unf_{P}(cl),
\]
for any arbitrary goal $G$.

Then, to prove the thesis, we have only to prove that
\[\mathcal{QA'}_{P'}(G) = \mathcal{QA'}_{P''}(G).\]
In the following, we assume that $cl$ is of the form $r@H_1\backslash H_2 \Leftrightarrow D\,|\,  A; T$.
We prove the two inclusions separately.
\begin{description}
  \item[{\bf ($\mathcal{QA'}_{P'}(G) \subseteq \mathcal{QA'}_{P''}(G)$) }]
  The proof is by contradiction. Assume that there exists $Q \in \mathcal{QA'}_{P'}(G) \setminus \mathcal{QA'}_{P''}(G)$. By definition there exists a derivation
  \[\delta=\langle I(G),
\texttt{true}, \emptyset\rangle_m\rightarrow^{*}_{\omega'_t}
\langle   K, d, T\rangle_n\not\rightarrow_{\omega'_t}\]
in $P'$, such that $Q =
\exists _{-Fv(G)}(chr(   K)\wedge d)$. Since $P' \subseteq P''$, we have that there exists the derivation
\[\langle I (G),
\texttt{true}, \emptyset\rangle_m\rightarrow^{*}_{\omega'_t}
\langle   K, d, T\rangle_n\] in $P''$. Moreover, since $P'' =P '\cup \{cl\} $ and $Q\not  \in \mathcal{QA'}_{P''}(G)$, we have that there exists a derivation step $\langle   K, d, T\rangle_n\rightarrow_{\omega'_t}
\langle   K_1, d_1, T_1\rangle_{n_1}$ by using the rule $cl$. \\
Since $cl$ can be safely replaced
in $P$, we have that there exists an unfolded rule $cl' \in Unf_{P}(cl)$ such that $cl'$ is of the form \[r@H_1\backslash H_2 \Leftrightarrow  D'\,|\,  A'; T',\]
$\mathcal{CT}  \models D \leftrightarrow D'$ and by construction $cl' \in P'$. \\
Then, there exists a derivation step $\langle   K, d, T\rangle_n\rightarrow_{\omega'_t}
\langle   K_2, d_2, T_2\rangle_{n_2}$ in $P'$ (by using the rule $cl'$) and then we have a contradiction.

  \item[{\bf ($\mathcal{QA'}_{P''}(G) \subseteq \mathcal{QA'}_{P'}(G)$) }]
  First of all, observe that by Proposition~\ref{prop:solonorm},
  $\mathcal{QA'}_{P''}(G)$ can be calculated by
considering only non-failed normal terminating derivations. \\
Then, for each non-failed normal terminating derivation $\delta$ in $P''$, which uses the rule $cl$ after the application of $cl$, we obtain the configuration $\sigma_{1}$ and then a non-failed built-in free configuration $\sigma_{1}^f$. Now, let $C$ be the built-in constraint store of $\sigma_1^f$.

Since by hypothesis
$cl$ can be safely replaced
in $P$, following Definition~\ref{def:nsafedel}, we have there exists at least an atom
$ k \in   A$, such that there exists a corresponding atom (in the obvious sense) $k'$ which is rewritten in $\delta$ by using a rule $cl'$ in $P$. Therefore, without loss of generality, we can assume that

\[\delta=\langle I (G),
\texttt{true}, \emptyset\rangle_m\rightarrow^{*}_{\omega'_t} \sigma \rightarrow_{\omega'_t} \sigma_{1}\rightarrow^{*}_{\omega'_t}\sigma_{1}^f\rightarrow^{*}_{\omega'_t} \sigma_2 \rightarrow_{\omega'_t} \sigma_{3}
\rightarrow^{*}_{\omega'_t}\sigma_{4}
\not \rightarrow^{*}_{\omega'_t}\]
 where the transition step $s_1=\sigma \rightarrow_{\omega'_t} \sigma_{1}$ is the first {\bf Apply'} transition step which uses the clause $cl$ and $ s_2= \sigma_2 \rightarrow_{\omega'_t} \sigma_{3}$ is the first {\bf Apply'} transition step which rewrites an atom $k'$, corresponding to an atom $k$ in the body of $cl$ introduced by $s_1$.
 Since by hypothesis
$cl$ can be safely replaced
in $P$ and by Proposition~\ref{propcambioequality} we can assume that $cl'$ rewrites in $s_2$ only atoms corresponding (in the obvious sense) to atoms in $A$. Moreover, since by hypothesis
$cl$ can be safely replaced
in $P$ and by Proposition~\ref{propavanza}, we can assume that $s_2$ is the first {\bf Apply'} transition step after $s_1$.
Then, the thesis follows since by hypothesis
$cl$ can be safely replaced
in $P$, by Proposition~\ref{lemma:servcomplete} and by a straightforward inductive argument.

 \comment{ Moreover we can assume that $\sigma = \la F, c, T \ra _{n}$, $\sigma_1 = \la F_1, c_1, T_1 \ra _{n_1}$, $\sigma_2 = \la F_2, c_2, T_2 \ra _{n_2}$, $k' \in F_2$, the transition step $s_2$ uses the clause $cl'=v@H_1'\backslash H_2' \Leftrightarrow D '\,|\, B; T'\in P$ and let us to assume that the clause $cl'$ rewrites the atoms $k', A_1 \subseteq F_2$.\\
By definition of {\bf Apply'} transition step, there exists $h' \in H_1'\uplus H_2'$ such that
\begin{equation}\label{6giugno0}
    \mathcal{CT} \models c_2\rightarrow \exists_{cl'}((chr(k)=h')\wedge D').
\end{equation}
Let $H' = (H_1'\uplus H_2')\setminus \{h'\}$. Then, since $cl$ can be safely replaced
in $P$, we have that $U^{\#}_P(cl) =\emptyset$ and therefore, by condition~\ref{bi} of Definition~\ref{def:Pposeneg}, there exist the atoms
$A' \subseteq A \setminus \{k\}$ in the body of the clause $cl$ such that
\begin{equation}\label{6giugno1}
    \mathcal{CT}  \models (c_2 \wedge (chr(k)=h')) \rightarrow (chr(A')=H').
\end{equation}
By (\ref{6giugno0}) and since $Fv(c_2) \cap Fv(cl')=\emptyset$
 we have that
$\mathcal{CT}  \models c_2  \rightarrow \exists_{cl'}(c_2 \wedge (chr(k)=h')\wedge D' )$ and therefore,
by (\ref{6giugno1}), since $Fv(c_2) \cap Fv(cl')=\emptyset$, we have that
$\mathcal{CT}  \models c_2  \rightarrow \exists_{cl'}((chr(k)=h')\wedge chr(A')=H'\wedge D')$. Moreover
Therefore, we can use the clause $cl'$ in order to rewrite the atoms in $\delta$, corresponding to $A \cup \{k\}$ in  the body of $cl$.

Now, observe that
$\mathcal{CT}  \models c_2  \rightarrow \exists_{cl'}((chr(k)=h')\wedge chr(A')=H')$ and that
$\mathcal{CT}  \models c_2  \rightarrow \exists_{cl'}((chr(k')=h')\wedge chr(A_1)=H')$ and therefore
$\mathcal{CT}  \models k)=h')\wedge chr(A'$

Moreover, since $cl$ can be safely replaced
in $P$ (and therefore , a new rule $cl'$ of the form $r@ H_1\backslash H_2 \Leftrightarrow D' \, |\, B; T'$ can be obtained by unfolding the atom $k$ in the body of the rule $cl$ with the rule $cl_v$, where  $\mathcal{CT}  \models D \leftrightarrow D'$.
Without loss of generality we can assume that in the derivation $\delta$, the rule $cl_v$ is applied to the considered configuration $\sigma_{r}^f$ (in order to rewrite the atom $  k'$ corresponding to $  k\in   A$).

In both the cases the proof is straightforward, by using the previous observations and  Proposition~\ref{lemma:servcomplete}. Hence the thesis holds.}
\end{description}
\end{proof}

\subsection{Termination and confluence}

We first prove the  correctness of unfolding w.r.t. termination.\\

\setcounter{proposition}{2}
\begin{proposition}[{\sc Normal Termination}]

Let $P$ be a CHR program and
let $P_0, \ldots, P_n$ be an U-sequence starting from $Ann(P)$. $P$ satisfies
normal termination if and only if $P_n$  satisfies normal termination.
\end{proposition}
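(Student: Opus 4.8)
The plan is to reduce the statement to a single safe-replacement step and then exploit the two simulation results already proved in the appendix, Proposition~\ref{prop:servequality} and Proposition~\ref{lemma:servcomplete}. First I would use Lemma~\ref{lemma:intermequiv} to observe that $P$ admits an infinite normal derivation if and only if $Ann(P)=P_0$ does, since that lemma matches derivations step by step and preserves the number of \textbf{Apply}/\textbf{Apply'} transitions. By induction on the length of the U-sequence it then suffices to treat one step, i.e. to show that for a rule $cl$ that can be safely replaced in $P$ the program $P'=(P\setminus\{cl\})\cup Unf_P(cl)$ is normally terminating iff $P$ is. Throughout I would use the elementary fact that a normal derivation is infinite iff it contains infinitely many \textbf{Apply'} steps: between two consecutive \textbf{Apply'} steps only finitely many \textbf{Solve'} steps can occur, because each \textbf{Apply'} adds a finite body and each \textbf{Solve'} removes one built-in constraint.

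For the direction ``$P$ normally terminating $\Rightarrow$ $P'$ normally terminating'' I would argue contrapositively. Given an infinite normal derivation $\delta'$ in $P'$, I translate it into $P$: every step using a rule of $P\setminus\{cl\}\subseteq P$ is copied verbatim, while every \textbf{Apply'} step using an unfolded rule $cl'\in Unf_P(cl)$ is replaced, by Proposition~\ref{prop:servequality}, by one or two \textbf{Apply'} steps using $cl$ and the corresponding $cl_v$ (both in $P$), reaching a $\simeq$-equivalent built-in free configuration. Since $\simeq$ (Definition~\ref{def:SE}) equates the stores, the built-in constraints up to logical equivalence and the cleaned token stores, it preserves applicability of transitions, so the construction iterates. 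Each step of $\delta'$ yields at least one \textbf{Apply'} step in $P$, and the result is normal by construction (every \textbf{Apply'} is performed at a built-in free configuration), hence an infinite normal derivation in $P$, a contradiction.

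For the converse ``$P'$ normally terminating $\Rightarrow$ $P$ normally terminating'' I again argue contrapositively, starting from an infinite normal derivation $\delta$ in $P$. If $cl$ is applied only finitely often, a tail of $\delta$ uses only rules of $P\setminus\{cl\}\subseteq P'$ and is already an infinite normal derivation in $P'$. Otherwise $cl$ is applied infinitely often, and I would build an infinite normal derivation $\delta'$ in $P'$ by an on-the-fly simulation that keeps $\delta'$ synchronized with $\delta$, the only discrepancy being that each bundle $A$ of still-unconsumed atoms introduced by an application of $cl$ in $\delta$ is represented in $\delta'$ by the already unfolded body produced by the matching $cl_v$. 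An application of $cl$ is mirrored by the unfolded rule $cl'\in Unf_P(cl)$, which is applicable because by safe replacement it has the same head and, by condition iii) of Definition~\ref{def:nsafedel}, a guard that is $\mathcal{CT}$-equivalent to $D$; an application of $cl_v$ consuming such a bundle is mirrored by no step; every other step is mirrored identically. The local correctness of the first two cases, i.e. that the mirrored configurations stay $\simeq$-equivalent, is exactly Proposition~\ref{lemma:servcomplete}.

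The main obstacle is justifying this simulation, and it is here that the full strength of safe replacement is needed. The crucial point is that a bundle $A$ introduced by $cl$ can be consumed only by a legitimate full unfolding, never partially nor together with external constraints: this is precisely what $U^{\#}_P(cl)=\emptyset$ guarantees, Condition~\ref{ai} excluding a rule that fires on a proper matching subset of $A$ which is not an unfolding, and Condition~\ref{bi} excluding a rule that matches an atom of $A$ with part of its multi-head while completing the match with atoms outside $A$, together with the fact that all rules in $U^{+}_P(cl)$ have heads whose atoms share the same variables. Consequently every rule of $\delta$ that touches a pending bundle is one of the $cl_v$ steps handled above, so the three cases are exhaustive, intervening steps are bundle-independent and commute with the anticipated unfolding, and the relation is maintained. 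Finally, since a bundle must be created by a $cl$ step before it can be consumed, the number of bundle-consuming $cl_v$ steps is bounded by the number of $cl$ steps; as $cl$ is applied infinitely often and each such application is mirrored by a genuine $cl'$ step, $\delta'$ still contains infinitely many \textbf{Apply'} steps and is therefore an infinite normal derivation in $P'$, contradicting its normal termination. The bookkeeping of identifiers and token stores along the simulation is absorbed by the renaming and cleaning clauses of $\simeq$.
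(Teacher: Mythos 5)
Your proof follows essentially the same route as the paper's: the paper likewise reduces the statement to Lemma~\ref{lemma:intermequiv} for the passage from $P$ to $Ann(P)$ and then invokes Proposition~\ref{prop:servequality} and Proposition~\ref{lemma:servcomplete} together with ``a straightforward inductive argument'' for each step of the U-sequence. The extra detail you supply for the converse direction (rearranging the derivation so that the consuming $cl_v$ step immediately follows the $cl$ step, justified by $U^{\#}_P(cl)=\emptyset$ and the variable condition on heads) is exactly the bookkeeping the paper delegates to Propositions~\ref{propcambioequality} and~\ref{propavanza} in the proof of Theorem~\ref{theo:n1completeness}, so the two arguments coincide.
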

\setcounter{proposition}{7}

\begin{proof}
By Lemma~\ref{lemma:intermequiv}, we have that $P$ is normally terminating if and only if $Ann(P)$ is normally terminating.
Moreover from Proposition~\ref{prop:servequality} and Proposition~\ref{lemma:servcomplete} and by using a straightforward inductive argument, we have that
for each $i=0, \ldots,n-1$, $P_i$ satisfies normal termination if and only if $P_{i+1}$ satisfies the normal termination too and then the thesis.
\end{proof}

The following lemma relates the $\approx$, $\simeq$
and $\equiv_V$ equivalences.

\begin{lemma}\label{lem:relrel}
Let $\sigma, \sigma'$ be final configurations in ${\it Conf_t}$, $\sigma_1, \sigma_2, \sigma'_1, \sigma'_2 \in {\it Conf'_t}$ and let $V$ be a set of variables.
\begin{itemize}
  \item If $\sigma_1\approx \sigma$, $\sigma'_1\approx \sigma'$ then  $\sigma_1\equiv_{V}\sigma'_1$ if and only if $\sigma\equiv_{V}\,\sigma'$.
  \item If $\sigma_1\simeq \sigma_2$, $\sigma'_1\simeq \sigma'_2$ and  $\sigma_1\equiv_{V}\,\sigma'_1$ then $\sigma_2\equiv_{V}\,\sigma'_2$.
\end{itemize}
\end{lemma}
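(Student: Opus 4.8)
The plan is to show that both the cross-system relation $\approx$ (restricted to final configurations) and the relation $\simeq$ are absorbed by $\equiv_V$, after which everything follows from $\equiv_V$ being an equivalence relation. The key technical lever is that in both Definition~\ref{def:naltraeq} and Definition~\ref{def:n1altraeq} the relation $\equiv_V$ is defined as the \emph{smallest} equivalence relation closed under a list of generating clauses. Hence any property that is reflexive, symmetric and transitive and that holds on each generating clause automatically holds on all of $\equiv_V$, and I would use this induction principle throughout.

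For the second item I would first establish the containment $\simeq\ \subseteq\ \equiv_V$ on ${\it Conf'_t}$. If $\sigma_1,\sigma_2$ are both failed this is immediate from the failed-configuration clause. Otherwise $\sigma_1=\la G,D,T\ra$ and $\sigma_2=\la G,D',T'\ra$ with $\mathcal{CT}\models D\leftrightarrow D'$ and $clean(G,T)=clean(G,T')$, and I would chain the cleaning clause, the logical-equivalence-of-built-in-store clause (which applies because $\mathcal{CT}\models D\leftrightarrow D'$ forces the required equivalence of the two existential closures), and the cleaning clause in reverse, to obtain $\sigma_1\equiv_V\sigma_2$. Granting this, the second item follows purely by the equivalence-relation laws: from $\sigma_1\simeq\sigma_2$, $\sigma_1'\simeq\sigma_2'$ and $\sigma_1\equiv_V\sigma_1'$ we get $\sigma_2\equiv_V\sigma_1\equiv_V\sigma_1'\equiv_V\sigma_2'$.

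The first item is the substantial one, because it bridges the two different shapes of configuration. Recall that for a final configuration of ${\it Conf_t}$ the goal store is empty (non-failed case) or the built-in store is \texttt{false} (failed case), and that in $\omega_t$ the goal store carries no identifiers. I would introduce two translations: the fusion map $f(\la G,S,C,T\ra)=\la G\uplus S,C,T\ra$ from ${\it Conf_t}$ to ${\it Conf'_t}$, and the splitting map $g(\la S,C,T\ra)=\la S_{bi},S_{chr},C,T\ra$ in the other direction, where $S_{bi}$ and $S_{chr}$ are the built-in and CHR parts of the store. Checking each generating clause of $\equiv_V$ in turn, I would show that $f$ and $g$ both send $\equiv_V$-pairs to $\equiv_V$-pairs; the only clauses needing care are identifier renaming and token cleaning, and they go through precisely because $id(G)=\emptyset$ in ${\it Conf_t}$, so renaming (resp. cleaning with respect to) the whole fused store agrees with renaming (resp. cleaning with respect to) only the CHR store. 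Finally, from $\sigma_1\approx\sigma$ with $\sigma$ final and non-failed one reads off an identifier renaming $\rho$ with $\sigma_1=f(\sigma)\rho$, so $f(\sigma)\equiv_V\sigma_1$ by the renaming clause, and symmetrically $g(\sigma_1)\equiv_V\sigma$; the failed case is handled directly, since $\approx$ preserves the built-in store and all failed configurations are $\equiv_V$-related. Combining, $\sigma\equiv_V\sigma'$ yields $\sigma_1\equiv_V f(\sigma)\equiv_V f(\sigma')\equiv_V\sigma_1'$ via $f$, and conversely $\sigma_1\equiv_V\sigma_1'$ yields $\sigma\equiv_V g(\sigma_1)\equiv_V g(\sigma_1')\equiv_V\sigma'$ via $g$, which is the desired biconditional.

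I expect the main obstacle to be exactly this structural mismatch between the separate goal and CHR stores of ${\it Conf_t}$ and the single fused store of ${\it Conf'_t}$: one must verify that the identifier-renaming and cleaning clauses are respected under $f$ and $g$, and the verification hinges on the crucial fact that goal-store constraints in $\omega_t$ are unidentified, so that at a final configuration the $\omega_t$ cleaning that refers to the CHR store coincides with the $\omega'_t$ cleaning that refers to the entire store. A secondary subtlety is justifying that $\mathcal{CT}\models D\leftrightarrow D'$ yields the equivalence of the two existential closures appearing in the logical-equivalence clause, which needs the observation that a variable free in one store but not the other cannot affect its truth value.
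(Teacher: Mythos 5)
Your proposal is correct and follows essentially the same route as the paper's (very terse) proof: the paper likewise reduces the second item to the containment $\simeq\ \subseteq\ \equiv_V$, and reduces the first item to the observation that a final configuration of ${\it Conf_t}$ is either failed or has an empty goal store, so that $\approx$ degenerates to an identifier renaming already absorbed by a generating clause of $\equiv_V$. Your fusion/splitting maps and the induction over the generating clauses of the smallest equivalence relation are just a careful spelling-out of the details the paper leaves implicit.
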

\begin{proof}
The proof of the first statement follows by definition of $\approx$ and by observing that if $\sigma$ is a final configuration in ${\it Conf_t}$, then $\sigma$
has the form $\langle  G,
  S, {\tt false} , T\rangle_n$  or it
has the form $\langle \emptyset,  S, c,T \rangle_n$.

The proof of the second statement is straightforward, by observing that if $\sigma_1\simeq \sigma_2$, then
$\sigma_1 \equiv_{V}\,\sigma_2$ for each set of variables $V$.
\end{proof}

\setcounter{theorem}{1}
\begin{theorem}[{\sc Confluence}] Let $P$ be a CHR program and let $P_0, \ldots, P_n$
be an NRU-sequence starting from $P_0= Ann(P)$. $P$ satisfies confluence if and only if $P_n$
satisfies confluence too.
\end{theorem}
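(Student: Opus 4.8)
The plan is to reduce the statement to a single replacement step between annotated programs and then transport joinability in both directions by reusing the step-wise simulation results already proved for Theorem~\ref{theo:n1completeness}. First I would pass from $P$ to $Ann(P)$: by Lemma~\ref{lemma:intermequiv} the $\omega_t$-derivations of $P$ and the $\omega'_t$-derivations of $Ann(P)$ correspond up to the equivalence $\approx$, and by Lemma~\ref{lem:relrel} this correspondence carries $\equiv_V$ between the two semantics on the configurations at which the divergent derivations are compared, so $P$ is confluent iff $Ann(P)$ is. It then remains to show that a single non-recursive safe replacement preserves confluence, i.e. for an annotated program $Q$ and a rule $cl_r$ that can be non-recursively safely replaced in $Q$, the program $Q'=(Q\setminus\{cl_r\})\cup Unf_Q(cl_r)$ is confluent iff $Q$ is; the general case follows by induction on the length of the NRU-sequence. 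I would factor this through the intermediate program $Q''=Q\cup Unf_Q(cl_r)$, proving (A) $Q$ confluent iff $Q''$ confluent, and (B) $Q''$ confluent iff $Q'$ confluent.

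For part (A) the added unfolded rules are redundant. By Proposition~\ref{prop:servequality}, every application of an unfolded rule $cl'_r$ from a built-in free configuration is matched in $Q$ by an application of $cl_r$ followed by the relevant $cl_v$, reaching a $\simeq$-equivalent configuration, while every other rule step and every \textbf{Solve'} step is common to $Q$ and $Q''$ (Solve' steps being exactly what is used to reach the built-in free synchronization points where the rule-simulations apply). Since $\simeq$ implies $\equiv_V$ for every $V$ (Lemma~\ref{lem:relrel}), any $Q''$-derivation rewrites into a $Q$-derivation ending in a $\equiv_{Fv(\sigma)}$-equivalent state, and conversely every $Q$-derivation is already a $Q''$-derivation. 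Hence the reachability relations of $Q$ and $Q''$ coincide up to $\equiv_V$, and confluence transfers both ways by transitivity of $\equiv_V$.

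For part (B) the essential direction is that deleting $cl_r$ from the confluent program $Q''$ leaves a confluent program. Here I would use the conditions of Definition~\ref{def:nsafedel}: because $U^{\#}_Q(cl_r)=\emptyset$, whenever a body atom produced by $cl_r$ is consumed by a completing rule $cl_v$ this is in fact a full unfolding, so---fixing which atoms are matched with Proposition~\ref{propcambioequality} and commuting the completing step forward with Proposition~\ref{propavanza}---each firing of $cl_r$ in a $Q''$-derivation can be paired with a subsequent $cl_v$-firing and merged, by Proposition~\ref{lemma:servcomplete}, into a single $cl'_r$-step of $Q'$ reaching a $\simeq$-equivalent configuration; this is precisely the mechanism of the inclusion $\mathcal{QA'}_{Q''}(G)\subseteq\mathcal{QA'}_{Q'}(G)$ in the proof of Theorem~\ref{theo:n1completeness}, now tracked on whole configurations rather than on answers. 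Combining the two simulations I would then prove confluence of $Q'$: given $\sigma\rrarrow_{\omega'_t}^{*}\sigma_1$ and $\sigma\rrarrow_{\omega'_t}^{*}\sigma_2$ in $Q'$, transport both to $Q''$ (trivially, as $Q'\subseteq Q''$), join them there at states $\rho_1\equiv_{Fv(\sigma)}\rho_2$, and transport the joining derivations back to $Q'$ by the above merging, obtaining $\sigma_1\rrarrow_{\omega'_t}^{*}\rho_1'\simeq\rho_1$ and $\sigma_2\rrarrow_{\omega'_t}^{*}\rho_2'\simeq\rho_2$; Lemma~\ref{lem:relrel} then yields $\rho_1'\equiv_{Fv(\sigma)}\rho_2'$. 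The reverse implication of (B) is immediate, since $Q'\subseteq Q''$ and, by the same merging, the $cl_r$-steps of $Q''$ are simulated up to $\simeq$ inside $Q'$.

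The main obstacle is exactly the merging step of part (B), and this is where non-recursivity is indispensable. When $cl_v=cl_r$ the body atoms produced by firing $cl_r$ may themselves be consumed by a further copy of $cl_r$, so the complete-and-merge operation is not well founded and a single $cl'_r$-step no longer faithfully represents an arbitrary interleaving of $cl_r$-applications; this is the situation of Example~\ref{ex:cofluence-problem}, where self-unfolding of $r_3$ changes the number of $s$-constraints added per step from one to two and thereby creates a parity invariant separating two otherwise joinable states. I would therefore argue that, under the hypothesis $cl_v\neq cl_r$ for every $(cl_v,(i_1,\dots,i_n))\in U^{+}_Q(cl_r)$, the completing $cl_v$-application never reintroduces a fresh $cl_r$-redex on the atoms it consumes, so the merging terminates and the branching structure of the two divergent derivations is preserved under the simulation. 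Carefully verifying this interference-freeness, together with the identifier- and token-store bookkeeping carried by Propositions~\ref{propcambioequality} and~\ref{prop:servequality}, is the technically delicate core of the proof.
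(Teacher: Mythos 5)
Your proposal is correct and follows essentially the same route as the paper's proof: the same reduction via Lemma~\ref{lemma:intermequiv} and induction along the NRU-sequence, the same simulation machinery (Propositions~\ref{prop:servequality}, \ref{lemma:servcomplete}, \ref{propcambioequality} and~\ref{propavanza}) transported through $\equiv_V$ by Lemma~\ref{lem:relrel}, and the same complete-and-merge argument whose termination is exactly what non-recursiveness guarantees. The only difference is organizational: you split the single replacement step explicitly into the two stages $Q \leftrightarrow Q\cup Unf_Q(cl_r)$ and $Q\cup Unf_Q(cl_r) \leftrightarrow Q'$, whereas the paper folds both into one contradiction argument with an induction on the number of dangling applications of $cl_r$; the technical content is identical.
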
\setcounter{theorem}{3}
\begin{proof}
\comment{We prove only that if $P$ is normally terminating and confluent, then $P_n$ is confluent too.
The proof of the converse is similar and hence it is omitted.}
By Lemma~\ref{lemma:intermequiv}, we have that $P$ is confluent if and only if $Ann(P)$ is confluent.
Moreover, by Proposition~\ref{prop:servequality}
Now, we prove that
for each $i=0, \ldots,n-1$, $P_i$ is confluent if and only if $P_{i+1} = (P_i \setminus   \{cl^i \} ) \, \cup  \,Unf_{P_i}(cl^i)$ is confluent too.
Then, the proof follows by a straightforward inductive argument.
\begin{itemize}

  \item Assume that $P_i$ is confluent and let us assume by contrary that $P_{i+1}$ does not satisfies confluence.
  By definition,  there exists a state $\sigma = \langle (  K,D),C,
T\rangle_o$ and two derivations $\sigma\rrarrow_{\omega'_t}^{*} \sigma_1$ and $\sigma\rrarrow_{\omega'_t}^{*}\sigma_2$ in $P_{i+1}$ such that
there are no two derivations $\sigma_1 \mapsto^{*}\sigma_1'$ and $\sigma_2\mapsto^{*}\sigma_2'$ in  $P_{i+1}$ where
$\sigma_1'\equiv_{Fv(\sigma)}\,\sigma_2'$. Without loss of generality, we can assume that $\sigma_1$ and $\sigma_2$ are built-in free states.  Therefore, by Proposition~\ref{prop:servequality}, there exist
two derivations $\sigma\rrarrow_{\omega'_t}^{*} \sigma_3$ and $\sigma\rrarrow_{\omega'_t}^{*}\sigma_4$ in $P_{i}$, such that $\sigma_1\simeq \sigma_3$ and $\sigma_2\simeq \sigma_4$. Moreover, since $P_i$ is confluent, there exist two derivations $\delta=\sigma_3\rrarrow_{\omega'_t}^{*} \sigma'_3$ and $\delta'=\sigma_4\rrarrow_{\omega'_t}^{*}\sigma'_4$ in $P_{i}$ such that $\sigma'_3 \equiv_{Fv(\sigma)} \sigma'_4$. Moreover, without loss of generality, we can assume that $\sigma'_3$ and $\sigma'_4$ are built-in free.
Analogously to Theorem~\ref{theo:n1completeness}, since by hypothesis
$cl^i$ can be safely replaced
in $P_i$ and by using Proposition~\ref{lemma:servcomplete}, we can counstruct two new derivations
$\gamma=\sigma_1\rrarrow_{\omega'_t}^{*} \sigma_5$ and $\gamma'=\sigma_2\rrarrow_{\omega'_t}^{*}\sigma_6$
in $P_{i} \, \cup  \,Unf_{P_i}(cl^i)$ such that
$\sigma_5$ and $\sigma_6$ are built-in free,
$\sigma_5 \simeq  \sigma'_3$, $\sigma_6 \simeq  \sigma'_4$ and such that if $\gamma$ and $\gamma'$ use the clause $cl^i$, then no atoms introduced (in the obvious sense) by $cl_i$ is rewritten by using (at least) one rule in $P_{i} \, \cup  \,Unf_{P_i}(cl^i)$.
Moreover, by hypothesis and by Lemma~\ref{lem:relrel}, $\sigma_5 \equiv_{Fv(\sigma)} \sigma_6$.

Let $l$ the number of the {\bf Apply'} transition steps in $\delta$ and $\delta'$,
which use the rule $cl^i$ and whose body is not rewritten by using (at least) one rule in $P_{i} \, \cup  \,Unf_{P_i}(cl^i)$.
The proof is by induction on $l$.
\begin{description}
  \item[($l=0$)] In this case, $\gamma=\sigma_1\rrarrow_{\omega'_t}^{*} \sigma_5$ and $\gamma'=\sigma_2\rrarrow_{\omega'_t}^{*}\sigma_6$ are derivations in $P_{i+1}$. By hypothesis  and by Lemma~\ref{lem:relrel}, we have that  $\sigma_5 \equiv_{Fv(\sigma)} \sigma_6$ and then we have a contradiction.

\item[($l>0$)] Let us consider the last {\bf Apply'} transition step in $\gamma$ and $\gamma'$, which use (a renamed version of) the rule $cl^i=r_i@H_1\backslash H_2 \Leftrightarrow D\,|\,  K, C; T$, whose body is not rewritten by using (at least) one rule in $P_i \, \cup  \,Unf_{P_i}(cl^i)$ and where $C$ is the conjunction of all the built-in constraints in the body of $cl^i$. Without loss of generality, we can assume that such an {\bf Apply'} transition step is in $\gamma$.
    Now, we have two possibilities

    \begin{itemize}
      \item $\sigma_5$ is a failed configuration. By definition of $\equiv_{Fv(\sigma)}$, we have that $\sigma$ is also a failed configuration. In this case, it is easy to check that, by using Lemma~\ref{lemma:servcomplete}, we can substitute each {\bf Apply'} transition steps in $\delta$ and $\delta'$, which use the rule $cl^i$ and whose body is not rewritten by using (at least) one rule $P_i$, with an {\bf Apply'} transition step which uses a rule in $Unf_{P_i}(cl^i)\subseteq P_{i+1}$. Then, analogously to the case {\bf ($l=0$)}, it is easy to check that there exist the derivations $\gamma_1=\sigma_1\rrarrow_{\omega'_t}^{*} \sigma'_5$ and $\gamma_1'=\sigma_2\rrarrow_{\omega'_t}^{*}\sigma'_6$ in $P_{i+1}$ such that $ \sigma^f_3$ and $\sigma^f_4$ are both failed configurations and then we have a contradiction.

\item  $\sigma_5$ is not a failed configuration. Then $\sigma_5$ is of the form $\langle S_5,C_5, T_5\rangle_{n_5}$, where $chr(K) \subseteq chr(S_5)$.
    Moreover, since $cl^i$ can be non-recursively safely replaced in $P_i$, there exists a clause $cl_v$ in  $P_i \setminus \{cl^i\}$ such that
    $cl^i$ can be unfolded by using $cl_v$. Therefore, by definition of non-recursive safe unfolding,
    there exists a new derivation $\gamma_1=\sigma_1\rrarrow_{\omega'_t}^{*} \sigma_5\rrarrow_{\omega'_t}^{*}\sigma'_5$, where $\sigma'_5$ is obtained from $\sigma_5$  first by an
     {\bf Apply'} transition step, which uses the rule $cl_v$ and rewrites atoms in the body of $cl^i$ and then some {\bf Solve'} transition steps.
     By definition of $\equiv_{Fv(\sigma)}$ and since $\sigma_5 \equiv_{Fv(\sigma)} \sigma_6$, we have that there exists also a new derivation
     $\gamma_1'=\sigma_2\rrarrow_{\omega'_t}^{*}\sigma_6\rrarrow_{\omega'_t}^{*}\sigma'_6$, where $\sigma'_5$ is obtained from $\sigma_5$  first by an
     {\bf Apply'} transition step, which uses the rule $cl_v$ and rewrites atoms in the body of $cl^i$ and then some {\bf Solve'} transition steps.

 Since by hypothesis $\sigma_5 \equiv_{Fv(\sigma)} \sigma_6$, we have that $\sigma'_5 \equiv_{Fv(\sigma)} \sigma'_6$. Moreover the number of the {\bf Apply'} transition steps in $\delta_2$ and $\delta'_2$, which use the rule $cl^i$ whose body is not rewritten by using (at least) one rule in $P_i$
     is strictly less than $l$ and then the thesis.
     \end{itemize}
   \end{description}
    \item Assume that $P_{i+1}$ is confluent and let us assume by contrary that $P_i$  does not satisfies confluence. The proof is analogous to the previous case and hence it is omitted.
\end{itemize}
\end{proof}

\subsection{Weak safe rule replacement}

Finally, we provide the proof of Proposition~\ref{prop:wterm}. We first need of the following lemma, which provides an alternative characterization of confluence for normally terminating programs.

\begin{lemma}\label{conflnormterm}
Let $P$ be a CHR [annotated] normally terminating program. $P$ is confluent if and only if for each pair of normal derivations $\sigma\mapsto^{*} \sigma_1^f\not \mapsto^{*} $ and $\sigma\mapsto^{*} \sigma_2^f\not\mapsto^{*} $, we have that
$\sigma_1^f \equiv_{Fv(\sigma)}\sigma_2^f$.
\end{lemma}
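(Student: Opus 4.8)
The plan is to prove the two implications separately, with the forward direction being essentially immediate and the converse resting on a normalization argument.

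For the forward direction (confluence implies the stated condition), I would fix normal derivations $\sigma\mapsto^{*}\sigma_1^f\not\mapsto$ and $\sigma\mapsto^{*}\sigma_2^f\not\mapsto$. Since every normal derivation is in particular a derivation, Definition~\ref{def:Conf} applies to the two reductions $\sigma\mapsto^{*}\sigma_1^f$ and $\sigma\mapsto^{*}\sigma_2^f$, producing configurations $\sigma_f'$, $\sigma_f''$ with $\sigma_1^f\mapsto^{*}\sigma_f'$, $\sigma_2^f\mapsto^{*}\sigma_f''$ and $\sigma_f'\equiv_{Fv(\sigma)}\sigma_f''$. As $\sigma_1^f$ and $\sigma_2^f$ admit no further transition, both joining reductions are empty, so $\sigma_f'=\sigma_1^f$ and $\sigma_f''=\sigma_2^f$, whence $\sigma_1^f\equiv_{Fv(\sigma)}\sigma_2^f$. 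Note this direction uses neither normal termination nor the hypothesis on $\sigma$.

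For the converse I would fix $\sigma\mapsto^{*}\sigma_1$ and $\sigma\mapsto^{*}\sigma_2$ and first observe that from any configuration one can construct a normal derivation (solve every pending built-in before each rule application), which by normal termination is finite and ends in a final configuration. Extending $\sigma_1$ and $\sigma_2$ this way gives normal continuations $\sigma_1\mapsto^{*}\sigma_1^f\not\mapsto$ and $\sigma_2\mapsto^{*}\sigma_2^f\not\mapsto$; these are the candidate joining reductions, so it remains only to show $\sigma_1^f\equiv_{Fv(\sigma)}\sigma_2^f$. The difficulty is that the composite derivations $\sigma\mapsto^{*}\sigma_i\mapsto^{*}\sigma_i^f$ end in final configurations but need not be normal, since their prefixes are arbitrary, so the hypothesis cannot be invoked on them directly. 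The key step, and the main obstacle, is thus a normalization lemma: every derivation $\sigma\mapsto^{*}\tau^f$ into a final configuration can be replaced by a \emph{normal} derivation $\sigma\mapsto^{*}\bar\tau^f$ with $\tau^f\equiv_{Fv(\sigma)}\bar\tau^f$. I would prove it by an exchange argument, whose two ingredients are that a \textbf{Solve}/\textbf{Solve'} step merely moves a built-in between the goal store and the built-in store and hence links $\equiv_V$-equivalent configurations (the first clause of Definitions~\ref{def:naltraeq} and~\ref{def:n1altraeq}), and that $\equiv_V$ is a bisimulation for $\mapsto$, i.e. equivalent configurations enable the same rules with $\equiv_V$-equivalent results, which is precisely the intended reading of $\equiv_V$. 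With these, an \textbf{Apply}/\textbf{Apply'} performed on a configuration that is not built-in free can be permuted past the outstanding built-ins: solving them first only strengthens the built-in store, which still entails the relevant guard, and the two outcomes are $\equiv_V$-equivalent. Repeatedly pushing \textbf{Solve} steps in front of the offending rule applications of the finite derivation (a bubble-sort on its finitely many steps, replaying the later steps on $\equiv_V$-equivalent configurations via the bisimulation property) terminates and yields the desired normal derivation with an $\equiv_{Fv(\sigma)}$-equivalent endpoint.

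Granting the normalization lemma, the converse closes quickly: normalizing $\sigma\mapsto^{*}\sigma_i^f$ produces normal derivations $\sigma\mapsto^{*}\bar\sigma_i^f$ with $\sigma_i^f\equiv_{Fv(\sigma)}\bar\sigma_i^f$; the hypothesis applied to the pair $\bar\sigma_1^f,\bar\sigma_2^f$ gives $\bar\sigma_1^f\equiv_{Fv(\sigma)}\bar\sigma_2^f$; and transitivity of $\equiv_{Fv(\sigma)}$ yields $\sigma_1^f\equiv_{Fv(\sigma)}\sigma_2^f$, establishing Definition~\ref{def:Conf}. I expect the delicate points to lie entirely in the normalization lemma, specifically in verifying that $\equiv_V$ is genuinely a bisimulation, which requires handling the fresh variables introduced by \textbf{Apply'} through the renaming clauses of $\equiv_V$ and observing that all failed configurations are $\equiv_V$-equivalent, and in checking that the exchange process terminates.
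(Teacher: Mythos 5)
Your proposal is correct and follows essentially the same route as the paper: the forward direction is immediate because the endpoints admit no further transitions, and the converse extends $\sigma_1,\sigma_2$ by normal continuations (using normal termination) and then replaces the composite derivations by normal ones with equivalent final configurations. The only differences are presentational --- the paper argues the converse by contradiction and dismisses the normalization step with ``it is easy to check'' (relating the normalized endpoints via $\simeq$, which implies $\equiv_V$), whereas you argue directly and actually sketch the Solve/Apply exchange argument that justifies it.
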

\begin{proof}
\begin{description}
  \item[(Only if)] The proof is straightforward by definition of confluence.
  \item[(If)] The proof is by contradiction. Assume that $P$ is not confluent. Then, there exists a state $\sigma$ such that $\sigma\mapsto^{*} \sigma_1$ and $\sigma\mapsto^{*} \sigma_2$ and for each pair of states $\sigma_f'$
and $\sigma_f''$ such that $\sigma_1 \mapsto^{*}\sigma_f'$ and $\sigma_2\mapsto^{*}\sigma_f''$, we have that $\sigma_f'\not \equiv_{Fv(\sigma)}\,\sigma_f''$. In particular, since $P$ is normally terminating, we have that there exists $\sigma_f'$ and $\sigma_f''$ such that
$\sigma_1 \mapsto^{*}\sigma_f'\not \mapsto^{*}$, $\sigma_2\mapsto^{*}\sigma_f''\not \mapsto^{*}$ and $\sigma_f'\not \equiv_{Fv(\sigma)}\,\sigma_f''$.
Then, it is easy to check that there exist two normal derivation
$\sigma \mapsto^{*}\sigma_1'\not \mapsto^{*}$ and $\sigma\mapsto^{*}\sigma_2'\not \mapsto^{*}$ such that
$\sigma_f'\simeq \sigma_1'$ and $\sigma_f''\simeq \sigma'_2$. Since $\sigma_f'\not \equiv_{Fv(\sigma)}\,\sigma_f''$, by definition of $\simeq$, we have that $\sigma_1'\not \equiv_{Fv(\sigma)}\,\sigma_2'$ and then we have a contradiction.
\end{description}
\end{proof}

Then, we have the desired result.

\setcounter{proposition}{3}
\begin{proposition}
Let $P$ be an annotated CHR program and let $cl \in P$ such that $cl $
can be weakly safely replaced (by its unfolded version) in $P$. Moreover let
\[P'  =  (P\setminus   \{cl\} ) \, \cup \,
   Unf_{P}(cl).\]
If $P$ is normally terminating then $P'$ is normally terminating. Moreover, if $P$ is normally terminating and confluent then $P'$ is confluent too.
\end{proposition}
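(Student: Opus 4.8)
The plan is to prove the statement first for the larger program $P'' = P \cup Unf_P(cl)$, and then transfer both properties to $P' = P'' \setminus \{cl\}$. Throughout I write $cl = r@H_1\backslash H_2 \Leftrightarrow D\,|\, A; T$, and I let $cl'$ denote the rule in $Unf_P(cl)$ furnished by Definition~\ref{def:wsafedel}, so that $cl'$ has the same head $H_1\backslash H_2$ as $cl$ and a guard $D'$ with $\mathcal{CT}\models D\leftrightarrow D'$. Note that $cl' \in P'$ and that $P'' = P' \cup \{cl\}$.

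For normal termination of $P''$ I would argue by contradiction. Suppose $P''$ admits an infinite normal derivation, and segment it into blocks of the form ``one \textbf{Apply'} step followed by the \textbf{Solve'} steps that restore a built-in free configuration''. Every block that fires an unfolded rule from $Unf_P(cl)$ I replace by the derivation produced in Proposition~\ref{prop:servequality}, which uses only $cl$ and the corresponding $cl_v$ and ends in a $\simeq$-equivalent built-in free configuration; blocks firing a rule of $P$ are kept verbatim. Since $\simeq$-equivalent configurations enable the same transitions with $\simeq$-equivalent outcomes, these replacements can be chained by a straightforward induction, yielding an infinite normal derivation in $P$ (each block contributes at least one \textbf{Apply'} step, so infiniteness is preserved), contradicting the normal termination of $P$. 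Because $P' \subseteq P''$, every infinite normal $P'$-derivation is an infinite normal $P''$-derivation, so $P'$ is normally terminating as well; this already settles the first claim.

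For confluence of $P''$, having established that $P''$ is normally terminating, I would use the characterization in Lemma~\ref{conflnormterm}. Given two normal terminating $P''$-derivations from a state $\sigma$ ending in final configurations $\sigma_1^f, \sigma_2^f$, I decompose them into $P$-derivations exactly as above, obtaining normal terminating $P$-derivations $\sigma \mapsto^* \tilde\sigma_i^f$ with $\sigma_i^f \simeq \tilde\sigma_i^f$; since $\sigma_i^f$ is $P''$-final and $P \subseteq P''$, its $\simeq$-image $\tilde\sigma_i^f$ is $P$-final. Confluence of $P$ together with Lemma~\ref{conflnormterm} gives $\tilde\sigma_1^f \equiv_{Fv(\sigma)} \tilde\sigma_2^f$, and Lemma~\ref{lem:relrel} transports this through the two $\simeq$-equivalences to $\sigma_1^f \equiv_{Fv(\sigma)}\sigma_2^f$; Lemma~\ref{conflnormterm} then yields confluence of $P''$.

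Finally, for confluence of $P'$ I would take two normal terminating $P'$-derivations from $\sigma$ ending in $P'$-final configurations $\sigma_1^f,\sigma_2^f$. The crucial observation is that a $P'$-final configuration is in fact $P''$-final: no rule of $P'$ applies, and in particular $cl'$ does not apply; since $cl'$ and $cl$ share the same head and have logically equivalent guards, their \textbf{Apply'} applicability conditions coincide, so $cl$ does not apply either, and hence no rule of $P'' = P' \cup \{cl\}$ applies. Thus the two derivations are normal terminating $P''$-derivations, and confluence of $P''$ together with Lemma~\ref{conflnormterm} gives $\sigma_1^f \equiv_{Fv(\sigma)}\sigma_2^f$, whence $P'$ is confluent by Lemma~\ref{conflnormterm}. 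The main obstacle is precisely this last step: verifying that equal heads and equivalent guards make the \textbf{Apply'} conditions of $cl$ and $cl'$ coincide, so that $P'$-finals are $P''$-finals. This is exactly where the weak-safe-replacement hypothesis $\mathcal{CT}\models D\leftrightarrow D'$ is genuinely used, and the verification requires a careful treatment of the existential closure over rule variables and of the token condition, both of which agree because $cl$ and $cl'$ carry the same name $r$ and the same head.
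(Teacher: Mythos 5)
Your proposal is correct and follows essentially the same route as the paper: both pass through $P''=P\cup Unf_P(cl)$, use Proposition~\ref{prop:servequality} to transfer normal termination and (via Lemma~\ref{conflnormterm} and Lemma~\ref{lem:relrel}) confluence from $P$ to $P''$, and then descend to $P'$ by observing that $P'\subseteq P''$ and that applicability of $cl$ entails applicability of the equivalent-guard, same-head rule $cl'\in Unf_P(cl)$. The only difference is presentational — you argue the last step directly ($P'$-final implies $P''$-final) where the paper argues by contradiction — and your explicit flagging of the token-store and existential-closure check is a point the paper glosses over.
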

\setcounter{proposition}{7}

\begin{proof}
First, we prove that if $P$ is normally terminating then  $P''$ is normally terminating too, where
\[P''= P \, \cup
   Unf_{P}(cl). \]
   Then, we prove that if $P''$ is normally terminating then $P'$ is normally terminating.
    Analogously if $P$ is normally terminating and confluent and then the thesis.
\begin{itemize}
   \item Assume that $P$ is normally terminating.
  The proof of the normal termination of $P''$ follows by Proposition~\ref{prop:servequality}.

   \item Now, assume that $P$ is normally terminating and confluent and by the contrary that $P''$ does not satisfy confluence.

       By Lemma~\ref{conflnormterm} and since by the previous result $P''$ is normally terminating, there exist a state
$\sigma$ and two normal derivations
\[\sigma\rrarrow_{\omega'_t}^{*}\sigma'_f \not \rrarrow_{\omega'_t} \mbox{ and } \sigma\rrarrow_{\omega'_t}^{*}\sigma''_f\not \rrarrow_{\omega'_t}
\] in $P''$
such that
$\sigma'_f \not \equiv_{Fv(\sigma)}\sigma''_f$.

Then, by using arguments similar to that given in Proposition~\ref{prop:servequality} and since $P\subseteq P''$, we have that there exist two normal derivations
\[\sigma\rrarrow_{\omega'_t}^{*}\sigma_1^f \not \rrarrow_{\omega'_t}\mbox{ and } \sigma'\rrarrow_{\omega'_t}^{*}\sigma_2^f
\not \rrarrow_{\omega'_t}\]
in $P$, where
 $\sigma'_f \simeq \sigma_1^f$ and $\sigma_f'' \simeq \sigma_2^f$. Since by hypothesis $P$ is confluent, we have that $\sigma_1^f \equiv_{Fv(\sigma)}\sigma_2^f$.
  Therefore, by Lemma~\ref{lem:relrel} we have a contradiction to the assumption that there exist two states $\sigma'_f$ and $\sigma''_f$ as previously defined.
 \end{itemize}

 \noindent Now, we prove that if $P''$ is normally terminating then $P'$ is normally terminating. Moreover we prove  that if $P''$ is normally terminating and confluent then $P'$ is confluent too and then the thesis.
 \begin{itemize}
   \item If $P''$ is normally terminating then, since $P'\subseteq P''$, we have that
  $P'$ is normally terminating too.
     \item Now, assume that $P''$ is normally terminating and confluent and  by the contrary that $P'$ does not satisfy confluence. Moreover, assume that $cl$ is of the form $r@H_1\backslash H_2 \Leftrightarrow D\,|\,  A; T$. By Lemma~\ref{conflnormterm} and since by the previous result $P'$ is normally terminating, there exist a state
$\sigma$ and two normal derivations
\[\sigma\rrarrow_{\omega'_t}^{*}\sigma_1^f\not \rrarrow_{\omega'_t} \mbox{ and } \sigma\rrarrow_{\omega'_t}^{*}\sigma_2^f\not \rrarrow_{\omega'_t}
\]
 in $P'$ such that
$\sigma_1^f \not \equiv_{Fv(\sigma)}\sigma_2^f$.

Since $P' \subseteq P''$, we have that there exist two normal derivations
\[\sigma\rrarrow_{\omega_t}^{*}\sigma_1^f \mbox{ and } \sigma'\rrarrow_{\omega_t}^{*}\sigma_2^f\]
in $P''$. Then, since $P''$ is confluent and $P''= P '\cup \{cl\}$ there exists $i \in [1,2]$ such that
$\sigma_i^f\rrarrow_{\omega'_t}\sigma'$ in $P''$ by using the rule $cl \in (P'' \setminus P')$. In this case, by definition of weak safe replacement, there exists an unfolded rule $cl' \in Unf_P (cl)$  such that $cl'$ is of the form
$$r@ H_1\backslash H_2 \Leftrightarrow D'\, |\,   A'; T'
   $$
with $\mathcal{CT}  \models D \leftrightarrow D'$ and by construction $cl' \in P'$. Therefore $\sigma_i^f\rrarrow_{\omega'_t}\sigma''$ in $P'$, by using the rule $cl'$, and then we have a contradiction.
 \end{itemize}
\end{proof}

\end{document}